\documentclass[11pt]{elsarticle}
\usepackage{amssymb,amsmath}
\usepackage{graphicx}
\usepackage{appendix} 
\usepackage{algorithm}
\usepackage{algpseudocode} 
\usepackage{comment}
\usepackage{tikz}
\usepackage{rotating} 
\usepackage{multirow}
\usepackage{booktabs}
\usetikzlibrary{shapes.geometric}
\usetikzlibrary{shapes,arrows}
\usepackage{lineno}
\usepackage[colorlinks]{hyperref}
\usepackage{amsthm}
\usepackage{bm}
\newtheorem{thm}{Theorem}
\newtheorem{lem}[thm]{Lemma}

\newtheorem{defin}[thm]{Definition}
\theoremstyle{remark}
\newtheorem{rem}{Remark}[section]

\date{\today}
\journal{Journal of Computational Physics}
\begin{document}
\begin{frontmatter}

\title{Phase Transitions, Non-Extremality (Reconstruction), and Markov Entropy Rate for the Mixed Spin-$(s,\tfrac12)$ Ising Model on a Cayley Tree of Order Three}
\author[Harran]{Hasan Ak\i n \corref{cor1}}
\ead{akinhasan@harran.edu.tr;
akinhasan25@gmail.com}\fntext[fn1]{ORCID: \href{https://orcid.org/0000-0001-6447-4035}{0000-0001-6447-4035}}
\address[Harran]{Department of Mathematics, Faculty of Arts and Sciences, Harran
University, TR63050, Sanliurfa, Turkey}

\date{\today}
\begin{abstract}
We investigate the mixed spin-$(s,\tfrac12)$ Ising model on a Cayley tree of order three ($k=3$), extending the approach of \cite{Akin2024}. For the representative case $s=5$, the associated recursion leads to an 11-dimensional dynamical system, and phase-transition regions are examined via the local stability of the disordered (symmetric) fixed point, detected through the condition $|\lambda_{\max}|\ge 1$ for the Jacobian matrix. To study extremality (non-reconstruction) of the disordered phase, we represent translation-invariant splitting Gibbs measures by tree-indexed Markov chains and compute the relevant Dobrushin coefficients. At the symmetric fixed point we obtain explicit transition kernels and the induced two-step kernel on the spin-$\tfrac12$ layer; its second eigenvalue $\lambda_2$ yields a spectral reconstruction test consistent with the Kesten--Stigum condition $3|\lambda_2|^2>1$. In addition, we introduce the Markov entropy rate as a computable thermodynamic/information-theoretic observable and derive closed-form expressions for this entropy rate at the symmetric fixed point for an arbitrary spin $s$. Numerical illustrations for $s=1,2,\ldots,5$ are provided to compare the entropy-rate behavior with the spectral criteria. Our results connect naturally to themes in information theory and statistical physics and are also relevant to reconstruction problems on trees that appear in biology/phylogenetics.
\\[4pt]
\textbf{Keywords:} Cayley tree; mixed-spin Ising model; splitting Gibbs measures; dynamical stability; non-reconstruction; extremality; Kesten--Stigum condition; Dobrushin coefficient; entropy rate.
\\[2pt]
\textbf{MSC (2020)}: Primary 82B20; Secondary 60K35, 60J10, 94A17, 82B26

\end{abstract}

\end{frontmatter}
\tableofcontents

\section{Introduction}\label{sec:intro}

On a rooted $d$-ary tree, the broadcasting process sends a root state down the edges through a fixed irreducible, aperiodic channel $M$ \cite{Mossel2001}. The reconstruction problem asks whether depth-$n$ boundary data still distinguish different root states as $n\to\infty$ (e.g., via non-vanishing total variation or mutual information) \cite{Mossel2001}. While $d|\lambda_2(M)|^2>1$ is a classical sufficient condition, it is not always sharp; reconstruction can hold even when $d|\lambda_2(M)|^2<1$ for some channels \cite{Mossel2001}. Mossel further characterizes reconstructability by the existence of a nontrivial tail $\sigma$-field, that is, long-range memory on the tree \cite{Mossel2001}.

In statistical mechanics \cite{Mossel2004}, information theory \cite{MosselPeres2003},  and communication theory \cite{CoverThomas2006}, processes on a Cayley tree (CT) are often used to model how information (if any) propagates from the root to distant generations. In the standard setting, each vertex receives a noisy copy of its parent’s state, with identical transmission statistics on every edge and independent noise across edges \cite{Cavender1978}, and related formulations also appear in statistical physics \cite{Preston1974,Spitzer1975}. The central question is whether the configuration observed at level $n$ retains non-negligible information about the root as $n$ increases. This reconstruction-type problem arises in biology \cite{Muffato2023}, information theory \cite{Mossel2001}, and statistical physics. In our study, when computing the entropy rate, we quantify information transfer from the grandparent to the grandchild by incorporating the intermediate generation \cite{Reza1994}, rather than measuring the flow only between consecutive generations.

The general framework of Gibbs measures and phase transitions on trees and lattices
is discussed in detail in Georgii~\cite{Georgii1988} and Rozikov~\cite{Rozikov2013}, 
while the entropy rate of stationary Markov chains follows the classical
information-theoretic treatment of Cover and Thomas~\cite{CoverThomas2006}.
For related results on Ising models on tree-like graphs and reconstruction
phenomena; see, for example, Dembo and Montanari~\cite{DemboMontanari2010}.

The single-spin Ising model has been extensively investigated since the 1920s,
particularly in the context of statistical mechanics. Its analysis becomes more
tractable on the CT than on the cubic lattice, owing to the hierarchical
structure of the tree. This setting has enabled detailed studies not only of phase
transitions but also of thermodynamic functions; see, for instance,
\cite{Gandolfo-Haydar-2013,Gandolfo-Rahmet-2013}. Subsequent developments introduced
mixed-spin Ising models, in which different spin types coexist within a single
configuration. Such models have been analysed both on cubic lattices and on Bethe
lattices and CTs. In recent years, particular attention has been paid
to $(s,(2t-1)/2)$ mixed-spin Ising models on CTs~\cite{EAP16,EBM18}, where $s$ and
$t$ are positive integers. These studies show that mixed-spin systems exhibit
distinctive properties and potential applications that do not arise in
single-spin variants.

Mixed-spin Ising models have thus attracted considerable interest across several
areas. For example, Stre\v{c}ka~\cite{strecka2006} investigates the mixed
spin-$(\tfrac12,S)$ Ising model on the Union Jack lattice (a centered square lattice)
via mapping to the eight-vertex model. A key result of that work concerns
the weak-universal critical exponents along the bicritical line: these exponents
are enhanced by a factor of two for systems with integer spin-$S$ atoms compared
to systems with half-odd-integer-spin-$S$ atoms. In~\cite{Akin2024}, we constructed
translation-invariant splitting Gibbs measures (TISGMs) for a mixed-spin model on a
CT of order two. It was shown there that increasing the weight of the
$s$-spin value enlarges the basin of repulsion of the fixed point
$\ell_0^{(s)}$, which characterises the TISGM, thereby broadening the phase transition
region. In~\cite{akin2024-CJP}, the cavity method was employed to study the phase
transition behaviour of the $(2,\tfrac12)$ mixed-spin Ising model on a third-order
CT via stability analysis of the associated dynamical system.

In our earlier work~\cite{Akin-Muk-24JSTAT}, we carried out a stability analysis of
the $(1,\tfrac12)$ mixed-spin Ising model on a CT of arbitrary order $k$.
It was shown in~\cite{Akin-Muk-24JSTAT} that, for this model, the parameter region
in which the fixed point
\[
\left(
\left(\frac{\theta^2+1}{2\theta}\right)^k,
\left(\frac{\theta^2+1}{2\theta}\right)^k,
1
\right)
\]
is repelling increases as the tree order $k$ increases. Furthermore, in~\cite{Akin2024}
we determined the phase transition regions for the $(s,\tfrac12)$ mixed-spin Ising
model with arbitrary spin $s$ and $k=2$ by analyzing the corresponding Gibbs measures.

Phase transitions in spin models on the CT depend sensitively on both the interaction regime and underlying spin structure. Single-spin Potts models typically exhibit phase transitions only in the ferromagnetic region \cite{Rozikov2013,Akin-Ulusoy-2023,Akin-Ulusoy-2022}, whereas we recently showed that a modified $q$-state Potts model undergoes a transition exclusively in the antiferromagnetic regime \cite{Akin2025Chaos}. In contrast, mixed-spin Ising models display richer behavior, with phase transitions occurring in both ferromagnetic and antiferromagnetic regimes \cite{Akin2024,Akin-Muk-24JSTAT,Akin2023Chaos,Has-Far-2021}. Despite substantial progress in identifying phase boundaries, a unified understanding that links thermodynamic disorder with the probabilistic structure of these recursive lattice models remains incomplete \cite{Külske2009a}.

Beyond the mere existence of phase transitions, a central question on trees is the
\emph{extremality} of Gibbs measures, in particular for the disordered (free boundary)
phase \cite{ioffe1996}. Extremality is closely related to the decay of correlations, the
information-theoretic reconstruction problem on trees \cite{MartinelliSinclairWeitz2004}, and to the entropy rate of the
underlying the Markov chain representation. In our setting, the disordered phase is
described by a TISGMs measure whose structure is encoded
by a finite-state Markov chain along each ray. The spectral properties of
its transition matrix---and in particular the second largest eigenvalue---simultaneously
govern (i) the extremal/non-extremal nature of the free boundary measure,
(ii) the reconstruction threshold, and (iii) the entropy rate of the induced Markov
process. This provides a natural bridge between thermodynamic quantities such as the
specific entropy, and probabilistic notions such as extremality. 

The question of whether Gibbs measures associated with lattice models on Cayley-type lattices are extremal or non-extremal has been a long-standing subject of study \cite{Mossel2001,MosselPeres2003,KuelskeRozikov2017,Martin2003,MartinelliSinclairWeitz2007,
Moraal1978,RR-2021,Rozikov2018}. In \cite{KuelskeRozikov2015}, Külske and Rozikov examine whether these states, whenever they exist, are extremal or non-extremal within the set of all Gibbs measures corresponding to the solid-on-solid model with spins $0,1,2$ on the CT, as the coupling strength varies. 

Recent investigations into the extremality of disordered phases in mixed-spin Ising \cite{Akin2024,akin2024-CJP,Akin-Muk-24JSTAT,Akin2025Chaos,Akin2023Chaos,Has-Far-2021} and asymmetric Potts models \cite{Ak-Mukh-2025-PS,Qawasmeh-Ak-Mukh-2025-PS} have revealed that, unlike single-spin systems, these models typically require the analysis of multiple stochastic matrices. Consequently, a comprehensive assessment now necessitates both the evaluation of the Kesten-Stigum condition and the calculation of the Dobrushin coefficients. This missing link is naturally framed by the \textbf{Reconstruction Problem}: whether one can recover the root's original spin state from observations taken far out at the leaves. On trees, this question aligns exactly with \textbf{extremality/uniqueness} of the limiting Gibbs measure: reconstruction is possible precisely when the limiting Gibbs measure is non-extremal and impossible when it is extremal.

The analysis of entropy in lattice models has evolved through various formalisms, from thermodynamic derivations on the Bethe lattice \cite{Seino1992} to topological and measure-theoretic explorations of tree structures \cite{Petersen-2020}. While recent studies have established the existence of relative entropy density for generalized Gibbs measures \cite{Kuelske2004} and utilized entropy rates to quantify stochastic complexity \cite{Feutrill2021}, conditional entropies have emerged as a robust tool for measuring disorder, where global mean-field measures prove insufficient \cite{Brandani2013}. In our previous research \cite{Akin2025IJMPB}, we investigated the $q$-state Akin model through a classical thermodynamic lens and calculated the entropy via the temperature derivative of the free energy. However, the mixed-spin $(s, 1/2)$ architecture introduces structural asymmetries that require a more nuanced information-theoretic perspective. Unlike single-spin systems, this model necessitates the evaluation of multiple stochastic matrices to define the states of the system. By integrating the Kesten-Stigum and Dobrushin criteria within an entropy-rate framework, we rigorously quantify the transition between paramagnetic disorder and extremal order. This dual analytical approach allows us to pinpoint the boundaries of information preservation that traditional thermodynamic derivatives may overlook.

Rahmatullaev and Burxonova \cite{Rahmatullaev2025FreeEnergyEntropy} investigated the free energy and entropy of constructive Gibbs measures for the Ising model on a CT tree of order three using the Kolmogorov consistency condition. Their study provided a detailed analytical framework for understanding the thermodynamic properties of a hierarchical lattice structure. While our previous research focused on the thermodynamic entropy derived from the free energy function relative to temperature \cite{Akin2022,Akin2022CJP,Akin23CJP}, utilizing cavity methods, the present study shifts the analytical framework toward an information-theoretic approach (Shannon/relative entropy) for entropy calculation. Our contribution is to close this gap by explicitly connecting the \textbf{Markov entropy rate} \( h_{\mathrm{MC}}^{(s)}(\varphi) \)---a stable, long-range measure of disorder---to the \textbf{Dobrushin condition}, the classical criterion controlling extremality in the disordered regime. Using a \textbf{two-stage Markov chain} representation, we compute \( h_{\mathrm{MC}}^{(s)}(\varphi) \) as a joint function of the spin magnitude \( s \) and thermal parameter \( \varphi \). This yields a single framework where the emergence of non-extremality (equivalently, reconstruction)---the point at which the system retains a 'memory' of boundary information---can be read off quantitatively via the Kesten-Stigum threshold and its temperature dependence. In this context, Külske and Formentin \cite{Külske2009a} studied the reconstruction problem for Markov chains on trees, introducing a convex entropy criterion to determine the boundaries of non-reconstructibility.

Thus, the entropy rate disorder and extremality-based uniqueness become two complementary lenses on the same phenomenon, providing a dual quantitative view of the chaoticity and phase stability of the model. In the present work, we focus on the phase transition region and extremality properties
of the $(s,\tfrac12)$ mixed-spin Ising model on a CT of order $k=3$. Our approach combines a stability analysis of the associated dynamical system with a
spectral and entropy analyses of the corresponding Markov chain representation. In this way, we extend our previous results to a regime where both the spin magnitude $s$ and the branching number $k$ are relatively large, and we clarify how the interplay
between these parameters affects the structure of the phase transitions, the extremality of the disordered phase, and the behavior of the entropy. 

The Kesten--Stigum (KS) criterion does not \emph{guarantee} non-extremality; rather, it provides a threshold that separates the regimes of extremality and possible loss of extremality of the disordered Gibbs measure. In particular, non-extremality becomes relevant when the KS bound is exceeded, whereas below the KS threshold, the disordered phase is expected to remain extremal. In our framework, this transition can be tracked through the joint $(s,\varphi)$ dependence of the Markov entropy rate $h_{\mathrm{MC}}^{(s)}(\varphi)$.

The remainder of this paper is structured as follows. In Section~\ref{sec:TIGMs}, we construct TISGMs for arbitrary $k$ utilizing Kolmogorov's consistency condition. Section~\ref{sec:Dynamical-sys} is devoted to the study of phase transition phenomena for the mixed spin-$\mathbf{(5, \tfrac{1}{2})}$ Ising model on a CT of order three via the stability analysis of an 11-dimensional dynamical system. In Section~\ref{sec:Extremality}, we investigate the disordered phase to determine its extremality and non-extremality in the mixed-spin Ising model using the tree-indexed Markov chain. Specifically, Section \ref{subsec:extremality} identifies the region of the extremal disordered phase using the Dobrushin criterion based on the coefficients of the stochastic matrices $\mathbb{P}$ and $\mathbb{Q}$. Subsection~\ref{subsec:non-extremality} determines the region where the disordered phases are nonextremal using the KS approach. Finally, in Section~\ref{sec:entropy}, we compute the Markov entropy rate and thermodynamic disorder—offering a dual perspective—by employing a two-stage Markov chain model.

\section{Preliminaries}\label{sec-Preliminaries}

In this section, we recall the construction of Gibbs measures (GMs) for the
$(s,\tfrac12)$ mixed spin Ising model (MSIM) on an $n$-shell CT of
arbitrary order, and subject to prescribed boundary field conditions (BFCs).

\subsection{Mixed-spin configurations and finite-volume Gibbs measures}
Consider a semi-infinite Cayley tree $\Gamma^k = (V, E)$ of order $k$ with root $x^0$. For $n \ge 1$, let $V_n$ denote the ball of radius $n$ centered at the root, and let $W_n = V_n \setminus V_{n-1}$ represent the $n$th shell (level) of the tree. We partition the vertex set $V$ based on the parity of the distance from the root, as follows: Let
\begin{align*}
\Gamma^+ &= \{x \in V : d(x^0, x) \text{ is even}\} = \bigcup_{m=0}^{\infty} W_{2m}, \\
\Gamma^- &= \{x \in V : d(x^0, x) \text{ is odd}\} = \bigcup_{m=0}^{\infty} W_{2m+1}.
\end{align*}

The Cayley tree is bipartite such that $V = \Gamma^+ \cup \Gamma^-$ and $\Gamma^+ \cap \Gamma^- = \varnothing$. We assign different spin spaces to these partitions: vertices in $\Gamma^+$ carry spins from the set 
\[
\Phi = \{-s, -(s-1), \dots, s-1, s\},
\]
while vertices in $\Gamma^-$ carry spins from 
\[
\Psi = \left\{-\tfrac{1}{2}, \tfrac{1}{2}\right\}.
\]

A mixed-spin configuration $\xi$ on $V$ is defined as
\begin{align}\label{eq:spin-configuration-def}
\xi(x) = 
\begin{cases} 
\sigma(x) \in \Phi, & \text{if } x \in \Gamma^+, \\
\eta(x) \in \Psi, & \text{if } x \in \Gamma^-.
\end{cases} 
\end{align}

The total configuration space is denoted by $\Xi = \Phi^{\Gamma^+} \times \Psi^{\Gamma^-}$. For a finite volume $V_n$, the restricted configuration spaces are given by $\Omega_n^+ = \Phi^{\Gamma^+ \cap V_n}$ and $\Omega_n^- = \Psi^{\Gamma^- \cap V_n}$.

The Hamiltonian of the model on $V$ is given by
\begin{equation}\label{Hamil1}
H(\xi)
=
-\,J \sum_{\langle x,y \rangle \in E} \xi(x)\,\xi(y),
\end{equation}
with coupling constant $J\in\mathbb{R}$. Its restriction to $V_n$ will be denoted
by $H_n$.

For each $n\ge 1$, the finite-volume Gibbs measure $\mu_n^{\mathbf{h}}$ on
$\Xi^{V_n}$ with boundary field $\mathbf{h}$ supported on $W_n$ is defined by
\begin{equation}\label{Gibbs1}
\mu_n^{\mathbf{h}}(\xi_n)
=
\frac{1}{Z_n(\beta,\mathbf{h})}
\exp\Bigl\{
-\beta H_n(\xi_n)
+
\sum_{x \in W_n} h_{\xi(x)}(x)
\Bigr\},
\end{equation}
where $\xi_n\in\Xi^{V_n}$, $\beta=1/T$ is the inverse temperature, and
$Z_n(\beta,\mathbf{h})$ is the corresponding partition function.

The boundary fields are described by the vector-valued functions
\[
\mathbf{h}(x) = (h_{-s}(x),\dots,h_s(x)), 
\qquad 
\widetilde{\mathbf{h}}(x) = \bigl(\widetilde{h}_{-1/2}(x),\widetilde{h}_{1/2}(x)\bigr),
\]
so that for $x\in W_n$,
\[
h_{\xi(x)}(x) = 
\begin{cases} 
h_{\sigma(x)}(x), & x \in \Omega^+, \\[2pt]
\widetilde{h}_{\eta(x)}(x), & x \in \Omega^-.
\end{cases}
\]
In contrast to the single-spin Ising model, the mixed-spin setting naturally
requires both types of fields, $\mathbf{h}$ and $\widetilde{\mathbf{h}}$, see
e.g.~\cite{Akin2023Chaos,Has-Far-2021}.

\subsection{Compatibility and splitting Gibbs measures}

The family of finite-volume measures $\{\mu_n^{\mathbf{h}}\}_{n\ge 1}$ is said to
be \emph{compatible} (or to satisfy the Kolmogorov consistency condition) if, for
every $n\ge 1$ and every configuration $\xi_{n-1}\in\Xi^{V_{n-1}}$, one has
\begin{equation}\label{compatile1}
\mu_{n-1}^{\mathbf{h}}(\xi_{n-1})
=
\sum_{w \in \Xi^{W_n}}
\mu_n^{\mathbf{h}}(\xi_{n-1} \vee w),
\end{equation}
where $\xi_{n-1}\vee w$ denotes the configuration on $V_n$ that coincides with
$\xi_{n-1}$ on $V_{n-1}$ and with $w$ on $W_n$. Here
\[
\Xi^{W_n} =
\begin{cases}
\Psi^{W_n}, & \text{if } n \text{ is even}, \\[2pt]
\Phi^{W_n}, & \text{if } n \text{ is odd},
\end{cases}
\]
reflecting the tree’s bipartite structure.

Condition~\eqref{compatile1} is precisely the Kolmogorov consistency condition:
the $(n-1)$-volume measure $\mu_{n-1}^{\mathbf{h}}$ is the marginal of the
$n$-volume measure $\mu_n^{\mathbf{h}}$ obtained by summing over all spin
configurations on the outer shell $W_n$. By the Kolmogorov extension theorem,
any such compatible family $\{\mu_n^{\mathbf{h}}\}_{n\ge 1}$ defines a unique
probability measure $\mu$ on the space of infinite configurations
\[
\Omega = \lim_{\longleftarrow} \Xi^{V_n},
\]
such that for all $n\ge 1$ and $\xi_n\in\Xi^{V_n}$,
\[
\mu\bigl(\{\xi\in\Omega : \xi|_{V_n} = \xi_n\}\bigr)
=
\mu_n^{\mathbf{h}}(\xi_n).
\]

A probability measure $\mu$ obtained in this way is called a
\emph{splitting Gibbs measure} (SGM) for the $(s,\tfrac12)$-MSIM on the Cayley
tree. The term “splitting” reflects the recursive factorisation of the measure
along the levels of the tree, which is induced by its hierarchical structure. The main
task in what follows is to identify conditions on the boundary fields
$\mathbf{h}$ and $\widetilde{\mathbf{h}}$ under which the compatibility
condition~\eqref{compatile1} holds, thereby yielding well-defined infinite-volume
Gibbs measure for the model.

\section{Construction of the TISGMs for arbitrary \texorpdfstring{$k$}{k}}\label{sec:TIGMs}

In the setting described above, Kolmogorov's consistency theorem
(see, e.g.,~\cite{Shiryaev-1980}) implies that, for every configuration
$\xi_n \in \Xi^{V_n}$, there exists a unique probability measure $\mu$ on the
infinite-volume configuration space $\Xi^V$ such that
\begin{equation}\label{SGM11}
\mu\bigl(\{\xi \in \Xi^V : \xi|_{V_n} = \xi_n\}\bigr)
=
\mu_n^{\pmb{h}}(\xi_n).
\end{equation}
Thus, any family $\{\mu_n^{\pmb{h}}\}_{n\ge 1}$ satisfying the compatibility
condition~\eqref{compatile1} determines a unique infinite-volume Gibbs measure.
Following our earlier work
\cite{Akin2024,Akin-Muk-24JSTAT,Akin2023Chaos,Has-Far-2021}, we now recall a
theorem which characterises this compatibility in terms of the boundary
fields, thereby providing a constructive description of the Gibbs measures.
The analysis carried out in the present paper is based on the dynamical system
encoded by these functional relationships.

\begin{thm}[{\cite[Theorem 2.1]{Akin2024}}]\label{thm-comp0}
Let $n = 1,2,\ldots$ and $\varphi = e^{\frac{\beta J}{2}}$. The family of
finite-volume Gibbs measures $\{\mu_n^{\pmb{h}}\}$ defined by~\eqref{Gibbs1}
is compatible (i.e., satisfies~\eqref{compatile1}) if and only if, for every
$x \in V$, the following conditions are satisfied:

\textbf{(i)} For $i \in \{-s, -(s-1), \ldots, -2, -1\}$,
\begin{align}\label{consistent1}
\exp\bigl(h_i(x) - h_0(x)\bigr)
=
\prod_{y \in S(x)} 
\left(\frac{\varphi^{2|i|} + \exp\bigl(\widetilde{h}_{\frac12}(y) - \widetilde{h}_{-\frac12}(y)\bigr)}
{\varphi^{|i|}\bigl[1 + \exp\bigl(\widetilde{h}_{\frac12}(y) - \widetilde{h}_{-\frac12}(y)\bigr)\bigr]}\right).
\end{align}

\textbf{(ii)} For $i \in \{1, 2, \ldots, s-1, s\}$,
\begin{align}\label{consistent2}
\exp\bigl(h_i(x) - h_0(x)\bigr)
=
\prod_{y \in S(x)} 
\left(\frac{1 + \varphi^{2|i|}\exp\bigl(\widetilde{h}_{\frac12}(y) - \widetilde{h}_{-\frac12}(y)\bigr)}
{\varphi^{|i|}\bigl[1 + \exp\bigl(\widetilde{h}_{\frac12}(y) - \widetilde{h}_{-\frac12}(y)\bigr)\bigr]}\right).
\end{align}

\textbf{(iii)} For the spin-$\tfrac12$ component,
\begin{align}\label{consistent3}
\exp\bigl(\widetilde{h}_{\frac12}(x) - \widetilde{h}_{-\frac12}(x)\bigr) 
=
\prod_{y \in S(x)} 
\left(
\frac{
\sum_{\sigma(y) = -s}^{s} \varphi^{\sigma(y)} e^{h_{\sigma(y)}(y)}}
{\sum_{\sigma(y) = -s}^{s} \varphi^{-\sigma(y)} e^{h_{\sigma(y)}(y)}}
\right).
\end{align}
\end{thm}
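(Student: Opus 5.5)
The plan is the standard splitting argument: write the volume-$n$ measure recursively and compare it with the marginal of the volume-$(n+1)$ measure. By construction $\mu_n^{\mathbf h}(\xi_n)$ is proportional to $\exp\{-\beta H_n(\xi_n)+\sum_{x\in W_n}h_{\xi(x)}(x)\}$. I would substitute this, for $n$ replaced by $n+1$, into the compatibility condition~\eqref{compatile1} and sum over $w\in\Xi^{W_{n+1}}$. The Hamiltonian splits as
\[
H_{n+1}(\xi_n\vee w)=H_n(\xi_n)-J\sum_{x\in W_n}\sum_{y\in S(x)}\xi(x)w(y).
\]
Because each $y\in W_{n+1}$ has a unique parent, the sum over $w$ factorizes into a product over $x\in W_n$ and $y\in S(x)$ of one-site sums $\sum_{w(y)}\exp\{\beta J\xi(x)w(y)+h_{w(y)}(y)\}$. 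Compatibility then becomes
\[
\frac{Z_{n}}{Z_{n+1}}\prod_{x\in W_n}\prod_{y\in S(x)}\sum_{w(y)}e^{\beta J\xi(x)w(y)+h_{w(y)}(y)}=\prod_{x\in W_n}e^{h_{\xi(x)}(x)},
\]
which must hold for every value $\xi(x)$. Since the configurations at distinct $x\in W_n$ vary independently, this is equivalent to a single-site identity: for each $x$ there is a constant $a(x)$, independent of $\xi(x)$, with
\[
e^{h_{\xi(x)}(x)}=a(x)\prod_{y\in S(x)}\sum_{w(y)}e^{\beta J\xi(x)w(y)+h_{w(y)}(y)}.
\]
The converse direction, that the identities imply compatibility, follows by reading the same computation backwards and absorbing the constants into $Z_n/Z_{n+1}$.

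Two parity cases remain, and in each I eliminate the constant $a(x)$ by taking ratios.

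\emph{Case 1: $x\in\Gamma^+$, so $\xi(x)=i\in\Phi$ and the children carry spins $\pm\frac12$.} With $\varphi=e^{\beta J/2}$ we have $e^{\beta J i(\pm1/2)}=\varphi^{\pm i}$, so the one-site sum is $\varphi^{i}e^{\widetilde h_{1/2}(y)}+\varphi^{-i}e^{\widetilde h_{-1/2}(y)}$. Dividing the identity for $i$ by the identity for $i=0$ (whose one-site sum is $e^{\widetilde h_{1/2}}+e^{\widetilde h_{-1/2}}$) and dividing numerator and denominator by $e^{\widetilde h_{-1/2}(y)}$ gives
\[
\frac{\varphi^{-i}+\varphi^{i}u(y)}{1+u(y)},\qquad u(y)=e^{\widetilde h_{1/2}(y)-\widetilde h_{-1/2}(y)}.
\]
For $i>0$ this equals $\bigl(1+\varphi^{2|i|}u\bigr)/\bigl(\varphi^{|i|}(1+u)\bigr)$, which is~\eqref{consistent2}. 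For $i<0$ it equals $\bigl(\varphi^{2|i|}+u\bigr)/\bigl(\varphi^{|i|}(1+u)\bigr)$, which is~\eqref{consistent1}.

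\emph{Case 2: $x\in\Gamma^-$, so $\xi(x)=\pm\frac12$ and the children carry $\sigma\in\Phi$.} Here $e^{\beta J(\pm1/2)\sigma}=\varphi^{\pm\sigma}$. Taking the ratio of the $+\frac12$ and $-\frac12$ identities gives~\eqref{consistent3} directly.

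The main obstacle is the equivalence step: showing that the product identity over all of $W_n$, holding for every configuration, is equivalent to the single-site identities with constants independent of the spin. I would handle it by fixing all sites except one and varying the spin at that site. This shows that the ratio for that site is independent of its spin and of the other sites, so it defines $a(x)$. The remaining conventions are the normalization $h_0$ as reference and the $\varphi$ bookkeeping with the factor $\tfrac12$ in the exponent, which are routine.
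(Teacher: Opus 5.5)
Your proposal is correct and is the standard splitting argument behind \cite[Theorem~2.1]{Akin2024}; the paper quotes that result without proof. You cancel $e^{-\beta H_n}$, factorize the outer-shell sum into single-vertex sums, and use the free variation of the spins on $W_n$ to get $e^{h_{\xi(x)}(x)}=a(x)\prod_{y\in S(x)}\sum_{w(y)}e^{\beta J\xi(x)w(y)+h_{w(y)}(y)}$, then eliminate $a(x)$ by dividing by the $i=0$ (resp.\ $-\tfrac12$) identity, with the converse following because both measures are normalized. Your parity split, giving (i)--(ii) at vertices of $\Gamma^+$ and (iii) at vertices of $\Gamma^-$, is the only meaningful reading of ``for every $x\in V$'', since the other fields never enter $\mu_n^{\mathbf{h}}$.
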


It is well known that
\[
\mathbf{h} = \bigl\{h_x \in \mathbb{R} : x \in V\bigr\}
\quad \text{and} \quad 
\widetilde{\mathbf{h}} = \bigl\{\widetilde{h}_x \in \mathbb{R} : x \in V\bigr\}
\]
represent the (generalized) boundary conditions for the model. For any choice of
boundary fields $(\mathbf{h},\widetilde{\mathbf{h}})$ satisfying the functional
equations~\eqref{consistent1}–\eqref{consistent3}, the corresponding family
$\{\mu_n^{\pmb{h}}\}$ is compatible and therefore determines a unique infinite-volume
Gibbs measure $\mu$. Conversely, every (splitting) Gibbs measure for the
$(s,\tfrac12)$-MSIM on the CT arises from a boundary condition solving
\eqref{consistent1}–\eqref{consistent3}, yielding a one-to-one correspondence
between such boundary fields and Gibbs measures; see, for example,
\cite{Gandolfo-Rahmet-2013}.

\begin{defin}\label{def-TI1}
Let $x\in V$, $i\in\bigl\{-\tfrac{1}{2}, \tfrac{1}{2}\bigr\}$ and
$j\in\{-s,\ldots,-1,0,1,\ldots,s\}$. We say that the boundary fields
are \emph{translation-invariant} if
\[
\widetilde{h}_{i}(x) = \widetilde{h}_{i},
\qquad
h_{j}(x) = h_{j}
\]
for all $x\in V$. In this case the vector-valued boundary fields
\[
\widetilde{\mathbf{h}}
=
\bigl(\widetilde{h}_{-\frac12},\widetilde{h}_{\frac12}\bigr),
\qquad
\mathbf{h}
=
\bigl(h_{-s},\ldots,h_{-1},h_0,h_1,\ldots,h_s\bigr)
\]
are called \emph{translation-invariant}, and the corresponding Gibbs
measures are called \emph{translation-invariant splitting Gibbs measures} (TISGMs).
\end{defin}

In the remainder of this section we analyse the functional equations
\eqref{consistent1}–\eqref{consistent3} under the translation-invariance
assumption in order to determine the existence of TISGMs for the mixed
$(s,\tfrac12)$ Ising model on the CT $\Gamma^k$ (in particular,
for $k=3$).

For $i\in\Phi\setminus\{0\}$ we introduce the (field) differences
\[
U_i(x) = h_i(x) - h_0(x),
\qquad
R(x) = \widetilde{h}_{\frac12}(x) - \widetilde{h}_{-\frac12}(x).
\]
Under translation invariance, these become site-independent:
$U_i(x)\equiv U_i$ and $R(x)\equiv R$. Substituting into
\eqref{consistent1}–\eqref{consistent3} and using the fact that
$|S(x)| = k$ for every vertex $x$, we obtain the following system of equations:

For $i = -s, -s+1, \ldots, -2, -1$, equation~\eqref{consistent1} yields
\begin{align}\label{TI-1}
e^{U_i}
&=\left(
\frac{\varphi^{2|i|} + e^{R}}{\varphi^{|i|}\bigl(1 + e^{R}\bigr)}
\right)^{k}.
\end{align}
For $i = 1, 2, \ldots, s-1, s$, equation~\eqref{consistent2} gives
\begin{align}\label{TI-2}
e^{U_i}
&=\left(
\frac{1 + \varphi^{2i} e^{R}}{\varphi^{i}\bigl(1 + e^{R}\bigr)}
\right)^{k}.
\end{align}
Finally, from equation~\eqref{consistent3} we obtain
\begin{align}\label{TI-3}
e^{R}
&=\left(
\frac{\displaystyle\sum_{i=-s}^{s} \varphi^{s+i} e^{U_i}}
{\displaystyle\sum_{i=-s}^{s} \varphi^{s-i} e^{U_i}}
\right)^{k}.
\end{align}
Equations~\eqref{TI-1}–\eqref{TI-3} therefore describe the TISGMs for the mixed
$(s,\tfrac12)$ Ising model on the CT of order $k$.

\section{Construction and Stability Analysis of the 11-Dimensional Dynamical System for the 
Mixed Spin-\texorpdfstring{$(5,\tfrac12)$}{(5,1/2)} Ising Model on a CT of Order 
\texorpdfstring{$3$}{3}}
\label{sec:Dynamical-sys}

In this section, we carry out a detailed stability analysis of the dynamical system
defined by Equations~\eqref{TI-1}--\eqref{TI-3} in the case $s = 5$. Throughout, we
consider the spin sets
\[
\Phi = \{-5,-4,-3,-2,-1,0,1,2,3,4,5\}, 
\qquad
\Psi = \Bigl\{-\tfrac{1}{2},\,\tfrac{1}{2}\Bigr\},
\]
corresponding to the mixed spin-$(5,\tfrac12)$ configuration on the CT.
The dynamical system arises from the consistency equations for TISGMs and encodes the evolution of the effective boundary fields
across the successive shells of the tree. Our main objective is to determine, within
the relevant parameter space, those regions where the distinguished fixed point
describing the disordered phase changes its character: in particular, we identify
how and when this fixed point passes from being locally stable (attractive) to
repelling as the order $k$ of the CT increases, thereby signalling the
onset and evolution of the phase transitions in the model.

By introducing the change of variables \( Z = e^{R(y)} \) and \(
X_i = e^{U_i(y)} \), the system of equations
\eqref{TI-1}--\eqref{TI-3} can be rewritten as an 11-dimensional
system:

\begin{align}
f_i(X_{-5}, \ldots, X_{-1}, X_1, \ldots, X_5, Z) &:= \left(
\frac{\varphi^{2|i|} + Z}{\varphi^{|i|}(1 + Z)} \right)^3=X_i,
&& \text{for } -5 \leq i \leq -1, \label{dynamicals3aa1} \\
f_i(X_{-5}, \ldots, X_{-1}, X_1, \ldots, X_5, Z) &:= \left( \frac{1
+ \varphi^{2i} Z}{\varphi^{i}(1 + Z)} \right)^3=X_i,
&& \text{for } 1 \leq i \leq 5, \label{dynamicals3aa2} \\
f_6(X_{-5}, \ldots, X_{-1}, X_1, \ldots, X_5, Z) &:= \left(
\frac{\sum_{i=-5}^{5} \varphi^{5 + i} X_i}{\sum_{i=-5}^{5} \varphi^{5 -
i} X_i} \right)^3=Z:=F(Z), && \text{where } X_0 = 1.
\label{dynamicals3aa3}
\end{align}
To find an obvious fixed point, the following lemma can be applied:

\begin{lem}\label{eq:IP-Fixed-Point}
A particular solution of the system \eqref{dynamicals3aa1}--\eqref{dynamicals3aa3} is
\begin{align}\label{eq:FP-initial}
\ell_0^{(5)}:\quad
Z^{(0)}=1,\quad
X_i^{(0)}=\left(\frac{\varphi^{2|i|}+1}{2\varphi^{|i|}}\right)^3
=\left(\cosh\!\left(\frac{\beta J|i|}{2}\right)\right)^3,
\ \ 1\le |i|\le 5.
\end{align}
\end{lem}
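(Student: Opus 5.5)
Substituting $Z=1$ into the first ten equations and then checking the eleventh equation is all the proof needs.

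\textbf{Step 1: the $X_i$ equations.} For $-5\le i\le -1$, equation \eqref{dynamicals3aa1} with $Z=1$ gives
\[
X_i=\left(\frac{\varphi^{2|i|}+1}{2\varphi^{|i|}}\right)^3 .
\]
For $1\le i\le 5$, equation \eqref{dynamicals3aa2} with $Z=1$ gives
\[
X_i=\left(\frac{1+\varphi^{2i}}{2\varphi^{i}}\right)^3 .
\]
These are the same expression in $|i|$. Since $\varphi=e^{\beta J/2}$,
\[
\frac{\varphi^{2|i|}+1}{2\varphi^{|i|}}=\frac{\varphi^{|i|}+\varphi^{-|i|}}{2}=\cosh\!\left(\frac{\beta J|i|}{2}\right).
\]
This gives the closed form in \eqref{eq:FP-initial}. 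In particular, $X_i^{(0)}=X_{-i}^{(0)}$ for every $i$, and $X_0=1$ by convention.

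\textbf{Step 2: the $Z$ equation.} It remains to check that $F(1)=1$ in \eqref{dynamicals3aa3} when $X_i=X_i^{(0)}$. Reindex the denominator by $i\mapsto -i$ and use the symmetry $X_{-i}=X_i$:
\[
\sum_{i=-5}^{5}\varphi^{5-i}X_i=\sum_{i=-5}^{5}\varphi^{5+i}X_{-i}=\sum_{i=-5}^{5}\varphi^{5+i}X_i .
\]
So the numerator and denominator coincide. Both are sums of positive terms, hence nonzero, so their ratio is $1$ and its cube is $1=Z^{(0)}$. Together with Step 1, this shows $\ell_0^{(5)}$ is a fixed point of the system.

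\textbf{Main obstacle.} The only point needing care is the reindexing in Step 2, which works only because of the symmetry $X_{-i}=X_i$ established in Step 1. The same argument gives the analogous symmetric fixed point for arbitrary $s$ and $k$.
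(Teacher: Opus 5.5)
Your proof is correct, but it checks the fixed point in the opposite order from the paper. The paper first eliminates the $X_i$ by substituting \eqref{dynamicals3aa1}--\eqref{dynamicals3aa2} into \eqref{dynamicals3aa3}. Clearing denominators gives the scalar map $F(Z)=\bigl(N(Z)/D(Z)\bigr)^3$ with explicit cubics $N(Z)=A+33\varphi^{20}Z+3AZ^2+CZ^3$ and $D(Z)=C+3AZ+33\varphi^{20}Z^2+AZ^3$. The paper then evaluates directly to get $N(1)=D(1)=4A+C+33\varphi^{20}$, and only afterwards back-substitutes $Z=1$ to read off $X_i^{(0)}$. You instead evaluate the $X$-equations at $Z=1$ first and note the parity $X_{-i}^{(0)}=X_i^{(0)}$. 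You then settle the $Z$-equation by the reindexing $i\mapsto -i$, which swaps numerator and denominator. Your route needs no coefficient bookkeeping and works verbatim for every $s$ and $k$. It therefore also justifies the general point $\ell_0^{(s)}$, which the appendix states without proof. The paper's route costs more algebra but produces the explicit reduced map $F(Z)$, which it reuses to compute $F'(1)$ in the one-dimensional stability analysis. Both arguments rest on the same symmetry. The paper's cubics satisfy $D(Z)=Z^3N(1/Z)$ because $Z\mapsto 1/Z$ interchanges $X_{-m}(Z)$ and $X_m(Z)$. Hence $F(1/Z)=1/F(Z)$, and positivity forces $F(1)=1$; your reindexing is exactly this observation specialized to $Z=1$.
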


\begin{proof}
\begin{itemize}
\item[\textbf{(1)}] \textbf{Reduction to a one-variable fixed-point equation.}
Substituting \eqref{dynamicals3aa1} and \eqref{dynamicals3aa2} into \eqref{dynamicals3aa3}
(for $-5\le s\le 5$) yields the scalar fixed-point equation
\begin{equation}\label{eq:FZ=Z}
F(Z)=\left(\frac{N(Z)}{D(Z)}\right)^3=Z,
\end{equation}
where
\[
N(Z):=A(\varphi)+33\varphi^{20}Z+3A(\varphi)Z^2+C(\varphi)Z^3,\qquad
D(Z):=C(\varphi)+3A(\varphi)Z+33\varphi^{20}Z^2+A(\varphi)Z^3,
\]
and the auxiliary coefficients are defined as
\begin{align*}
A(\varphi) &:= \varphi^{10}\big(1 + \varphi^{2} + \varphi^{4} + \varphi^{6} + \varphi^{8} + \varphi^{10} 
+ \varphi^{12} + \varphi^{14} + \varphi^{16} + \varphi^{18} + \varphi^{20}\big), \\[4pt]
C(\varphi) &:= 1 + \varphi^{4} + \varphi^{8} + \varphi^{12} + \varphi^{16} + \varphi^{20} 
+ \varphi^{24} + \varphi^{28} + \varphi^{32} + \varphi^{36} + \varphi^{40}.
\end{align*}

\item[\textbf{(2)}] \textbf{Verification that $Z=1$ is a fixed point.}
Evaluating $N$ and $D$ at $Z=1$ gives
\[
N(1)=A(\varphi)+33\varphi^{20}+3A(\varphi)+C(\varphi)=4A(\varphi)+C(\varphi)+33\varphi^{20},
\]
\[
D(1)=C(\varphi)+3A(\varphi)+33\varphi^{20}+A(\varphi)=4A(\varphi)+C(\varphi)+33\varphi^{20}.
\]
Hence $N(1)=D(1)$, so $F(1)=\bigl(N(1)/D(1)\bigr)^3=1$. Therefore $Z^{(0)}=1$ satisfies
\eqref{eq:FZ=Z}.

\item[\textbf{(3)}] \textbf{Closed form for $X_i^{(0)}$ once $Z^{(0)}=1$.}
Substituting $Z^{(0)}=1$ back into \eqref{dynamicals3aa1}--\eqref{dynamicals3aa2} (for
$1\le |i|\le 5$) yields
\begin{align}\label{IP-FP-1a}
X_i^{(0)}=\left(\frac{\varphi^{|i|}+\varphi^{-|i|}}{2}\right)^3
=\left(\frac{\varphi^{2|i|}+1}{2\varphi^{|i|}}\right)^3.  
\end{align}

Finally, if $\varphi=e^{\beta J/2}$ then
\[
\frac{\varphi^{|i|}+\varphi^{-|i|}}{2}
=\cosh\!\left(\frac{\beta J|i|}{2}\right),
\]
which gives the hyperbolic cosine form in \eqref{eq:FP-initial}.
\end{itemize}
This proves that $\ell_0^{(5)}$ is the particular solution.
\end{proof}

The Split Gibbs Measure (SGM) is said to be in the disordered phase (DP) when $Z=1$ satisfies equation \eqref{dynamicals3aa3}. This condition is equivalent to the system being insensitive to the boundary conditions \cite{Mukhamedov-2022}. In our mixed-spin $(5, 1/2)$ Ising model, the TISGM associated with the fixed point $\ell_0^{(5)}$ represents this phase transition. Hence, we denote this DP by $\mu_0^{(5)}$, where the superscript refers to the spin value $s=5$.The evolution of the system is governed by the 11-dimensional dynamical mapping defined in \eqref{dynamicals3aa1}--\eqref{dynamicals3aa3}. We determine the local stability of any fixed point $(X_{-5}^*, \dots, X_5^*, Z^*)$ using linear stability analysis. Specifically, the fixed point is stable if all eigenvalues of the Jacobian matrix evaluated at that point have a magnitude that is strictly less than unity ($|\lambda_i| < 1$).

We consider the following dynamical system:
\[
\begin{cases}
X_i^{(n+1)} = f_i\!\left(X_{-5}^{(n)}, \dots, X_5^{(n)}, Z^{(n)}\right), & i = -5, \dots, -1, 1, \dots, 5, \\[4pt]
Z^{(n+1)} = f_6\!\left(X_{-5}^{(n)}, \dots, X_5^{(n)}, Z^{(n)}\right),
\end{cases}
\]
where the functions $f_i$ govern the update rules for variables.

\subsection{Stability Analysis via the Jacobian Matrix}\label{subsec:SA-a1} 
Here, we investigate the stability of the fixed point $\ell_0^{(5)}$ (defined in \eqref{eq:FP-initial}) for the dynamical system given by \eqref{dynamicals3aa1}--\eqref{dynamicals3aa3}. To this end, we derive the Jacobian matrix associated with the system and evaluate it at $\ell_0^{(5)}$. By examining the eigenvalues of the resulting Jacobian, we identify the parameter regimes in which the fixed point changes stability, thereby determining the conditions for a phase transition in the $(5,\tfrac{1}{2})$-mixed spin Ising model.

Now let the vector
\begin{align*}
\mathbf{X}&= \left(X_{-5}, X_{-4}, X_{-3}, X_{-2}, X_{-1}, X_{1}, X_{2}, X_{3}, X_{4}, X_{5}, Z\right)  
\end{align*}
be a fixed point of the equations \eqref{dynamicals3aa1}, \eqref{dynamicals3aa2}, and \eqref{dynamicals3aa3}.
The Jacobian matrix $J(\mathbf{X})$ at the fixed point $\mathbf{X}$ is defined 
\begin{align}\label{matrix:JAC_0}
J(\mathbf{X}) =
\begin{bmatrix}
\frac{\partial f_{-5}}{\partial X_{-5}} & \cdots & \frac{\partial f_{-5}}{\partial X_{5}} & \frac{\partial f_{-5}}{\partial Z} \\
\vdots & \ddots & \vdots & \vdots \\
\frac{\partial f_{5}}{\partial X_{-5}} & \cdots & \frac{\partial f_{5}}{\partial X_{5}} & \frac{\partial f_{5}}{\partial Z} \\
\frac{\partial f_{6}}{\partial X_{-5}} & \cdots & \frac{\partial f_{6}}{\partial X_{5}} & \frac{\partial f_{6}}{\partial Z}
\end{bmatrix},  
\end{align}
where $f_{-5}, f_{-4}, \dots, f_{5}, f_{6}$ denote the right-hand sides of the dynamical system
\eqref{dynamicals3aa1}--\eqref{dynamicals3aa3}.

The expressions for derivatives are:\\
For \( -5 \leq i \leq -1 \):
\begin{align}\label{eq:PDE-1}
\frac{\partial f_i}{\partial Z} = - \frac{3 \left( \varphi^{2|i|}-1
\right)\left( Z + \varphi^{2|i|} \right)^2}{\varphi^{3|i|}(1 + Z)^4},
\quad \frac{\partial f_i}{\partial X_i}=0.
\end{align}

For \( 1 \leq i \leq 5 \):
\begin{align}\label{eq:PDE-2}
\frac{\partial f_i}{\partial Z} = \frac{3 \left(\varphi
^{2i}-1\right)\left(1+Z \varphi ^{2 i}\right)^2}{\varphi ^{3 i}
(1+Z)^4}, \quad \frac{\partial f_i}{\partial X_i}=0.
\end{align}

For \( f_6 \) and \(i\neq 0\), let
\[
N = \sum_{i=-5}^5 \varphi^{5+i} X_i, \quad D = \sum_{i=-5}^5
\varphi^{5-i} X_i,
\]
then
\begin{align}\nonumber
\frac{\partial f_6}{\partial X_j}&= \frac{\partial}{\partial X_j} \left[ \left( \frac{\sum\limits_{k=-5}^{5} \varphi^{5+k} X_k }{ \sum\limits_{k=-5}^{5}
\varphi^{5-k} X_k }\right)^3 \right] = 3 \left( \frac{N}{D} \right)^2 \cdot \frac{\varphi^{5+j} D - \varphi^{5-j} N}{D^2} \\\label{eq:PDE-3aa}
&= 3 \cdot \frac{ \left( \sum\limits_{k=-5}^{5} \varphi^{5+k} X_k\right)^2 } { \left( \sum\limits_{k=-5}^{5} \varphi^{5-k} X_k\right)^4 } \cdot \left( \varphi^{5+j} \sum\limits_{k=-5}^{5}\varphi^{5-k} X_k - \varphi^{5-j} \sum\limits_{k=-5}^{5} \varphi^{5+k} X_k
\right).
\end{align}

Since the function $f_6$ is independent of $Z$, one can easily see that
\[
\frac{\partial f_6}{\partial Z} = 0.
\]

Let the initial state vector from Lemma~\ref{eq:IP-Fixed-Point} be
\[
\mathbf{X}^{(0)}=\bigl(X_{-5}^{(0)},X_{-4}^{(0)},X_{-3}^{(0)},X_{-2}^{(0)},X_{-1}^{(0)},
X_{1}^{(0)},X_{2}^{(0)},X_{3}^{(0)},X_{4}^{(0)},X_{5}^{(0)},Z^{(0)}\bigr).
\]

In our earlier work~\cite{Akin2024}, we proposed a method for explicitly determining the eigenvalues of the Jacobian matrix associated with a nonlinear dynamical system of dimension $(2s+1)$. For completeness, we restate the mathematical results below.

\begin{lem}\cite[Lemma B.1]{Akin2024}\label{lemma:Eigenvalues-S}
Consider a $(2s+1) \times (2s+1)$ matrix $J$ with entries
\begin{align}\label{eq:matrix-Stability-S}
J_{ij} = 
\begin{cases}
-a_{s+1-i}, & 1 \leq i \leq s, \quad j = 2s+1; \\
a_{i-s}, & s+1 \leq i \leq 2s, \quad j = 2s+1; \\
0, & 1 \leq i \leq 2s, \quad 1 \leq j \leq 2s; \\
-b_{s+1-j}, & i = 2s+1, \quad 1 \leq j \leq s; \\
b_{j-s}, & i = 2s+1, \quad s+1 \leq j \leq 2s; \\
0, & i = 2s+1, \quad j = 2s+1.
\end{cases}
\end{align}
where $\{a_k\}$ and $\{b_k\}$ are the given sequences.
Then the eigenvalues of $J$ in \eqref{eq:matrix-Stability-S} are
\[
\left\{0,0,\ldots,0,\,-\sqrt{2\sum_{i=1}^s a_i b_i},\,\sqrt{2\sum_{i=1}^s a_i b_i}\right\}.
\]
\end{lem}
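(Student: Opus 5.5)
The matrix $J$ in \eqref{eq:matrix-Stability-S} is a bordered zero matrix. Its upper-left $2s\times 2s$ block vanishes, only the last column carries the entries $u=(-a_s,\dots,-a_1,a_1,\dots,a_s)^{T}$ in rows $1,\dots,2s$, only the last row carries $v^{T}=(-b_s,\dots,-b_1,b_1,\dots,b_s)$ in columns $1,\dots,2s$, and $J_{2s+1,2s+1}=0$. In block form,
\[
J=\begin{bmatrix} 0_{2s\times 2s} & u\\ v^{T} & 0\end{bmatrix}.
\]
The plan is to compute the characteristic polynomial directly from this structure. The key scalar is
\[
v^{T}u=\sum_{k=1}^{s}(-b_k)(-a_k)+\sum_{k=1}^{s}b_ka_k=2\sum_{k=1}^{s}a_kb_k.
\]
The indexing is reversed in the first half, and one must check that the signs pair up as $(-a_k)(-b_k)$ rather than cancelling.

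\textbf{Step 1 (rank and kernel).} Every row among the first $2s$ is a multiple of $e_{2s+1}^{T}$, and the last row is $(v^{T},0)$. Hence $\operatorname{rank}J\le 2$, so $0$ is an eigenvalue of geometric multiplicity at least $2s-1$.

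\textbf{Step 2 (characteristic polynomial).} I would compute $\det(\lambda I-J)$ using the Schur complement with respect to the invertible block $\lambda I_{2s}$ for $\lambda\neq 0$:
\[
\det(\lambda I-J)=\lambda^{2s}\Bigl(\lambda-\lambda^{-1}v^{T}u\Bigr)=\lambda^{2s-1}\bigl(\lambda^{2}-v^{T}u\bigr).
\]
This is a polynomial identity valid for all $\lambda\neq 0$, so by continuity it holds for all $\lambda$. Substituting $v^{T}u=2\sum a_kb_k$ gives the roots $0$ with multiplicity $2s-1$ and $\pm\sqrt{2\sum_{i=1}^{s}a_ib_i}$. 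If the sum is negative, the square root is read as complex, which is harmless for the later use of $|\lambda_{\max}|$.

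As an alternative check, I would also verify the nonzero eigenvalues directly. The vector $(u,\lambda)$ with $\lambda^{2}=v^{T}u$ satisfies $J(u,\lambda)^{T}=(\lambda u,\,v^{T}u)^{T}=\lambda(u,\lambda)^{T}$, which confirms the two nonzero eigenvalues when $v^{T}u\ne0$.

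\textbf{Main obstacle.} The only delicate point is bookkeeping: matching the reversed indices $a_{s+1-i}$ and $b_{s+1-j}$ so that the same $a_k$ meets the same $b_k$. Row $i$ and column $j=i$ carry the same index $s+1-i$, so the inner product of the last row with the last column pairs $a_k$ with $b_k$ correctly. A secondary point is the Schur-complement step at $\lambda=0$, which I handle by the continuity (polynomial identity) argument above.
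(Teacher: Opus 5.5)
Your proposal is correct and follows essentially the same route as the paper. The paper cites this lemma from \cite{Akin2024}, and in the $s=5$ application it justifies the eigenvalues from the bordered zero-block structure as $\chi_J(\lambda)=\lambda^{9}\bigl(\lambda^{2}-\mathbf{h}^{\top}\mathbf{g}\bigr)$, which is your Schur-complement identity $\lambda^{2s-1}(\lambda^{2}-v^{T}u)$ with $2s=10$. Your index check $v^{T}u=2\sum_{k=1}^{s}a_kb_k$, the extension of the identity to $\lambda=0$ as a polynomial identity, and the direct eigenvector check are all sound.
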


From \eqref{eq:PDE-1}, \eqref{eq:PDE-2}, and \eqref{eq:PDE-3aa}, with the variables ordered as
\[
(X_{-5},X_{-4},X_{-3},X_{-2},X_{-1},X_{1},X_{2},X_{3},X_{4},X_{5},Z),
\]
the Jacobian matrix at \(\mathbf{X}^{(0)}\) takes the block form
\[
J\bigl(\mathbf{X}^{(0)}\bigr)
=
\begin{pmatrix}
\mathbf{0}_{10\times 10} & \mathbf{g}\\[4pt]
\mathbf{h}^{\!\top} & 0
\end{pmatrix},
\qquad
\mathbf{g}
=
\begin{pmatrix}
\displaystyle \frac{\partial f_{-5}}{\partial Z}\\[6pt]
\displaystyle \frac{\partial f_{-4}}{\partial Z}\\
\vdots\\[2pt]
\displaystyle \frac{\partial f_{5}}{\partial Z}
\end{pmatrix}_{\!\big|\;\mathbf{X}^{(0)}},
\quad
\mathbf{h}
=
\begin{pmatrix}
\displaystyle \frac{\partial f_{6}}{\partial X_{-5}}\\[6pt]
\displaystyle \frac{\partial f_{6}}{\partial X_{-4}}\\
\vdots\\[2pt]
\displaystyle \frac{\partial f_{6}}{\partial X_{5}}
\end{pmatrix}_{\!\big|\;\mathbf{X}^{(0)}}.
\]

Equivalently, in entry-wise form,
\[
J\bigl(\mathbf{X}^{(0)}\bigr)=
\begin{pmatrix}
0 & \cdots & 0 & \dfrac{\partial f_{-5}}{\partial Z}\\
\vdots & \ddots & \vdots & \vdots\\
0 & \cdots & 0 & \dfrac{\partial f_{5}}{\partial Z}\\
\dfrac{\partial f_{6}}{\partial X_{-5}} & \cdots & \dfrac{\partial f_{6}}{\partial X_{5}} & 0
\end{pmatrix}_{\big|\;\mathbf{X}^{(0)}}.
\]
Since the \(10\times 10\) block is zero, the characteristic polynomial is
\[
\chi_{J}(\lambda)=\lambda^{9}\bigl(\lambda^{2}-\mathbf{h}^{\!\top}\mathbf{g}\bigr),
\]
so the nonzero eigenvalues are
\(
\lambda_{\pm}=\pm\sqrt{\mathbf{h}^{\!\top}\mathbf{g}}\,
\)
(evaluated at \(\mathbf{X}^{(0)}\)) (for \cite{Akin2024,Akin-Muk-24JSTAT,Akin2023Chaos} for details).

For the discrete-time map, the fixed point $\mathbf{X}^{(0)}$ is linearly stable if and only if
\(
|\lambda_{\pm}|<1
\),
which is equivalent to
\(
\mathbf{h}^{\!\top}\mathbf{g}<1
\).
In this work, however, our main interest lies in the regime where
\(
|\lambda_{\pm}|\geq 1
\).
Therefore, we concentrate on the parameter domain in which the fixed point becomes repelling.
As noted in our earlier studies \cite{Akin2024,Akin-Muk-24JSTAT,Akin2023Chaos}, the condition
\(
|\lambda_{\pm}|\geq 1
\)
is associated with the emergence of multiple fixed points for the system in \eqref{dynamicals3aa1}--\eqref{dynamicals3aa3}.
Hence, the appearance of more than one fixed point indicates that the model exhibits a phase transition.

Now let us give the following Lemma.

\begin{lem}\label{eq:JM}
The entries of the Jacobian matrix
\(
J(\mathbf{X}^{(0)})= \bigl(J_{ij}\bigr)_{11 \times 11}
\)
at the fixed point \(\ell_0\) given in \eqref{eq:FP-initial} for the system of
Eqs. \eqref{dynamicals3aa1}--\eqref{dynamicals3aa3} are as follows:

\[
J_{ij} =
\begin{cases}
-\frac{3(\varphi^{4|i|} - 1) (1 + \varphi^{2|i|})}{16\varphi^{3|i|} }, & -5 \leq i \leq -1, \quad j = 11, \\[10pt]
\frac{3(\varphi^{4i} - 1) (1 + \varphi^{2i})}{16\varphi^{3i} }, & 1 \leq i \leq 5, \quad j = 11, \\[10pt]
0, & 1 \leq i \leq 10, \quad 1 \leq j \leq 10, \\[10pt]
\left.\dfrac{\partial f_6(X_{-5}, \ldots, X_{-1}, X_1, \ldots, X_5)}{\partial X_{j - 6}} \right|_{\ell_0}, & i = 11, \quad 1 \leq j \leq 5, \\[10pt]
\left.\dfrac{\partial f_6(X_{-5}, \ldots, X_{-1}, X_1, \ldots, X_5)}{\partial X_{j - 5}} \right|_{\ell_0}, & i = 11, \quad 6 \leq j \leq 10, \\[10pt]
0, & i = 11, \quad j = 11.
\end{cases}
\]  
\end{lem}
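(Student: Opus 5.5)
The lemma is a direct evaluation of the partial derivatives \eqref{eq:PDE-1}, \eqref{eq:PDE-2}, \eqref{eq:PDE-3aa} at the point $\ell_0^{(5)}$ of Lemma~\ref{eq:IP-Fixed-Point}, i.e.\ at $Z^{(0)}=1$ and $X_i^{(0)}=\bigl((\varphi^{2|i|}+1)/(2\varphi^{|i|})\bigr)^3$. I would organize it by blocks of the matrix \eqref{matrix:JAC_0} with the ordering $(X_{-5},\dots,X_{-1},X_1,\dots,X_5,Z)$, so that row $i\le 5$ corresponds to $f_{i-6}$, rows $6$–$10$ to $f_{i-5}$, and row $11$ to $f_6$.

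\textbf{Step 1 (zero blocks).} Each $f_i$ with $1\le|i|\le 5$ in \eqref{dynamicals3aa1}--\eqref{dynamicals3aa2} depends on $Z$ alone. Hence $\partial f_i/\partial X_j\equiv 0$ for all $j$, and the upper-left $10\times10$ block vanishes identically, not just at $\ell_0$. Likewise, $f_6$ in \eqref{dynamicals3aa3} depends only on the $X_j$, so $\partial f_6/\partial Z=0$, which gives $J_{11,11}=0$.

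\textbf{Step 2 (last column).} Substitute $Z=1$ into \eqref{eq:PDE-1}. This gives $-3(\varphi^{2|i|}-1)(1+\varphi^{2|i|})^2/(16\varphi^{3|i|})$. Then use $(\varphi^{2|i|}-1)(\varphi^{2|i|}+1)=\varphi^{4|i|}-1$ to reach the stated form $-3(\varphi^{4|i|}-1)(1+\varphi^{2|i|})/(16\varphi^{3|i|})$. The case $1\le i\le5$ from \eqref{eq:PDE-2} is identical, since $(1+\varphi^{2i})^2$ is the same factor at $Z=1$, and it gives the same expression with a plus sign. As a sanity check, I would also recompute \eqref{eq:PDE-1} by differentiating $\bigl((\varphi^{2m}+Z)/(\varphi^m(1+Z))\bigr)^3$ with the quotient rule. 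The inner derivative is $(1-\varphi^{2m})/(\varphi^m(1+Z)^2)$, which confirms the exponents $(Z+\varphi^{2m})^2$, $(1+Z)^4$ and $\varphi^{3m}$.

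\textbf{Step 3 (last row).} Here the lemma only asserts that the entries equal $\partial f_6/\partial X_{j-6}$ or $\partial f_6/\partial X_{j-5}$ evaluated at $\ell_0$, so it suffices to check the index bookkeeping. Column $j\in\{1,\dots,5\}$ is variable $X_{j-6}\in\{X_{-5},\dots,X_{-1}\}$, and column $j\in\{6,\dots,10\}$ is $X_{j-5}\in\{X_1,\dots,X_5\}$. These entries are given by \eqref{eq:PDE-3aa}. For completeness, I would note the simplification at $\ell_0$. Since $X_{-k}^{(0)}=X_k^{(0)}$, the sums satisfy $N=D$ at $\ell_0$, so the entry reduces to $3(\varphi^{5+j}-\varphi^{5-j})/D$. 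This yields the antisymmetry $\partial_{X_{-j}}f_6=-\partial_{X_j}f_6$, consistent with the structure required by Lemma~\ref{lemma:Eigenvalues-S}.

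\textbf{Main obstacle.} The lemma has no conceptual difficulty. The only real risk is the index bookkeeping between matrix positions and spin labels, together with the factoring in Step 2. Step 3 needs no closed form because the statement leaves those entries implicit.
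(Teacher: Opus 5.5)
Your proposal is correct and follows the same route the paper intends: the paper omits the proof as ``elementary calculus,'' and evaluating its displayed derivatives \eqref{eq:PDE-1}, \eqref{eq:PDE-2}, \eqref{eq:PDE-3aa} at $Z=1$ and at the symmetric values $X_{\pm m}^{(0)}$ is exactly your Steps 1--3. Your reduction of the last row to $3(\varphi^{5+j}-\varphi^{5-j})/D$, with $D=\varphi^{5}S_5(\varphi)=K(\varphi)/(8\varphi^{15})$, reproduces the entries of \eqref{eq:JM-1a}; it also shows that $\partial f_6/\partial X_{-5}|_{\ell_0}$ equals $-A$, not $A$ as the paper's label says, which agrees with the matrix itself.
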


Because the proof of the lemma is based on elementary calculus, we will not provide it again here. From \eqref{matrix:JAC_0}, based on Lemma \ref{eq:JM}, the Jacobian matrix for the dynamical system described in \eqref{dynamicals3aa1}--\eqref{dynamicals3aa3} is computed as follows:
\begin{align}\label{eq:JM-1a}
J&=\left(
\begin{array}{ccccccccccc}
 0 & 0 & 0 & 0 & 0 & 0 & 0 & 0 & 0 & 0 & -\frac{3(\varphi^{20}-1)(1+\varphi^{10})}{16\varphi^{15}}\\
 0 & 0 & 0 & 0 & 0 & 0 & 0 & 0 & 0 & 0 & -\frac{3(\varphi ^{16}-1)(1+\varphi ^8)}{16\varphi ^{12}} \\
 0 & 0 & 0 & 0 & 0 & 0 & 0 & 0 & 0 & 0 & -\frac{3(\varphi ^{12}-1)(1+\varphi ^6)}{16\varphi ^9} \\
 0 & 0 & 0 & 0 & 0 & 0 & 0 & 0 & 0 & 0 & -\frac{3(\varphi ^8-1)(1+\varphi ^4)}{16\varphi ^6} \\
 0 & 0 & 0 & 0 & 0 & 0 & 0 & 0 & 0 & 0 & -\frac{3(\varphi ^4-1)(1+\varphi ^2)}{16\varphi ^3} \\
 0 & 0 & 0 & 0 & 0 & 0 & 0 & 0 & 0 & 0 & \frac{3(\varphi ^4-1)(1+\varphi ^2)}{16\varphi ^3} \\
 0 & 0 & 0 & 0 & 0 & 0 & 0 & 0 & 0 & 0 & \frac{3(\varphi ^8-1)(1+\varphi ^4)}{16\varphi ^6} \\
 0 & 0 & 0 & 0 & 0 & 0 & 0 & 0 & 0 & 0 & \frac{3(\varphi ^{12}-1)(1+\varphi ^6)}{16\varphi ^9} \\
 0 & 0 & 0 & 0 & 0 & 0 & 0 & 0 & 0 & 0 & \frac{3(\varphi ^{16}-1)(1+\varphi ^8)}{16\varphi ^{12}} \\
 0 & 0 & 0 & 0 & 0 & 0 & 0 & 0 & 0 & 0 & \frac{3(\varphi ^{20}-1)(1+\varphi ^{10})}{16\varphi ^{15}} \\
 -A & -B & -C & -F & -G & G & F & C & B & A & 0
\end{array}
\right),   
\end{align}
where \[
\begin{aligned}
A &= \left.\frac{\partial f_6}{\partial X_{-5}}\right|_{\ell_0}=\frac{24 \varphi ^{15} \left(\varphi ^{10}-1\right)}{K(\varphi)}, &
B &= \left.\frac{\partial f_6}{\partial X_{-4}}\right|_{\ell_0}=\frac{24 \varphi ^{16} \left(\varphi ^8-1\right)}{K(\varphi)}, \\
C &= \left.\frac{\partial f_6}{\partial X_{-3}}\right|_{\ell_0}=\frac{24 \varphi ^{17} \left(\varphi ^6-1\right)}{K(\varphi)}, &
F &= \left.\frac{\partial f_6}{\partial X_{-2}}\right|_{\ell_0}=\frac{24 \varphi ^{18} \left(\varphi ^4-1\right)}{K(\varphi)}, \\
G &= \left.\frac{\partial f_6}{\partial X_{-1}}\right|_{\ell_0}=\frac{24\varphi ^{19} \left(\varphi ^2-1\right)}{K(\varphi)}. &
\end{aligned}
\]
\[K(\varphi)=1+\varphi ^4+\varphi ^8+4 \varphi ^{10}+5 \varphi ^{12}+4 \varphi ^{14}+5 \varphi ^{16}+4 \varphi ^{18}+38 \varphi ^{20}+4 \varphi ^{22}+5 \varphi ^{24}+4 \varphi ^{26}+5 \varphi ^{28}+4 \varphi ^{30}+\varphi ^{32}+\varphi ^{36}+\varphi ^{40}.
\]

Based on Lemma \ref{lemma:Eigenvalues-S}, the eigenvalues \(\lambda_i\) of matrix \(J\) \eqref{eq:JM-1a} are given by:
\[
\lambda_i =
\begin{cases}
0, & \text{(multiplicity 9)} \\[4pt]
\pm\sqrt{s(\varphi)}, & \text{(each simple)}
\end{cases},
\]
where the function \(s(\varphi)\) is defined as
\begin{align*}
s(\varphi) &= \frac{9 \left(1 + \varphi^4 + \varphi^8 + \varphi^{12} + \varphi^{16} - 10\varphi^{20} + \varphi^{24} + \varphi^{28} + \varphi^{32} + \varphi^{36} + \varphi^{40}\right)}{K(\varphi)}.
\end{align*}
This formulation reveals that the largest eigenvalue is \(\lambda_{\max}(\varphi) = \sqrt{\frac{9P(\varphi)}{K(\varphi)}}\). To characterize the phase transition, we identify the region where the fixed point \(\ell_0^{(5)}\) becomes repelled. This requires the condition \(|\lambda_{\max}(\varphi)| > 1\), leading us to solve the equation \(\sqrt{\frac{9P(\varphi)}{K(\varphi)}} - 1 = 0\). Within the domain where the square root is defined, this equation is equivalent to
\[
\frac{9P(\varphi)}{K(\varphi)} = 1 \quad \Longleftrightarrow \quad 9P(\varphi) - K(\varphi) = 0.
\]
Solving the resulting polynomial equation \(9P(\varphi) - K(\varphi) = 0\) for positive real roots using symbolic computation (Mathematica) yields the following critical values:
\[
\varphi \approx 0.901258081777163, \qquad \varphi \approx 1.10956009185308.
\]
Thus, we obtain the following results.

\begin{thm}
\label{thm:repelling-l0}
For $\varphi \in (0,\, 0.901258081777163) \,\cup\, (1.10956009185308,\, \infty)$, the fixed point $\ell_0^{(5)}$ is repelling. 
At the critical values $\varphi \approx 0.901258081777163$ and $\varphi \approx 1.10956009185308$, one has
$|\lambda_{\max}(\varphi)|=1$, so $\ell_0^{(5)}$ is non-hyperbolic (neutral).
\end{thm}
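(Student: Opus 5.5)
The proof reduces to a one-variable sign analysis of the rational function $s(\varphi)=9P(\varphi)/K(\varphi)$, where
\[
P(\varphi)=1+\varphi^4+\varphi^8+\varphi^{12}+\varphi^{16}-10\varphi^{20}+\varphi^{24}+\cdots+\varphi^{40}.
\]
By Lemma~\ref{lemma:Eigenvalues-S} and the block structure of $J(\mathbf{X}^{(0)})$ in \eqref{eq:JM-1a}, the spectrum is $\{0\ (\times 9),\ \pm\sqrt{\mathbf{h}^{\top}\mathbf{g}}\}$. So the first step is to verify the closed form $\mathbf{h}^{\top}\mathbf{g}=s(\varphi)$. This means computing $2\sum_{i=1}^{5}a_ib_i$, with $a_i=3(\varphi^{4i}-1)(1+\varphi^{2i})/(16\varphi^{3i})$ from Lemma~\ref{eq:JM} and $b_i=24\varphi^{20-i}(\varphi^{2i}-1)/K(\varphi)$ from the listed entries $A,B,C,F,G$, and then simplifying over the common denominator. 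I will treat this as routine, checked symbolically.

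Next I will record the structural facts that drive the sign analysis.
\begin{itemize}
\item $K(\varphi)>0$ for $\varphi>0$, since all its coefficients are positive.
\item Both $P$ and $K$ are palindromic of degree $40$, so $s(1/\varphi)=s(\varphi)$. It therefore suffices to treat $\varphi\ge 1$; the interval $(0,0.9012\ldots)$ is then the image of $(1.1095\ldots,\infty)$ under $\varphi\mapsto 1/\varphi$, consistent with $0.901258\cdot 1.109560\approx 1$.
\item At $\varphi=1$ we have $P(1)=0$, hence $s(1)=0<1$.
\item As $\varphi\to\infty$, $s(\varphi)\to 9>1$.
\end{itemize}
For $\varphi>1$, every factor $(\varphi^{4i}-1)$ and $(\varphi^{2i}-1)$ is positive, so each $a_ib_i>0$ and hence $s(\varphi)>0$. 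The nonzero eigenvalues are then real, and $|\lambda_{\max}|=\sqrt{s(\varphi)}$. On $0<\varphi<1$ both factors change sign, so $a_ib_i>0$ still holds; this gives the same conclusion and is consistent with the symmetry.

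The key step is to show that $G(\varphi):=9P(\varphi)-K(\varphi)$ has exactly one root in $(1,\infty)$, with $G<0$ on $(1,\varphi_c)$ and $G>0$ beyond it. This is the main obstacle, since it requires counting real roots of a degree-40 polynomial. Because $G$ is palindromic, I will first substitute $u=\varphi^2$ and then $t=u+u^{-1}$ (or write $G(\varphi)=\varphi^{20}\,\widetilde G(\varphi^2+\varphi^{-2})$). This reduces the problem to a polynomial of degree $10$ in $t\ge 2$, for which a Descartes sign count or a Sturm sequence should show a single root $t_c>2$. Alternatively, I could prove that $s$ is monotone increasing on $(1,\infty)$ by writing $s=9P/K$ and checking that $P'K-PK'$ has nonnegative coefficients after the same substitution. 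The numerical location $\varphi_c\approx 1.10956009185308$ then comes from any root-finder, such as Mathematica or bisection, since $G$ changes sign exactly once.

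Finally, on $(\varphi_c,\infty)$ and, by symmetry, on $(0,1/\varphi_c)$, we have $s(\varphi)>1$. Hence $|\lambda_\pm|>1$, the Jacobian has an eigenvalue outside the unit circle, and $\ell_0^{(5)}$ is repelling (unstable) by standard linearization. At $\varphi=\varphi_c$ and $\varphi=1/\varphi_c\approx 0.901258081777163$, we have $s=1$, so $|\lambda_{\max}|=1$ and the fixed point is non-hyperbolic.
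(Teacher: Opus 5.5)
Your spectral reduction is the same as the paper's. The block form of $J(\mathbf{X}^{(0)})$ and Lemma~\ref{lemma:Eigenvalues-S} give the spectrum $\{0,\pm\sqrt{s(\varphi)}\}$, and your $a_i,b_i$ do reproduce $s=9P/K$, since $P=\varphi^{20}\sum_{n=1}^{5}(\varphi^{2n}-\varphi^{-2n})^2$ and $K=\varphi^{20}\bigl(8+\sum_{n=1}^{5}(\varphi^{n}+\varphi^{-n})^4\bigr)$.

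You genuinely differ from the paper in the final step. The paper solves $9P-K=0$ in Mathematica and reads the sign pattern off Fig.~\ref{Stability-1}. Its one-dimensional consistency check is not independent, because $F'(1)=9\bigl(C(\varphi)-11\varphi^{20}\bigr)/\bigl(4A(\varphi)+C(\varphi)+33\varphi^{20}\bigr)$ is exactly the same rational function $9P/K$. So the paper only observes numerically that there are exactly two positive roots, that $s>1$ outside them, and that the roots are reciprocal. Your palindromy argument plus a single-crossing count proves these facts and uses numerics only to locate $\varphi_c$.

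One step of your plan needs adjusting. Descartes' rule applied directly in $t=\varphi^2+\varphi^{-2}$ will not certify a single root. As a polynomial in $t$, $u^n+u^{-n}$ has alternating-sign coefficients, so the reduced degree-10 polynomial has several sign changes. Shift to $w=t-2=(\varphi-\varphi^{-1})^2$, or use a Sturm sequence on $(2,\infty)$.

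A simpler route: set $d_n=\varphi^n-\varphi^{-n}$ and use $(\varphi^n+\varphi^{-n})^2=d_n^2+4$. Then
\[
s(\varphi)>1\iff 2\sum_{n=1}^{5}d_n^4+7\sum_{n=1}^{5}d_n^2>22 .
\]
The left side vanishes at $\varphi=1$, is strictly increasing on $(1,\infty)$, tends to $\infty$, and is invariant under $\varphi\mapsto 1/\varphi$. This gives the unique crossing and the reciprocity $\varphi_1\varphi_2=1$ at once. It also spares you from checking coefficient signs of $P'K-PK'$.
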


\begin{rem}
Our previous studies~\cite{Akin2024,akin2024-CJP,Akin-Muk-24JSTAT,Akin2023Chaos} established that in $d$-dimensional dynamical systems corresponding to lattice models on a CT, the existence of a repelling fixed point implies the presence of additional fixed points. Consequently, the detection of a repelling fixed point serves as an indicator of phase transition in the model.
\end{rem}

\subsection{Stability of the one-dimensional dynamical system}

To assess the local stability of a fixed point in the one-dimensional system given in \eqref{dynamicals3aa3}, we use the linearization criterion \(|F'(Z^*)|\), where \(Z^*\) denotes the fixed point. This scalar derivative test replaces the more general Jacobian-based analysis that is applicable to higher-dimensional systems.

The relevant dynamical map is given by
\begin{align}\label{eq:1D-Stability1a}
  F(Z):=
\frac{\bigl(A(\varphi)+33\varphi^{20}Z+3A(\varphi)Z^{2}+C(\varphi)Z^{3}\bigr)^{3}}
     {\bigl(C(\varphi)+3A(\varphi)Z+33\varphi^{20}Z^{2}+A(\varphi)Z^{3}\bigr)^{3}}.  
\end{align}

A direct computation shows that \(F(1)=1\), so \(Z^*=1\) is indeed the fixed point.

Applying the quotient rule yields the derivative at this fixed point as follows:
\[
F'(1)=\frac{9\bigl(C(\varphi)-11\varphi^{20}\bigr)}{4A(\varphi)+C(\varphi)+33\varphi^{20}}.
\]
Here the auxiliary coefficients are
\begin{align*}
A(\varphi)&:=\varphi^{10}\bigl(1+\varphi^{2}+\varphi^{4}+\varphi^{6}+\varphi^{8}+\varphi^{10}
          +\varphi^{12}+\varphi^{14}+\varphi^{16}+\varphi^{18}+\varphi^{20}\bigr),\\[4pt]
C(\varphi)&:=1+\varphi^{4}+\varphi^{8}+\varphi^{12}+\varphi^{16}+\varphi^{20}
          +\varphi^{24}+\varphi^{28}+\varphi^{32}+\varphi^{36}+\varphi^{40}.
\end{align*}
The derivative admits the factorisation
\[
F'(1)=\frac{9(\varphi-1)^{2}(\varphi+1)^{2}(\varphi^{2}+1)^{2}\,Q(\varphi)}{P(\varphi)},
\]
with
\[
\begin{aligned}
Q(\varphi)&=\varphi^{32}+3\varphi^{28}+6\varphi^{24}+10\varphi^{20}+15\varphi^{16}
         +10\varphi^{12}+6\varphi^{8}+3\varphi^{4}+1,\\
P(\varphi)&=\varphi^{40}+\varphi^{36}+\varphi^{32}+4\varphi^{30}+5\varphi^{28}+4\varphi^{26}
         +5\varphi^{24}+4\varphi^{22}+38\varphi^{20}+4\varphi^{18}+5\varphi^{16}\\
       &\quad+4\varphi^{14}+5\varphi^{12}+4\varphi^{10}+\varphi^{8}+\varphi^{4}+1.
\end{aligned}
\]
Both polynomials \(Q(\varphi)\) and \(P(\varphi)\) are strictly positive for every real \(\varphi\). Consequently,
\[
F'(1)\geq 0\quad\text{for all }\varphi,\qquad
F'(1)=0\iff\varphi=\pm1.
\]

For a one-dimensional map, the fixed point \(Z=1\) is
\begin{itemize}
    \item \textbf{repelling} when \(|F'(1)|>1\),
    \item \textbf{attracting} when \(|F'(1)|<1\),
    \item \textbf{neutral} when \(|F'(1)|=1\).
\end{itemize}
Since \(F'(1)\ge 0\), the repelling condition reduces to \(F'(1)>1\), i.e.
\[
\frac{9\bigl(C(\varphi)-11\varphi^{20}\bigr)}{4A(\varphi)+C(\varphi)+33\varphi^{20}}>1
\quad\Longleftrightarrow\quad
A(\varphi)-2C(\varphi)+33\varphi^{20}<0.
\]
\begin{figure}[!htbp]
\centering
\includegraphics[width=65mm]{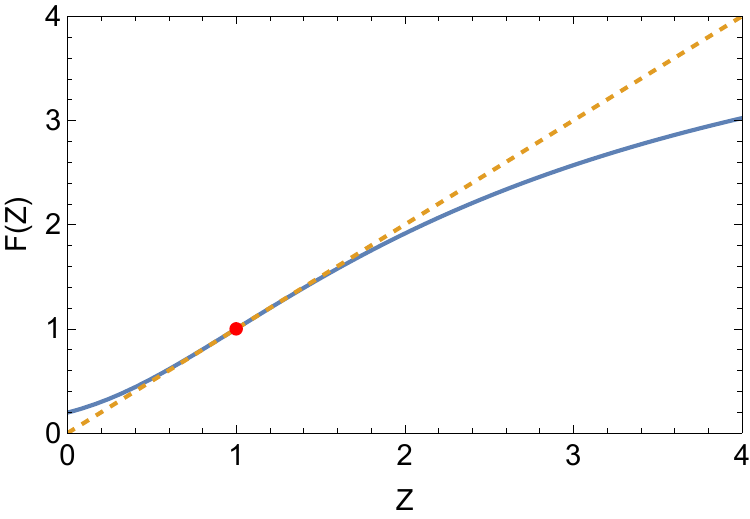}
\hspace{0.5cm}
\includegraphics[width=65mm]{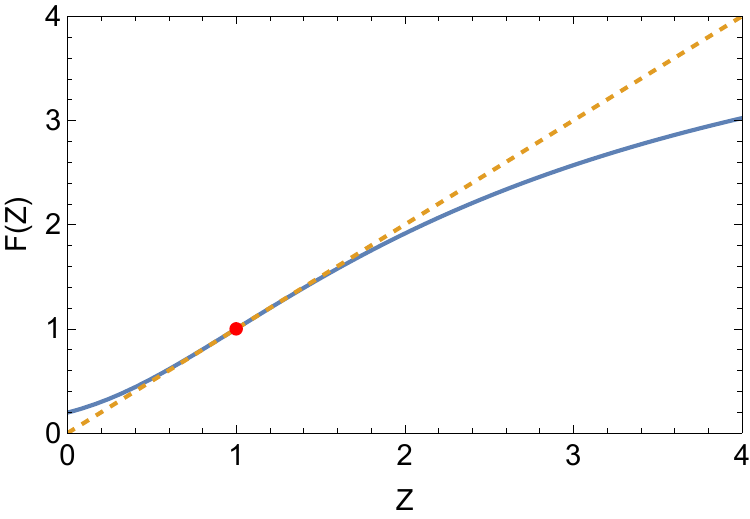}
\caption{(Color online) Plots of the one-dimensional map $F(Z)$ (solid curve) and the diagonal $y = Z$ (dashed line) at the two critical parameter values: 
$\varphi = 0.901258081777163$ (left) and $\varphi = 1.10956009185308$ (right). 
At these values, $|F'(1)| = 1$, indicating a change in the stability of the fixed point $Z = 1$. 
For $0 < \varphi < 0.901258081777163$ and $\varphi > 1.10956009185308$, the fixed point is repelling; 
for $0.901258081777163 < \varphi < 1.10956009185308$, it is attracting.}
\label{F(Z)-Stability-1}
\end{figure}

\paragraph{Consistency with the Jacobian analysis}
From the explicit formula for \(F'(1)\) one verifies that \(F'(1)\) is continuous
for \(\varphi>0\), with the equation \(F'(1)=1\) having exactly two positive solutions
\[
\varphi_1 \approx 0.901258081777163,
\qquad
\varphi_2 \approx 1.10956009185308.
\]

Moreover,
\[
F'(1)>1 \quad\text{for}\quad 0<\varphi<\varphi_1 \ \text{and}\ \varphi>\varphi_2,
\qquad
F'(1)<1 \quad\text{for}\quad \varphi_1<\varphi<\varphi_2.
\]
Since \(Z=1\) is repelling if and only if \(|F'(1)|>1\), this yields the repelling
parameter set
\[
\varphi \in (0,\varphi_1)\,\cup\,(\varphi_2,\infty).
\]
This coincides exactly with the repelling intervals stated in
Theorem~\ref{thm:repelling-l0} for the equilibrium \(\ell_0^{(5)}\) of the full system.
Thus the one-dimensional reduction and the Jacobian eigenvalue analysis provide
consistent descriptions of local stability of \(\ell_0^{(5)}\).
\subsection{Numerical results}

The qualitative behavior of the map $F$ at the two critical values $\varphi = \varphi_1$ and $\varphi = \varphi_2$ is illustrated in Figure~\ref{F(Z)-Stability-1}. The figure shows $F(Z)$ together with the diagonal $y = Z$.

The phase transition occurs precisely within the parameter intervals where the fixed point is repelled. This is demonstrated numerically in Fig. ~\ref{Stability-1}.
\begin{figure}[!htbp]
\centering
\includegraphics[width=65mm]{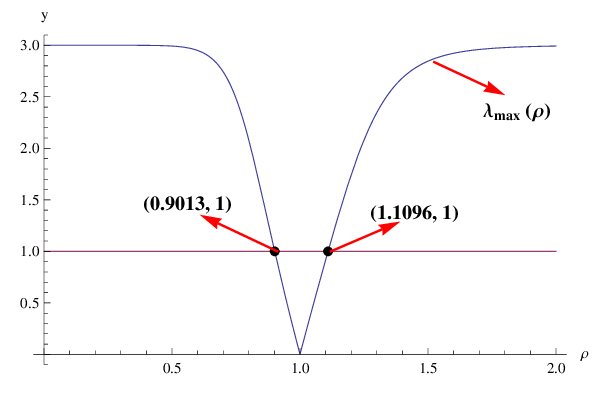}
\caption{(Color online) Intersection points between $\lambda_{\max}(\varphi)$ and the line $y = 1$. In the region where $\lambda_{\max}(\varphi) > 1$, the fixed point $Z = 1$ is repelling, corresponding to the phase transition region of the model.}
\label{Stability-1}
\end{figure}
\begin{figure}[!htbp]
\centering
\includegraphics[width=65mm]{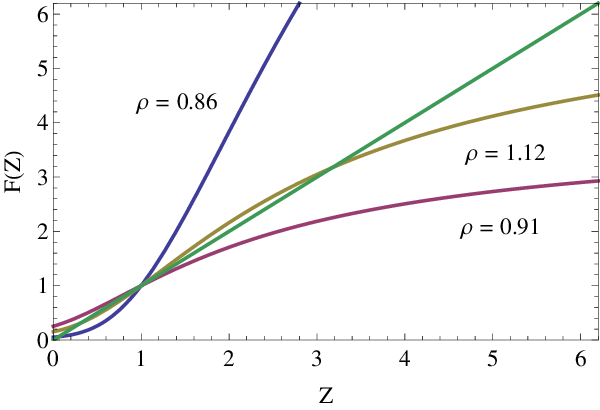}
\caption{(Color online) Phase diagram showing the stable fixed points of the one-dimensional rational map defined in \eqref{eq:1D-Stability1a} as a function of $\varphi$. The repelling regime of the $Z = 1$ fixed point (shaded/darker region) aligns with the phase transition boundaries identified in the previous figures.}
\label{fig:Stability1}
\end{figure}
The graph in Figure~\ref{fig:Stability1} illustrates the fixed points of the rational function defined in~\eqref{eq:1D-Stability1a}. 
As shown, three distinct fixed points exist for $\varphi = 0.86$ and $\varphi = 1.12$, whereas only a single fixed point exists for $\varphi = 0.91$. 
Hence, Theorem~\ref{thm:repelling-l0} confirms the presence of a phase transition in this specified region. 
Consistent with the results reported in our previous studies \cite{Akin2024,Akin-Muk-24JSTAT,Akin2023Chaos}, 
the phase transition occurs in both the ferromagnetic ($J>0$) and antiferromagnetic ($J<0$) regimes, 
The phase-transition region widens accordingly.

\subsection{Stability analysis of other fixed points}

For certain values of $\varphi$, the system \eqref{dynamicals3aa1}–\eqref{dynamicals3aa3} 
admits nontrivial fixed points that are distinct from $\ell_0^{(5)}$. In particular, for 
$\varphi \approx 1.12434$ we find an approximate fixed point 
$\mathbf{X} = (X_{-5},X_{-4},\dots,X_{-1},X_{1},\dots,X_{5},Z)^{\mathsf T} \in \mathbb{R}^{11}$ 
whose components satisfy
\begin{align}
X_i &= \left(\frac{\varphi^{-i} + 4\varphi^{i}}{5}\right)^{3}, 
\qquad &&\text{for } i=\pm1,\pm2,\pm3,\pm4,\pm5, \label{Zi_eq1} \\[4pt]
Z\; &= \left(\frac{\displaystyle\sum_{i=-5}^{5} \varphi^{5+i} X_i}
{\displaystyle\sum_{i=-5}^{5} \varphi^{5-i} X_i}\right)^{3},
\qquad &&\text{with } X_0=1. \label{Zi_eq2}
\end{align}
At $\varphi \approx 1.12434$, the expression in \eqref{Zi_eq2} evaluates numerically to 
$Z \approx 4$, so this fixed point can be represented, to a good approximation, as
\[
\mathbf{X} \approx \bigl( X_{-5},X_{-4},\dots,X_{-1},X_{1},\dots,X_{5},4 \bigr)^{\mathsf T},
\qquad
X_t = \left(\frac{1 + 4\varphi^{2t}}{5\varphi^{t}}\right)^{3},
\quad
t \in \{-5,-4,\dots,-1,1,2,\dots,5\}.
\]

Equivalently, we may write this compactly in vector form as
\[
\mathbf{X} \approx
\begin{pmatrix}
\displaystyle \left(\frac{1 + 4\varphi^{2t}}{5\varphi^{t}}\right)^{3}
\end{pmatrix}_{\substack{t=-5,\dots,-1\\[1pt]t=1,\dots,5}}
\oplus
\begin{pmatrix} 4 \end{pmatrix}.
\]

Figure~\ref{fig:Stability-1aa} shows the graph of the rational function in \eqref{Zi_eq2} together with the diagonal line $Z_{n+1} = Z_n$ for the parameter value $\varphi = 1.12434$. The intersection points of these two curves represent the fixed points of a one-dimensional dynamical system. Among them, one fixed point is located near $Z = 4$. At this point, the derivative of the rational function has a modulus strictly less than one, so orbits starting sufficiently close to $Z \approx 4$ converge to it under iteration. Hence, this fixed point is attractive in this study.

The fixed point at $Z = 1$, on the other hand, exhibits the opposite behavior for the same parameter value. There, the modulus of the derivative exceeds one, meaning that small deviations from $Z = 1$ increase with each iteration. Consequently, $Z = 1$ is a repelling (unstable) fixed point: trajectories initialized near $Z = 1$ are pushed away and eventually tend toward the stable fixed point at $Z \approx 4$.

\begin{figure}[!htbp]
\centering
\includegraphics[width=65mm]{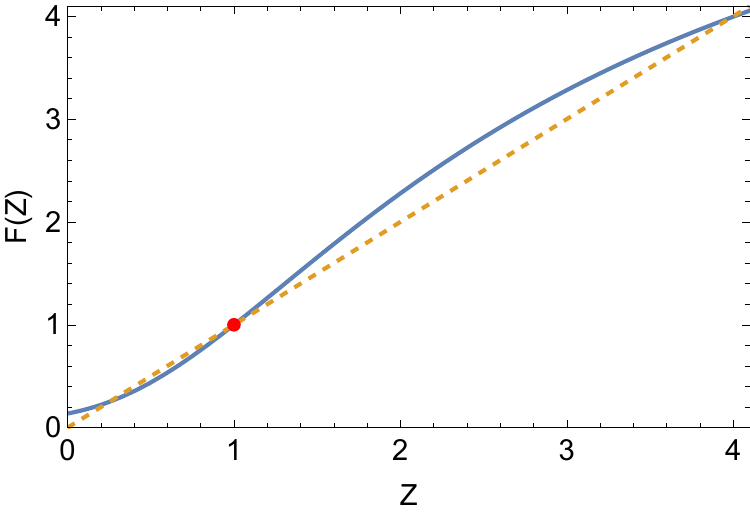}
\caption{(Color online) Fixed points of the rational function defined in \eqref{Zi_eq2} for $\varphi = 1.12434$. One of the fixed points is approximately $Z = 4$, and an analysis of its stability shows that this fixed point is attractive.}
\label{fig:Stability-1aa}
\end{figure}

In principle, one could study the local stability of this nontrivial fixed point by evaluating the Jacobian of the full system at $\mathbf{X}$ and analyzing the eigenvalues. However, in practice, the resulting formulas become prohibitively complicated and offer no additional qualitative insight beyond the stability results obtained above. Therefore, we refrain from performing this calculation here.

\section{Extremality and Non-Extremality of the Disordered Phase in the MSIM via Tree-Indexed Markov Chains}
\label{sec:Extremality}

For single-spin models on a CT, the extremality of the disordered
phase can be analyzed using a tree-indexed Markov chain determined by a
single stochastic matrix~\cite{Mukhamedov-2022}. In mixed-spin settings,
however, the corresponding description typically involves more than one
stochastic matrix. Following the strategy developed in our earlier work
\cite{Akin2024,akin2024-CJP,Akin-Muk-24JSTAT,Akin2023Chaos,Akin2025Chaos,Has-Far-2021},
we construct for the present model two explicit matrices that encode the
nearest-neighbor interaction structures. This allows us to employ the
Kesten--Stigum criterion~\cite{KestenStigum1966} to identify parameter regimes
in which the disordered phase is guaranteed to be nonextremal (reconstruction). Since the
Kesten--Stigum condition provides only a sufficient test for non-extremality,
we complement it with a Dobrushin-type analysis based on the associated
Dobrushin coefficients, which yield a sufficient condition for extremality (non-reconstruction).

Therefore, this section consists of two subsections. We now show how to construct two different stochastic matrices in two stages.

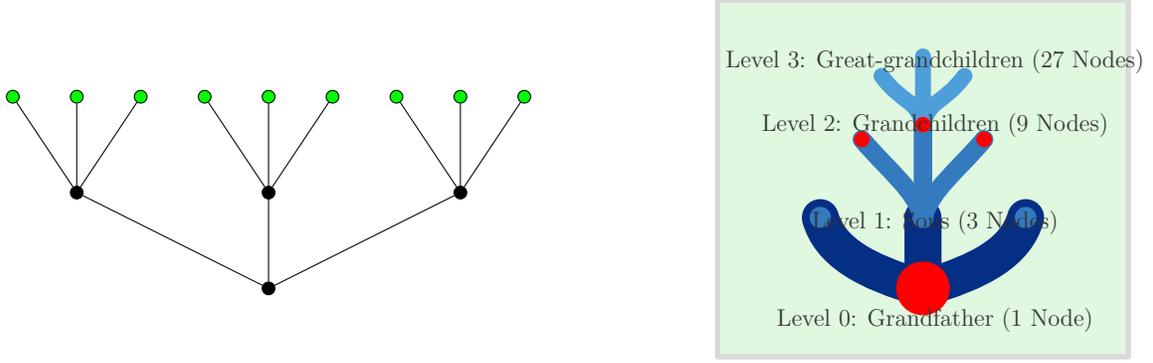
\begin{figure}[t]
\centering

\begin{tikzpicture}

\begin{scope}[
  scale=0.85, transform shape,
  grow=up,
  edge from parent/.style={draw},
  level 1/.style={sibling distance=3cm, level distance=1.5cm},
  level 2/.style={sibling distance=1cm, level distance=1.5cm},
  every node/.style={circle, draw, fill=black, inner sep=2pt}
]
\node (Root) {}
    child { node {}
        child { node[fill=green] {} }
        child { node[fill=green] {} }
        child { node[fill=green] {} }
    }
    child { node {}
        child { node[fill=green] {} }
        child { node[fill=green] {} }
        child { node[fill=green] {} }
    }
    child { node {}
        child { node[fill=green] {} }
        child { node[fill=green] {} }
        child { node[fill=green] {} }
    };
\end{scope}

\begin{scope}[xshift=8.7cm, scale=0.65, transform shape,
  x=1cm,y=1cm,
  line cap=round,
  line join=round,
  font=\fontsize{14pt}{16pt}\selectfont
]

\definecolor{darkBlue}{rgb}{0.02,0.18,0.52}
\definecolor{lightBlue}{rgb}{0.20,0.48,0.75}
\definecolor{veryLightBlue}{rgb}{0.30,0.62,0.85}
\definecolor{bg}{rgb}{0.88,0.97,0.88}
\definecolor{labelColor}{gray}{0.20}
\definecolor{frameGray}{gray}{0.85}

\fill[bg] (-4.2,-1.4) rectangle (4.2,5.9);
\draw[frameGray,line width=2pt] (-4.2,-1.4) rectangle (4.2,5.9);

\coordinate (root) at (0,0);
\coordinate (pL) at (-2.1,1.45);
\coordinate (pM) at (0,1.45);
\coordinate (pR) at (2.1,1.45);
\coordinate (qL) at (-1.25,3.05);
\coordinate (qM) at (0,3.35);
\coordinate (qR) at (1.25,3.05);
\coordinate (rL) at (-0.85,4.35);
\coordinate (rM) at (0,4.75);
\coordinate (rR) at (0.85,4.35);

\draw[darkBlue,line width=14pt]
  (root) .. controls (-1.2,0.35) and (-1.9,0.75) .. (pL);
\draw[darkBlue,line width=14pt] (root) -- (pM);
\draw[darkBlue,line width=14pt]
  (root) .. controls ( 1.2,0.35) and ( 1.9,0.75) .. (pR);

\draw[lightBlue,line width=7pt]
  (pM) .. controls (-0.2,2.05) and (-0.65,2.35) .. (qL);
\draw[lightBlue,line width=7pt] (pM) -- (qM);
\draw[lightBlue,line width=7pt]
  (pM) .. controls ( 0.2,2.05) and ( 0.65,2.35) .. (qR);

\draw[veryLightBlue,line width=6pt]
  (qM) .. controls (-0.15,3.90) and (-0.40,3.75) .. (rL);
\draw[veryLightBlue,line width=6pt] (qM) -- (rM);
\draw[veryLightBlue,line width=6pt]
  (qM) .. controls ( 0.15,3.90) and ( 0.40,3.75) .. (rR);

\fill[red] (root) circle[radius=0.55];
\fill[lightBlue] (pL) circle[radius=0.22];
\fill[lightBlue] (pM) circle[radius=0.22];
\fill[lightBlue] (pR) circle[radius=0.22];
\fill[red] (qL) circle[radius=0.16];
\fill[red] (qM) circle[radius=0.16];
\fill[red] (qR) circle[radius=0.16];
\fill[veryLightBlue] (rL) circle[radius=0.14];
\fill[veryLightBlue] (rM) circle[radius=0.14];
\fill[veryLightBlue] (rR) circle[radius=0.14];

\node[text=labelColor] at (0.25, 1.35) {Level 1: Sons (3 Nodes)};
\node[text=labelColor] at (0.25, 3.35) {Level 2: Grandchildren (9 Nodes)};
\node[text=labelColor] at (0.25, 4.65) {Level 3: Great-grandchildren (27 Nodes)};
\node[text=labelColor] at (0.25,-0.65) {Level 0: Grandfather (1 Node)};

\end{scope}

\end{tikzpicture}

\caption{(Left) A three-branch Cayley tree growing upward from the root. (Right) A three-branch Cayley tree of depth three: the root has three children (level 1), each of which has three children (level 2, 9 nodes in total), and each of those has three children (level 3, 27 nodes in total).}
\label{fig:cayley-two}
\end{figure}

\subsection{Construction of the stochastic matrices corresponding to the fixed point $\ell_0^{(5)}$ for $k=3$ and $s=5$}

To ensure compliance with the configuration definition given in Equation \eqref{eq:spin-configuration-def}, we define the corresponding stochastic matrices as follows:

\subsubsection*{Case I}\label{extremality-Case1}
Consider the configuration in which the $(n-2)$th shell of the CT
carries spins taking values in the set $\Phi$, whereas the $(n-1)$th shell is
assigned spins from set $\Psi$.  
In this setting, the corresponding transition matrix
$\mathbb{P} = (P_{ij})$ is given by
\begin{equation}\label{StocMat1}
P_{ij}
=
\frac{\exp\!\bigl(ij\,\beta J + \widetilde{h}_{j}\bigr)}
     {\displaystyle\sum_{u \in \Psi}
      \exp\!\bigl(iu\,\beta J + \widetilde{h}_{u}\bigr)},
\qquad i \in \Phi,\ j \in \Psi.
\end{equation}

Thus, our aim is to determine the parameter region in which the disordered phase corresponding to the fixed point $\ell_0^{(5)}$ is extremal. Substituting $Z = 1$ into the expression in~\eqref{StocMat1} yields the
following the form of the transition matrix.

\begin{align}\label{eq:StocM-s=5-P}
\mathbb{P}^{(5)}(\varphi)
=
\begin{pmatrix}
\dfrac{\varphi^{10}}{\varphi^{10}+1} & \dfrac{1}{\varphi^{10}+1}\\[4pt]
\dfrac{\varphi^{8}}{\varphi^{8}+1} & \dfrac{1}{\varphi^{8}+1}\\[2pt]
\vdots & \vdots\\[2pt]
\dfrac{\varphi^{2}}{\varphi^{2}+1} & \dfrac{1}{\varphi^{2}+1}\\[4pt]
\dfrac{1}{2} & \dfrac{1}{2}\\[4pt]
\dfrac{1}{1+\varphi^{2}} & \dfrac{\varphi^{2}}{1+\varphi^{2}}\\[2pt]
\vdots & \vdots\\[2pt]
\dfrac{1}{1+\varphi^{8}} & \dfrac{\varphi^{8}}{1+\varphi^{8}}\\[4pt]
\dfrac{1}{1+\varphi^{10}} & \dfrac{\varphi^{10}}{1+\varphi^{10}}
\end{pmatrix}.  
\end{align}

\subsubsection*{Case II}\label{extremality-Case2}
Suppose that the \((n-1)\)th shell of the CT is decorated 
with spins from the set \(\Psi\), 
while the \(n\)th shell is decorated with spins from the set 
\(\Phi\), where $n$ is an even positive integer.  

In this case, the corresponding probability transition matrix 
\(\mathbb{Q}=(Q_{ij})\) is given by
\begin{equation}\label{StocMat2} 
Q_{ij} \;=\; 
\frac{\exp\!\bigl(ij \beta J+\pmb{h}_{j}\bigr)}
     {\sum\limits_{u\in \Phi} 
      \exp\!\bigl(i u \beta J+\pmb{h}_{u}\bigr)},
\end{equation}
where \(i \in \Psi \) and 
\(j \in \Phi\).

From this definition \eqref{StocMat2}, we get the following probability transition matrix
\[
\mathbb{Q} \;=\; \bigl(Q_{ij}\bigr)_{i\in\{-\tfrac12,\tfrac12\},\,j=-5}^{5}
\;=\;
\begin{pmatrix}
\displaystyle\frac{\exp\!\bigl(\tfrac{5\beta J}{2}+\pmb{h}_{-5}\bigr)}{Z_{-}} &
\displaystyle\frac{\exp\!\bigl(\tfrac{4\beta J}{2}+\pmb{h}_{-4}\bigr)}{Z_{-}} &
\cdots &
\displaystyle\frac{\exp\!\bigl(-\tfrac{5\beta J}{2}+\pmb{h}_{5}\bigr)}{Z_{-}} \\[10pt]
\displaystyle\frac{\exp\!\bigl(\tfrac{-5\beta J}{2}+\pmb{h}_{-5}\bigr)}{Z_{+}} &
\displaystyle\frac{\exp\!\bigl(\tfrac{-4\beta J}{2}+\pmb{h}_{-4}\bigr)}{Z_{+}} &
\cdots &
\displaystyle\frac{\exp\!\bigl(\tfrac{5\beta J}{2}+\pmb{h}_{5}\bigr)}{Z_{+}}
\end{pmatrix},
\]
where \[
Z_{-} \;=\; \sum_{i=-5}^{5} \exp\!\Bigl(-\tfrac{i\beta J}{2} + \pmb{h}_i\Bigr), 
\qquad
Z_{+} \;=\; \sum_{i=-5}^{5} \exp\!\Bigl(\tfrac{i\beta J}{2} + \pmb{h}_i\Bigr).
\]
The transition probability matrix $\mathbb{Q}$ is a $2\times 11$ row-stochastic matrix.
Its rows are indexed by the spin states $i\in\Psi$ in the $(n\!-\!1)$st shell, while its
columns are indexed by the spin values $j\in\Phi$ in the $n$th shell. Introducing
\[
\varphi = e^{\beta J/2}, \qquad
X_j = e^{\mathbf{h}_j-\mathbf{h}_0}\quad (X_0=1),
\]
and evaluating at the fixed point \eqref{IP-FP-1a} (so that $X_j$ take their fixed-point
values), the entries of $\mathbb{Q}$ are
\begin{equation}\label{eq:Q_entries}
\mathbb{Q}_{-,j}=\frac{X_j\varphi^{-j}}{S_-}, \qquad
\mathbb{Q}_{+,j}=\frac{X_j\varphi^{\,j}}{S_+},
\qquad j=-5,-4,\ldots,5,
\end{equation}
where
\[
S_-=\sum_{i=-5}^{5} X_i\,\varphi^{-i}, \qquad
S_+=\sum_{i=-5}^{5} X_i\,\varphi^{\,i}.
\]
For $s=5$, this can be written schematically as
\[
\mathbb{Q}=
\left(
\begin{array}{ccccccc}
\frac{X_{-5}\varphi^{5}}{S_-} & \cdots & \frac{X_{-1}\varphi}{S_-} & \frac{X_{0}}{S_-}
& \frac{X_{1}\varphi^{-1}}{S_-} & \cdots & \frac{X_{5}\varphi^{-5}}{S_-} \\
\frac{X_{-5}\varphi^{-5}}{S_+} & \cdots & \frac{X_{-1}\varphi^{-1}}{S_+} & \frac{X_{0}}{S_+}
& \frac{X_{1}\varphi}{S_+} & \cdots & \frac{X_{5}\varphi^{5}}{S_+}
\end{array}
\right).
\]

Thus, we can write the $2\times 11$ stochastic matrix $\mathbb{Q}^{(5)}(\varphi)$ in the form
\begin{align}\label{eq:Stochas-Mat-Q(5)}
\mathbb{Q}^{(5)}(\varphi)
=
\frac{1}{S(\varphi)}
\left(
\begin{array}{ccccccccccc}
h_5(\varphi) & h_4(\varphi) & h_3(\varphi) & h_2(\varphi) & h_1(\varphi) & 1
             & g_1(\varphi) & g_2(\varphi) & g_3(\varphi) & g_4(\varphi) & g_5(\varphi)\\
g_5(\varphi) & g_4(\varphi) & g_3(\varphi) & g_2(\varphi) & g_1(\varphi) & 1
             & h_1(\varphi) & h_2(\varphi) & h_3(\varphi) & h_4(\varphi) & h_5(\varphi)
\end{array}
\right),
\end{align}
where
\[
h_n(\varphi)=\frac{(\varphi^{2n}+1)^3}{8\varphi^{2n}},\qquad 
g_n(\varphi)=\frac{(\varphi^{2n}+1)^3}{8\varphi^{4n}},\qquad S^{(5)}=1+\frac18\sum_{n=1}^{5}\frac{(\varphi^{2n}+1)^4}{\varphi^{4n}}\qquad n=1,\dots,5
\]

Note that
\[
h_n(\varphi)+g_n(\varphi)=\frac{(\varphi^{2n}+1)^4}{8\varphi^{4n}},
\]
so each row sum equals
\[
1+\frac18\sum_{n=1}^{5}\frac{(\varphi^{2n}+1)^4}{\varphi^{4n}},
\]
and the prefactor makes $\mathbb{Q}^{(5)}(\varphi)$ row-stochastic.

Referring to the right panel of Fig.~\ref{fig:cayley-two}, the entries of the matrix $\mathbb{P}^{(5)}(\varphi)$ are transition probabilities from a red vertex to a blue vertex, whereas the entries of $\mathbb{Q}^{(5)}(\varphi)$ are transition probabilities from a blue vertex to a red vertex. Accordingly, the entries of the product $\mathbb{P}^{(5)}(\varphi)\mathbb{Q}^{(5)}(\varphi)$ give the conditional probabilities between two consecutive red vertices in Fig. ~\ref{fig:cayley-two} (right panel), while the entries of $\mathbb{Q}^{(5)}(\varphi)\mathbb{P}^{(5)}(\varphi)$ yield the conditional probabilities from a first-level blue vertex to any blue third-level

\subsection{The extremality criterion (non-reconstruction) for a TISGM}\label{subsec:extremality}

In this subsection, we compute the Dobrushin coefficients for the stochastic matrices \(\mathbb{P}\) and \(\mathbb{Q}\), whose elements are given by Equations \eqref{StocMat1} and \eqref{StocMat2}, respectively.

The extremality of the disordered phase of the mixed--spin $(s,\tfrac12)$
Ising model on a CT of order $k$ can be analysed by means of
tree--indexed Markov chains. To recall the notion of a tree--indexed
Markov chain, let $V$ be the vertex set of a tree, let $\nu$ be a
probability measure on a finite single--site state space
$\mathcal{B}=\{1,2,\dots,q\}$, and let
$P=(P_{ij})_{i,j\in\mathcal{B}}$ be a transition matrix on
$\mathcal{B}$. A tree--indexed Markov chain is then a random field
$X:V\to\mathcal{B}$ obtained by first sampling the root $X(x_0)$
according to $\nu$, and then, for each vertex $v\neq x_0$, sampling
$X(v)$ from the transition probabilities determined by the state of its
parent, independently of the rest of the tree (see, for example,
\cite{Mossel2001,MosselPeres2003,KuelskeRozikov2017,Martin2003,MartinelliSinclairWeitz2007,
Moraal1978}).  

In order to investigate the extremality criterion for a TISGM, we recall
the quantities $\kappa$ and $\gamma$ introduced in~\cite{KuelskeRozikov2017}.
Let $\mu^{s}_{T_x}$ denote the finite-volume Gibbs measure on the subtree
$T_x$ in which the parent of $x$ has its spin fixed to $s$, and the
configuration on the bottom boundary of $T_x$ (i.e., on
$\partial T_x \setminus \{\text{parent of }x\}$) is specified by a boundary
condition $\eta$ (see, for example, \cite{Mossel2001,MosselPeres2003,KuelskeRozikov2017}).

For two probability measures $\mu_1$ and $\mu_2$ on the single-site space
$\mathcal{B}$, we denote by $\|\mu_1-\mu_2\|_x$ the variation distance
between their projections onto the spin at a vertex $x$, namely
\[
\|\mu_1 - \mu_2\|_x
=
\frac{1}{2}\sum_{i\in\mathcal{B}}
\bigl|\mu_1(\sigma(x)=i) - \mu_2(\sigma(x)=i)\bigr|.
\]
For a configuration $\eta$ and a site $x$, we write $\eta_{x,s}$ for the
configuration obtained from $\eta$ by setting the spin at $x$ to $s$.

\begin{defin}[{\cite[p.~146]{KuelskeRozikov2017}}]\label{def:kappa-gamma}
For a family of Gibbs measures $\{\mu^{s}_{T_x}\}$, define
$\kappa \equiv \kappa(\{\mu^{s}_{T_x}\})$ and
$\gamma \equiv \gamma(\{\mu^{s}_{T_x}\})$ by
\begin{itemize}
\item[(1)]
\[
\kappa
=
\sup_{z\in V}\;
\max_{w\prec z}\;
\max_{s_1,s_1',\,s_2,s_2'}
\sqrt{\,
\bigl\|\mu^{s_1}_{T_z} - \mu^{s_1'}_{T_z}\bigr\|_z\,
\bigl\|\mu^{s_2}_{T_w} - \mu^{s_2'}_{T_w}\bigr\|_w
\,},
\]
where $w\prec z$ indicates that $w$ is a neighbor (child) of $z$.

\item[(2)]
\[
\gamma
=
\sup_{A\subset V}\;
\max_{y\in A}\;
\max_{z\in A:\,z\sim y}\;
\max_{s_1,s_1',\,s_2,s_2'}
\bigl\|\mu^{\eta_{y,s_1}}_{A} - \mu^{\eta_{y,s_1'}}_{A}\bigr\|_z\,
\bigl\|\mu^{\eta_{z,s_2}}_{A\setminus\{z\}} -
       \mu^{\eta_{z,s_2'}}_{A\setminus\{z\}}\bigr\|_z,
\]
where the supremum is taken over all finite sets $A\subset V$, all boundary
conditions $\eta$, all neighbours $z\in A$ of $y\in A$, all neighbours
$w\in A$ of $z$, and all spins
$s_1,s_1'\in\Phi$, $s_2,s_2'\in\Psi$.
\end{itemize}
\end{defin}

Let \(\mathbb{H}\) be a stochastic matrix of size \(m \times n\). The \textbf{Dobrushin coefficient} \(\tau_{\mathbb{H}}\), which quantifies the contractive property of \(\mathbb{H}\), is defined by
\[
\tau_{\mathbb{H}} = \frac{1}{2} \max_{i, j} \sum_{\ell=1}^{n} \bigl| H_{i\ell} - H_{j\ell} \bigr|.
\]
Here, \(\ell\) indexes the elements of the single-site state space.

It is known (see~\cite[Theorem~3.3']{KuelskeRozikov2017}) that a sufficient
condition for the extremality of a translation-invariant Gibbs measure on a
CT of order $k$ is
\begin{equation}\label{eq:kappa-gamma-condition}
k\,\kappa\,\gamma < 1.
\end{equation}
Moreover, in our setting the constant $\kappa$ admits the particularly simple
representation
\[
\kappa = \sqrt{\tau_{\mathbb{P}}\tau_{\mathbb{Q}}},
\]
We define the Dobrushin coefficients associated with the stochastic matrices $\mathbb{P}^{(s)}(\varphi)$ and $\mathbb{Q}^{(s)}(\varphi)$ given in \eqref{eq:Matrix-P-arbitrary-s} and \eqref{eq:Matrix-Q-arbitrary-s}, respectively,  by
\begin{align}\label{eqs:Dobrushin-def1a}
\tau_{\mathbb{P}^{(s)}}(\varphi)
=
\frac{1}{2}\max_{i,j}
\left\{
\sum_{\ell=1}^{2}
\bigl|P_{i\ell} - P_{j\ell}\bigr|
\right\},
\qquad
\tau_{\mathbb{Q}^{(s)}}(\varphi)
=
\frac{1}{2}\max_{i,j}
\left\{
\sum_{\ell=1}^{2s+1}
\bigl|Q_{i\ell} - Q_{j\ell}\bigr|
\right\}.  
\end{align}

As illustrated in the right panel of Fig. In Figure  ~\ref{fig:cayley-two}, we consider a three-branch Cayley tree in which transitions between generations follow a bipartite structure. Specifically, matrix $\mathbb{P}$ represents the transitions from red vertices to blue vertices at the subsequent level, while $\mathbb{Q}$ denotes the transitions from these blue vertices to the red vertices in the following level. To guarantee the extremality of the disordered phase, the following contractive condition must be satisfied:
\begin{equation}\label{inequ:Dobrushin-3}
    k \cdot \tau_{\mathbb{P}^{(s)}}(\varphi)\cdot \tau_{\mathbb{Q}^{(s)}}(\varphi) < 1,
\end{equation}
where $k=3$ is the branching factor and $\tau_{\mathbb{P}^{(s)}}(\varphi), \tau_{\mathbb{Q}^{(s)}}(\varphi)$ are the corresponding Dobrushin coefficients. This inequality ensures that the information decay across consecutive levels (red-to-blue-to-red) is sufficiently strong to suppress the geometric growth of the tree, preventing the boundary conditions from influencing the root.

\begin{lem}\label{lem:Dobrushin-general}
Let $s\in\{1,2,3,4,5\}$ and let $\tau_{\mathbb{P}^{(s)}}(\varphi)$ and 
$\tau_{\mathbb{Q}^{(s)}}(\varphi)$ denote the Dobrushin coefficients of the
stochastic matrices $\mathbb{P}^{(s)}(\varphi)$ and $\mathbb{Q}^{(s)}(\varphi)$
corresponding to the mixed spin $(s,\tfrac12)$ model. Then, for all
$\varphi>0$, from \eqref{eqs:Dobrushin-def1a}, one has
\begin{equation}\label{eq:tauP-s-phi-def}
\tau_{\mathbb{P}^{(s)}}(\varphi)
=
\frac{\bigl|\varphi^{2s} - 1\bigr|}{\varphi^{2s} + 1},
\end{equation}
and
\begin{equation}\label{eq:tauQ-s-phi-def}
\tau_{\mathbb{Q}^{(s)}}(\varphi)
=
\frac{1}{8\,S_s(\varphi)}
\sum_{n=1}^{s}
\frac{\bigl(\varphi^{2n}+1\bigr)^3\,\bigl|\varphi^{2n}-1\bigr|}{\varphi^{4n}},
\qquad
S_s(\varphi)
=
1+\frac18\sum_{n=1}^{s}\frac{\bigl(\varphi^{2n}+1\bigr)^4}{\varphi^{4n}}.
\end{equation}
In particular, for $s=5$ the explicit expressions reduce to
\begin{equation}\label{eq:Dob-Coff-P(5)-a}
\tau_{\mathbb{P}^{(5)}}(\varphi)
= \frac{|\varphi^{10} - 1|}{\varphi^{10} + 1},
\end{equation}
and
\begin{equation}\label{eq:Dob-Coff-Q(5)-a}
\tau_{\mathbb{Q}^{(5)}}(\varphi)
=
\frac{1}{8\,S(\varphi)}
\sum_{n=1}^{5}
\frac{(\varphi^{2n}+1)^3\,\bigl|\varphi^{2n}-1\bigr|}{\varphi^{4n}},
\qquad
S_5(\varphi)
=
1+\frac18\sum_{n=1}^{5}\frac{(\varphi^{2n}+1)^4}{\varphi^{4n}}.
\end{equation} 
Moreover, the following properties hold.
\begin{enumerate}
  \item[(i)] \textbf{Symmetry.} For all $\varphi>0$,
  \begin{equation}\label{eq:tau-symmetry}
  \tau_{\mathbb{P}^{(s)}}(\varphi)
  =
  \tau_{\mathbb{P}^{(s)}}\!\left(\frac{1}{\varphi}\right),
  \qquad
  \tau_{\mathbb{Q}^{(s)}}(\varphi)
  =
  \tau_{\mathbb{Q}^{(s)}}\!\left(\frac{1}{\varphi}\right).
  \end{equation}
  
  \item[(ii)] \textbf{Bounds and values at $\varphi=1$.} For all $\varphi>0$,
  \begin{equation}\label{eq:tau-bounds}
  0 \,\le\, \tau_{\mathbb{P}^{(s)}}(\varphi) < 1,
  \qquad
  0 \,\le\, \tau_{\mathbb{Q}^{(s)}}(\varphi) < 1,
  \end{equation}
  and
  \begin{equation}\label{eq:tau-at-1}
  \tau_{\mathbb{P}^{(s)}}(1) = 0,
  \qquad
  \tau_{\mathbb{Q}^{(s)}}(1) = 0.
  \end{equation}
  
  \item[(iii)] \textbf{Continuity and asymptotics.} The functions
  $\varphi\mapsto\tau_{\mathbb{P}^{(s)}}(\varphi)$ and 
  $\varphi\mapsto\tau_{\mathbb{Q}^{(s)}}(\varphi)$ are continuous on $(0,\infty)$.
  Furthermore,
  \begin{equation}\label{eq:tau-asymptotics}
  \lim_{\varphi\to 0} \tau_{\mathbb{P}^{(s)}}(\varphi)
  =
  \lim_{\varphi\to\infty} \tau_{\mathbb{P}^{(s)}}(\varphi)
  = 1,
  \end{equation}
  and
  \begin{equation}\label{eq:tauQ-asymptotics}
  \lim_{\varphi\to 0} \tau_{\mathbb{Q}^{(s)}}(\varphi)
  =
  \lim_{\varphi\to\infty} \tau_{\mathbb{Q}^{(s)}}(\varphi)
  = 1.
  \end{equation}
\end{enumerate}
In particular, for each $s\in\{1,2,3,4,5\}$,
$\tau_{\mathbb{P}^{(s)}}(\varphi)$ and $\tau_{\mathbb{Q}^{(s)}}(\varphi)$ are
symmetric with respect to the point $\varphi=1$, vanish at $\varphi=1$, and
approach $1$ as $\varphi\to 0$ or $\varphi\to\infty$.
\end{lem}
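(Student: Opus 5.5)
The plan is to compute both Dobrushin coefficients directly from the explicit fixed-point matrices. Everything reduces to identifying which pair of rows realises the maximum in \eqref{eqs:Dobrushin-def1a}.

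\textbf{The coefficient $\tau_{\mathbb{P}^{(s)}}$.} By the pattern of \eqref{eq:StocM-s=5-P}, row $i\in\Phi$ of $\mathbb{P}^{(s)}$ is $\bigl(p_i,1-p_i\bigr)$ with
\[
p_i=\frac{\varphi^{-2i}}{\varphi^{-2i}+1}.
\]
Hence for rows $i,j$ we get $\tfrac12\sum_\ell|P_{i\ell}-P_{j\ell}|=|p_i-p_j|$. Since $p_i$ is monotone in $i$, the maximum is attained at the extreme rows $i=-s$, $j=s$. This gives
\[
\Bigl|\frac{\varphi^{2s}}{\varphi^{2s}+1}-\frac{1}{\varphi^{2s}+1}\Bigr|=\frac{|\varphi^{2s}-1|}{\varphi^{2s}+1},
\]
which is \eqref{eq:tauP-s-phi-def}.

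\textbf{The coefficient $\tau_{\mathbb{Q}^{(s)}}$.} Here $\mathbb{Q}^{(s)}$ has only two rows, so there is nothing to maximise. We just evaluate $\tfrac12\sum_j|Q_{-,j}-Q_{+,j}|$ from the general-$s$ analogue of \eqref{eq:Stochas-Mat-Q(5)}. First we confirm the entries and the normaliser $S_s$, as the paper did for $s=5$. With $X_j=\bigl((\varphi^{2|j|}+1)/(2\varphi^{|j|})\bigr)^3$, the two rows are mirror images: the entry pair $\{h_n,g_n\}$ appears in swapped positions at columns $\pm n$, and the middle entry is $1$. So
\[
\sum_j|Q_{-,j}-Q_{+,j}|=\frac{2}{S_s}\sum_{n=1}^s|h_n-g_n|,
\qquad
h_n-g_n=\frac{(\varphi^{2n}+1)^3(\varphi^{2n}-1)}{8\varphi^{4n}},
\]
which gives \eqref{eq:tauQ-s-phi-def}. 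The $s=5$ formulas are then the specialisation.

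\textbf{Properties (i)--(iii).}
\begin{itemize}
\item[(i)] For symmetry, substitute $\varphi\mapsto1/\varphi$. In $\tau_{\mathbb{P}}$, multiplying numerator and denominator by $\varphi^{2s}$ gives invariance. For $\tau_{\mathbb{Q}}$, note that $(\varphi^{-2n}+1)^3|\varphi^{-2n}-1|\varphi^{4n}=(\varphi^{2n}+1)^3|\varphi^{2n}-1|\varphi^{-4n}$, and similarly that $(\varphi^{2n}+1)^4\varphi^{-4n}$ is invariant. Term-by-term invariance follows.
\item[(ii)] For the bounds, $\tau_{\mathbb{P}}<1$ because $|\varphi^{2s}-1|<\varphi^{2s}+1$. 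For $\tau_{\mathbb{Q}}$, each summand satisfies $|h_n-g_n|<h_n+g_n$, and the denominator $S_s$ contains the extra $1$. Therefore
\[
\tau_{\mathbb{Q}}\le\frac{\sum_n(h_n+g_n)}{1+\sum_n(h_n+g_n)}<1.
\]
Both coefficients vanish at $\varphi=1$ because every factor $|\varphi^{2n}-1|$ does.
\item[(iii)] Continuity is clear, since we have finite sums of continuous functions with positive denominators. For the asymptotics as $\varphi\to\infty$, $\tau_{\mathbb{P}}\to1$ immediately. For $\tau_{\mathbb{Q}}$, the $n=s$ term dominates: the numerator behaves like $\varphi^{4s}/8$ and $S_s$ behaves like $\varphi^{4s}/8$, so the ratio tends to $1$. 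The limit $\varphi\to0$ then follows from (i).
\end{itemize}

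The only step needing care is the maximisation over row pairs for $\mathbb{P}^{(s)}$ (monotonicity of $p_i$), together with correctly identifying the general-$s$ form of $\mathbb{Q}^{(s)}$ and $S_s$, including the leading-order comparison in the asymptotics. The rest is direct algebra.
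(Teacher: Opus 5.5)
Your proposal is correct and follows the paper's own argument. You use monotonicity of the first column of $\mathbb{P}^{(s)}$ to place the maximizing row pair at $i=\pm s$, and the mirror pairing of $h_n$ and $g_n$ across columns $\pm n$ in the two rows of $\mathbb{Q}^{(s)}$ to obtain $\tau_{\mathbb{Q}^{(s)}}=S_s^{-1}\sum_{n}|h_n-g_n|$. The only differences are that you treat general $s$ directly and actually verify properties (i)--(iii), which the paper merely asserts follow immediately. Those verifications are the term-by-term $\varphi\mapsto 1/\varphi$ invariance, the strict bound via the extra $1$ in $S_s$, and the leading-order $\varphi^{4s}/8$ comparison for the asymptotics.
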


\begin{proof}
We first treat the case $s=5$, for which the matrices
$\mathbb{P}^{(5)}(\varphi)$ and $\mathbb{Q}^{(5)}(\varphi)$ are given explicitly
in \eqref{eq:StocM-s=5-P} and \eqref{eq:Stochas-Mat-Q(5)}.

For $\mathbb{P}^{(5)}(\varphi)$ each row has the form $(a_i,1-a_i)$, so that
\[
\tau_{\mathbb{P}^{(5)}}(\varphi)
= \frac{1}{2}\max_{i,j} \left\{\sum_{\ell=1}^{2} |P_{i\ell} - P_{j\ell}|\right\}
= \max_{i,j} |a_i - a_j|,
\]
where $a_i = P_{i1}=\frac{1}{1+\varphi^{2i}}$ for $i\in\{-5,-4,\dots,5\}$.

Since $a_i$ is strictly decreasing in $i$ (for $\varphi>0$), the extreme values
occur at $i=-5$ and $i=5$, namely
\[
a_{\max} = a_{-5}=\frac{\varphi^{10}}{\varphi^{10} + 1},
\qquad
a_{\min} = a_{5}=\frac{1}{\varphi^{10} + 1}.
\]
Hence
\[
\max_{i,j} |a_i - a_j|
=
\left|\frac{\varphi^{10}}{\varphi^{10} + 1}
      -\frac{1}{\varphi^{10} + 1}\right|
=
\frac{|\varphi^{10} - 1|}{\varphi^{10} + 1},
\]
which yields \eqref{eq:Dob-Coff-P(5)-a}.

For the matrix $\mathbb{Q}^{(5)}(\varphi)$ in \eqref{eq:Stochas-Mat-Q(5)}, the
Dobrushin coefficient is
\[
\tau_{\mathbb{Q}^{(5)}}(\varphi)
=
\frac{1}{2}\max_{i,j}\left\{\sum_{\ell=1}^{11}
\bigl|\mathbb{Q}^{(5)}_{i\ell}(\varphi) - \mathbb{Q}^{(5)}_{j\ell}(\varphi)\bigr|\right\}.
\]
Since $\mathbb{Q}^{(5)}(\varphi)$ has only two rows, the maximum over $(i,j)$
reduces to this pair, and we obtain
\begin{equation}\label{eq:tauQ5-general}
\tau_{\mathbb{Q}^{(5)}}(\varphi)
=
\frac{1}{2}\sum_{\ell=1}^{11}
\bigl|\mathbb{Q}^{(5)}_{1\ell}(\varphi) - \mathbb{Q}^{(5)}_{2\ell}(\varphi)\bigr|.
\end{equation}

From \eqref{eq:Stochas-Mat-Q(5)}, we see that the differences between the two rows come in pairs
$h_n-g_n$ and $g_n-h_n$, whereas the middle column contributes zero. Hence
\eqref{eq:tauQ5-general} simplifies to
\[
\tau_{\mathbb{Q}^{(5)}}(\varphi)
=
\frac{1}{S(\varphi)}
\sum_{n=1}^{5} \bigl|h_n(\varphi) - g_n(\varphi)\bigr|.
\]
A direct calculation gives
\[
h_n(\varphi)-g_n(\varphi)
=
\frac{(\varphi^{2n}+1)^3}{8\varphi^{4n}}\bigl(\varphi^{2n}-1\bigr),
\]
so that
\[
\bigl|h_n(\varphi)-g_n(\varphi)\bigr|
=
\frac{(\varphi^{2n}+1)^3}{8\varphi^{4n}}\bigl|\varphi^{2n}-1\bigr|.
\]
Substituting this into the previous expression yields
\[
\tau_{\mathbb{Q}^{(5)}}(\varphi)
=
\frac{1}{8\,S(\varphi)}
\sum_{n=1}^{5}
\frac{(\varphi^{2n}+1)^3\,\bigl|\varphi^{2n}-1\bigr|}{\varphi^{4n}},
\]
with
\[
S(\varphi)
=
1+\frac18\sum_{n=1}^{5}\frac{(\varphi^{2n}+1)^4}{\varphi^{4n}},
\]
which is precisely \eqref{eq:Dob-Coff-Q(5)-a}.

For general $s\in\{1,2,3,4,5\}$ the matrices $\mathbb{P}^{(s)}(\varphi)$ and
$\mathbb{Q}^{(s)}(\varphi)$ have the same structure, with $10$ replaced by $2s$
and the sums run from $n=1$ to $s$. Repeating the above arguments with $5$
replaced by $s$ yields the general formulas
\eqref{eq:tauP-s-phi-def}--\eqref{eq:tauQ-s-phi-def}. The symmetry
\eqref{eq:tau-symmetry}, the bounds \eqref{eq:tau-bounds}, the values at
$\varphi=1$ in \eqref{eq:tau-at-1}, and the asymptotics
\eqref{eq:tau-asymptotics}--\eqref{eq:tauQ-asymptotics} follow immediately from
these explicit representations. This completes the proof.
\end{proof}

\begin{thm}\label{thm:extremality-Dobrushin-s}
Let $s\in\{1,2,3,4,5\}$ and consider the mixed spin $(s,\tfrac12)$
Ising model on CT of order $k=3$. Let
$\mathbb{P}^{(s)}(\varphi)$ and $\mathbb{Q}^{(s)}(\varphi)$ be the
stochastic matrices associated with the disordered
translation-invariant Gibbs measure $\mu_0^{(s)}$, and let
$\tau_{\mathbb{P}^{(s)}}(\varphi)$ and $\tau_{\mathbb{Q}^{(s)}}(\varphi)$
be their Dobrushin coefficients, given by
\eqref{eq:tauP-s-phi-def}--\eqref{eq:tauQ-s-phi-def} in
Lemma~\ref{lem:Dobrushin-general}.

Assume that the parameters $\kappa$ and $\gamma$ satisfy
\[
\kappa^2
=
\tau_{\mathbb{P}^{(s)}}(\varphi)\,\tau_{\mathbb{Q}^{(s)}}(\varphi),
\qquad
\gamma = \kappa.
\]
Then the Dobrushin-type condition \eqref{eq:kappa-gamma-condition}
for the extremality of $\mu_0^{(s)}$ reduces to
\begin{equation}\label{eq:Dobrushin-reduced-s}
3\,\tau_{\mathbb{P}^{(s)}}(\varphi)\,\tau_{\mathbb{Q}^{(s)}}(\varphi) < 1.
\end{equation}

Moreover, for each $s\in\{1,2,3,4,5\}$ there exist numbers
\[
0 < \varphi_1^{(s)} < 1 < \varphi_2^{(s)},
\qquad
\varphi_1^{(s)}\,\varphi_2^{(s)} = 1,
\]
such that
\begin{equation}\label{eq:Dobrushin-E-region}
3\,\tau_{\mathbb{P}^{(s)}}(\varphi)\,\tau_{\mathbb{Q}^{(s)}}(\varphi) < 1
\quad\Longleftrightarrow\quad
\varphi \in \bigl(\varphi_1^{(s)},\,\varphi_2^{(s)}\bigr).
\end{equation}
For all $\varphi$ in this interval the corresponding disordered
translation-invariant Gibbs measure $\mu_0^{(s)}$ is an

The endpoints $\varphi_1^{(s)}$ and $\varphi_2^{(s)}$ are the (numerically
determined) solutions of
\[
3\,\tau_{\mathbb{P}^{(s)}}(\varphi)\,\tau_{\mathbb{Q}^{(s)}}(\varphi) = 1,
\]
and the symmetry
\[
\tau_{\mathbb{P}^{(s)}}(\varphi)\,\tau_{\mathbb{Q}^{(s)}}(\varphi)
=
\tau_{\mathbb{P}^{(s)}}\!\left(\tfrac{1}{\varphi}\right)
\tau_{\mathbb{Q}^{(s)}}\!\left(\tfrac{1}{\varphi}\right)
\]
implies $\varphi_1^{(s)} = 1/\varphi_2^{(s)}$.
\end{thm}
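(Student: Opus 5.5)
The proof has three parts: reduce the criterion, settle the symmetry, and then show that the sublevel set $\{3\tau_{\mathbb P^{(s)}}\tau_{\mathbb Q^{(s)}}<1\}$ is a single interval around $\varphi=1$.

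\textbf{Reduction.} Substituting $\kappa^2=\tau_{\mathbb P^{(s)}}\tau_{\mathbb Q^{(s)}}$ and $\gamma=\kappa$ into \eqref{eq:kappa-gamma-condition} with $k=3$ gives $k\kappa\gamma=3\kappa^2=3\tau_{\mathbb P^{(s)}}\tau_{\mathbb Q^{(s)}}$. This is exactly \eqref{eq:Dobrushin-reduced-s}, and no further argument is needed for this part.

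\textbf{Symmetry and the structure of the set.} Set $G_s(\varphi)=3\tau_{\mathbb P^{(s)}}(\varphi)\tau_{\mathbb Q^{(s)}}(\varphi)$. By Lemma~\ref{lem:Dobrushin-general}(i), $G_s(\varphi)=G_s(1/\varphi)$, so it suffices to work on $\varphi\ge1$. By parts (ii)--(iii) of that lemma, $G_s$ is continuous, $G_s(1)=0$, and $G_s(\varphi)\to3>1$ as $\varphi\to\infty$. The intermediate value theorem therefore gives some $\varphi_2^{(s)}>1$ with $G_s(\varphi_2^{(s)})=1$. Setting $\varphi_1^{(s)}=1/\varphi_2^{(s)}$ then gives the product relation and the reciprocal root on the left. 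The equivalence \eqref{eq:Dobrushin-E-region} additionally needs uniqueness of the crossing, which is what the rest of the argument must supply.

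\textbf{Monotonicity on $(1,\infty)$, the main obstacle.} I will prove that both factors are strictly increasing for $\varphi>1$; strict monotonicity of $G_s$ then makes $\varphi_2^{(s)}$ unique and yields the equivalence.

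For $\mathbb P$ this is immediate: $\tau_{\mathbb P^{(s)}}=\tanh(s\beta J)$ written in $\varphi$ is $(\varphi^{2s}-1)/(\varphi^{2s}+1)$, which is increasing.

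For $\mathbb Q$, put $u_n=\varphi^{2n}$, $c_n=(u_n+1)^3/(8u_n^2)$, $N=\sum_n c_n(u_n-1)$ and $D=1+\sum_n c_n(u_n+1)$, so that $\tau_{\mathbb Q^{(s)}}=N/D$. I will try two routes.

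First, a probabilistic one. $\tau_{\mathbb Q}$ equals $\mathbb E_+[\operatorname{sign}\sigma]-\mathbb E_-[\operatorname{sign}\sigma]$ over the row laws. Equivalently, it is the total-variation distance between the tilted measures $X_j\varphi^{\pm j}$. I would try to show that the tilt strength grows with $\varphi$ while the weights $X_j$ shift mass outward.

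Second, if that becomes messy, a direct computation. Differentiate and show $N'D-ND'>0$ for $u>1$, written in the single variable $v=\varphi^2>1$. Since $s\le5$ this is a finite polynomial check. Expanding the numerator of the derivative as a polynomial in $v-1$ should show that all coefficients are nonnegative, and this can be verified symbolically for each $s\in\{1,\dots,5\}$.

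I expect this positivity verification to be the hardest step. As a fallback, I would establish only that $G_s$ crosses $1$ exactly once on $(1,\infty)$. To do this I would clear denominators in $G_s=1$ and apply Descartes' rule of signs in $v$, which gives a single sign change. The numerical values of $\varphi_{1,2}^{(s)}$ are then obtained as the roots of this equation.
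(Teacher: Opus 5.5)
Your reduction ($k\kappa\gamma=3\kappa^2$) and your use of Lemma~\ref{lem:Dobrushin-general}(i) for the reciprocity $\varphi_1^{(s)}=1/\varphi_2^{(s)}$ are exactly what the paper uses. The two routes differ on the interval structure \eqref{eq:Dobrushin-E-region}. The paper gives no analytic single-crossing argument. For $s=1$ it solves the inequality in closed form: it clears denominators to $t^4-2t^3-10t^2-2t+1<0$ with $t=\varphi^2$ and substitutes $u=t+1/t$. For $s=2,\dots,5$ it computes the roots of $3\tau_{\mathbb P^{(s)}}\tau_{\mathbb Q^{(s)}}=1$ with Mathematica and simply observes that there are exactly two positive ones. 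Your argument (continuity, value $0$ at $\varphi=1$, limit $3$, strict monotonicity on $(1,\infty)$) actually proves the ``if and only if'' statement, and it does so for every $s\ge 1$ rather than only $s\le 5$. Numerics are then needed only for the values of the endpoints. The paper's route, in exchange, gives explicit endpoints for $s=1$.

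The step you leave open, monotonicity of $\tau_{\mathbb Q^{(s)}}$, is true and needs no case-by-case polynomial check. Put $t=\ln\varphi>0$, $C_n=\cosh(nt)$, $S_n=\sinh(nt)$ and $T_n=\tanh(nt)$. Since $X_{\pm n}=C_n^3$ at $\ell_0^{(s)}$, you have
\[
\tau_{\mathbb Q^{(s)}}=\frac{2\sum_{n=1}^{s}S_nC_n^3}{1+2\sum_{n=1}^{s}C_n^4},
\]
and the numerator of its $t$-derivative is
\begin{align*}
&2\sum_{n}n\bigl(C_n^4+3S_n^2C_n^2\bigr)+4\sum_{n}n\,C_n^6\\
&\quad+4\sum_{n<m}C_n^4C_m^4\bigl[n(1+3T_n^2-4T_nT_m)+m(1+3T_m^2-4T_nT_m)\bigr].
\end{align*}
For $n<m$ one has $0<T_n<T_m<1$, so $1+3T_m^2-4T_nT_m>1-T_m^2>0$. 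Since $m>n$, the bracket therefore exceeds $n\bigl(2-2T_nT_m+3(T_m-T_n)^2\bigr)>0$. Together with $\tau_{\mathbb P^{(s)}}=\tanh(st)$, this gives strict monotonicity of $G_s$ on $(1,\infty)$ and closes your argument.

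Three small corrections:
\begin{enumerate}
\item $\tau_{\mathbb P^{(s)}}=\tanh(s\beta J/2)$, not $\tanh(s\beta J)$.
\item $\mathbb E_+[\operatorname{sign}\sigma]-\mathbb E_-[\operatorname{sign}\sigma]$ equals $2\tau_{\mathbb Q^{(s)}}$, not $\tau_{\mathbb Q^{(s)}}$.
\item Your Descartes fallback cannot work as stated. The cleared equation is invariant under $v\mapsto 1/v$, so its positive roots come in the pair $v_2^{\pm1}$, and there must be at least two sign changes in $v$. For $s=1$ the polynomial is $v^4-2v^3-10v^2-2v+1$. A usable version would have to show exactly two sign changes, which together with the symmetry pins down $v_2$ and $1/v_2$; the monotonicity argument above makes this unnecessary.
\end{enumerate}
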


\begin{lem}\cite[Theorem~5.3]{Akin-Muk-24JSTAT}
Let $\varphi>0$. Then
\[
\frac{3(\varphi^4 - 1)^2}{\varphi^8 + 4\varphi^6 + 14\varphi^4 + 4\varphi^2 + 1} - 1 < 0
\]
holds if and only if
\begin{equation}\label{ineq:extrmal-reg-s=1}
\sqrt{\frac{1+\sqrt{13} - \sqrt{10 + 2\sqrt{13}}}{2}}
<
\varphi
<
\sqrt{\frac{1+\sqrt{13} + \sqrt{10 + 2\sqrt{13}}}{2}}.
\end{equation}
\end{lem}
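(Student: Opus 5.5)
The plan is to clear the denominator, reduce to a palindromic quartic in $t=\varphi^2$, and solve it with the substitution $u=t+1/t$.

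First, the denominator $\varphi^8+4\varphi^6+14\varphi^4+4\varphi^2+1$ is a sum of positive terms, so it is strictly positive for $\varphi>0$. Multiplying through by it, the inequality is equivalent to
\[
3(\varphi^4-1)^2<\varphi^8+4\varphi^6+14\varphi^4+4\varphi^2+1 .
\]
Expanding $3(\varphi^4-1)^2=3\varphi^8-6\varphi^4+3$ and collecting terms gives $2\varphi^8-4\varphi^6-20\varphi^4-4\varphi^2+2<0$, that is,
\[
t^4-2t^3-10t^2-2t+1<0,\qquad t=\varphi^2>0 .
\]

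Second, this quartic is palindromic. Dividing by $t^2>0$ and setting $u=t+1/t$, we have $t^2+t^{-2}=u^2-2$, so the inequality becomes
\[
u^2-2u-12<0 .
\]
This holds exactly for $1-\sqrt{13}<u<1+\sqrt{13}$. Since $u=t+1/t\ge 2>1-\sqrt{13}$ for $t>0$, the condition reduces to the single inequality
\[
t+\frac1t<c:=1+\sqrt{13}.
\]

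Third, because $t>0$ and $c>2$, the inequality $t+1/t<c$ is equivalent to $t^2-ct+1<0$. This quadratic has two positive real roots, so it holds precisely for
\[
\frac{c-\sqrt{c^2-4}}{2}<t<\frac{c+\sqrt{c^2-4}}{2}.
\]
A direct computation gives $c^2-4=14+2\sqrt{13}-4=10+2\sqrt{13}$. Finally, $t\mapsto\sqrt t$ is increasing on $(0,\infty)$, so taking square roots yields exactly the bounds in \eqref{ineq:extrmal-reg-s=1}.

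Every step is an equivalence, so the argument gives both directions of the "if and only if". The only point needing care is the reduction from the $u$-interval to the single upper bound, which uses $u\ge2$; everything else is routine algebra. As a consistency check, the product of the two endpoint values of $t$ is $1$, which matches the $\varphi\leftrightarrow1/\varphi$ symmetry in Theorem~\ref{thm:extremality-Dobrushin-s}.
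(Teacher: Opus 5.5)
Your proposal is correct and follows the paper's proof essentially step for step. Both clear the positive denominator, reduce to the palindromic quartic $t^4-2t^3-10t^2-2t+1<0$ in $t=\varphi^2$, and substitute $u=t+1/t\ge 2$ to discard the lower bound. Both then solve $t+1/t<1+\sqrt{13}$ as a quadratic in $t$ and take square roots; the only cosmetic difference is that the paper writes $u^2-2u-12$ as $(u-1)^2-13$.
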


\begin{proof}
Since the denominator
\[
\varphi^8 + 4\varphi^6 + 14\varphi^4 + 4\varphi^2 + 1 > 0
\qquad(\varphi>0)
\]
is strictly positive, the inequality
\[
\frac{3(\varphi^4 - 1)^2}{\varphi^8 + 4\varphi^6 + 14\varphi^4 + 4\varphi^2 + 1} - 1 < 0
\]
is equivalent to
\begin{equation}\label{eq:ineq-poly-phi}
3(\varphi^4 - 1)^2
<
\varphi^8 + 4\varphi^6 + 14\varphi^4 + 4\varphi^2 + 1.
\end{equation}
Introduce the variable $t=\varphi^2>0$. Then \eqref{eq:ineq-poly-phi} becomes
\[
3(t^2 - 1)^2
<
t^4 + 4t^3 + 14t^2 + 4t + 1,
\]
or, after rearranging,
\begin{equation}\label{eq:F-neg}
F(t) := t^4 - 2t^3 - 10t^2 - 2t + 1 < 0.
\end{equation}
Dividing by $t^2>0$ we obtain
\[
\frac{F(t)}{t^2}
= t^2 + \frac{1}{t^2} - 2\Bigl(t + \frac{1}{t}\Bigr) - 10.
\]
Set
\[
u := t + \frac{1}{t} \ge 2.
\]
Then $t^2 + t^{-2} = u^2 - 2$, and therefore
\[
\frac{F(t)}{t^2}
= (u^2 - 2) - 2u - 10
= u^2 - 2u - 12
= (u-1)^2 - 13.
\]
Thus \eqref{eq:F-neg} is equivalent to
\[
(u-1)^2 - 13 < 0
\quad\Longleftrightarrow\quad
(u-1)^2 < 13
\quad\Longleftrightarrow\quad
1-\sqrt{13} < u < 1+\sqrt{13}.
\]
Since $u\ge 2$, the left inequality is automatically satisfied, and we are left with
\[
u = t + \frac{1}{t} < 1+\sqrt{13}.
\]

The equation $t + t^{-1} = 1+\sqrt{13}$ has exactly two positive roots
\[
t_{1,2}
=
\frac{1+\sqrt{13} \pm \sqrt{(1+\sqrt{13})^2 - 4}}{2}
=
\frac{1+\sqrt{13} \pm \sqrt{10 + 2\sqrt{13}}}{2},
\]
and the inequality $t + t^{-1} < 1+\sqrt{13}$ is equivalent to
\[
t_1 < t < t_2.
\]
Recalling that $t=\varphi^2$, we obtain
\[
t_1 < \varphi^2 < t_2
\quad\Longleftrightarrow\quad
\sqrt{t_1} < \varphi < \sqrt{t_2},
\]
that is,
\[
\sqrt{\frac{1+\sqrt{13} - \sqrt{10 + 2\sqrt{13}}}{2}}
<
\varphi
<
\sqrt{\frac{1+\sqrt{13} + \sqrt{10 + 2\sqrt{13}}}{2}}.
\]
This proves the claimed equivalence.
\end{proof}

Here one can easily show that \[
\sqrt{\frac{1+\sqrt{13} - \sqrt{10 + 2\sqrt{13}}}{2}}\,
\sqrt{\frac{1+\sqrt{13} + \sqrt{10 + 2\sqrt{13}}}{2}} = 1.
\]

\begin{align*}\label{ineq:extrmal-reg-s=1}\nonumber
\sqrt{\frac{1+\sqrt{13} - \sqrt{10 + 2\sqrt{13}}}{2}}
&< \varphi <
\sqrt{\frac{1+\sqrt{13} + \sqrt{10 + 2\sqrt{13}}}{2}} \\[4pt]
&\approx 0.478 \;< \varphi <\; 2.092.
\end{align*}

\begin{rem}
Lemma~\ref{lem:Dobrushin-general} implies that inequality
\eqref{eq:Dobrushin-reduced-s} holds on the interval
\eqref{ineq:extrmal-reg-s=1}. Hence, in the case $s=1$ the measure
$\mu_0^{(1)}$ describes precisely the parameter region
$0.478 \;< \varphi <\; 2.092$ in which the disordered phase is extremal.
This conclusion is in full agreement with \cite[Theorem~5.3]{Akin-Muk-24JSTAT}.
\end{rem}

\subsection{ Extremality of the disordered phase associated with $(2,1/2)$-MSIM on the CT of order three}
We start from the explicit formulas for $s=2$. From  the general formulas
\eqref{eq:tauP-s-phi-def}--\eqref{eq:tauQ-s-phi-def}, we have
\[
\tau_{\mathbb{P}^{(2)}}(\varphi)
=
\frac{\bigl|\varphi^{4}-1\bigr|}{\varphi^{4}+1},
\qquad
\tau_{\mathbb{Q}^{(2)}}(\varphi)
=
\bigl|\varphi^{2}-1\bigr|\,
\frac{(\varphi^{4}+1)^{3}(\varphi^{2}+1) + \varphi^{4}(\varphi^{2}+1)^{3}}
     {8\varphi^{8} + (\varphi^{2}+1)^{4}\varphi^{4} + (\varphi^{4}+1)^{4}}.
\]

From the formula $3\,\tau_{\mathbb{P}^{(2)}}(\varphi)\,\tau_{\mathbb{Q}^{(2)}}(\varphi)$. Thus, the final explicit form is
\[
3\,\tau_{\mathbb{P}^{(2)}}(\varphi)\,\tau_{\mathbb{Q}^{(2)}}(\varphi)
=\frac{3\,\bigl|\varphi^{2}-1\bigr|^{2}(\varphi^{2}+1)^{2}\left((\varphi^{4}+1)^{3} + \varphi^{4}(\varphi^{2}+1)^{2}\right)}
     {(\varphi^{4}+1)\,\bigl[8\varphi^{8} + (\varphi^{2}+1)^{4}\varphi^{4} + (\varphi^{4}+1)^{4}\bigr]}.
\]

For the case \( s = 2 \), the disordered phase is extremal if the Dobrushin condition
\[
3\,\tau_{\mathbb{P}^{(2)}}(\varphi)\,\tau_{\mathbb{Q}^{(2)}}(\varphi) - 1 < 0
\]
holds, where \(\tau_{\mathbb{P}^{(2)}}(\varphi)\) and \(\tau_{\mathbb{Q}^{(2)}}(\varphi)\) are defined in \eqref{eq:tauP-s-phi-def}--\eqref{eq:tauQ-s-phi-def}. The inequality describes the parameter range (in \(\varphi\)) in which the disordered phase is extremal.

The boundary of this extremality region is defined by the equation
\begin{equation}
\label{eq:extremality}
\mathcal{D}^{(s)}(\varphi):=3\,\tau_{\mathbb{P}^{(2)}}(\varphi)\,\tau_{\mathbb{Q}^{(2)}}(\varphi)-1 =0.
\end{equation}
Because an exact analytic solution of \eqref{eq:extremality} is extremely difficult to obtain, the region was computed numerically using \textsc{Mathematica}  \cite{Mathematica}.
We found exactly two positive solutions, $\varphi_1^{(2)}$ and $\varphi_2^{(2)}$, which satisfy the following approximate values:
\begin{align*}
\varphi_1^{(2)} & \approx 0.6763353, \\
\varphi_2^{(2)} & \approx 1.4785566.
\end{align*}
Owing to the symmetry transformation $\varphi \mapsto 1/\varphi$, these solutions are related by their product being equal to unity: $\varphi_1^{(2)}\,\varphi_2^{(2)} = 1$. This implies that $\varphi_1^{(2)} = 1/\varphi_2^{(2)}$.

Consequently, for $s=2$ the
disordered phase $\mu_0^{(2)}$ corresponding to the fixed point $\ell_0^{(2)}$ is extremal
precisely in the interval
\[
\varphi \in \bigl(\varphi_1^{(2)},\,\varphi_2^{(2)}\bigr)
\approx (0.6763353,\,1.4785566).
\]

Using \textit{Mathematica} \cite{Mathematica} and considering Algorithm~\ref{alg:dobrushin_extremality}, we generated the following table for the spin values $s=1,2,3,4,5$.

\begin{table}[h]
\centering
\begin{tabular}{c|c|c}
\hline
$s$ & Interval where $3\tau_{\mathbb{P}^{(s)}}(\varphi)\tau_{\mathbb{Q}^{(s)}}(\varphi)-1<0$ \\ \hline
1 & $0.478 \lesssim \varphi \lesssim 2.092$ \\
2 & $0.676 \lesssim \varphi \lesssim 1.479$ \\
3 & $0.765 \lesssim \varphi \lesssim 1.308$ \\
4 & $0.815 \lesssim \varphi \lesssim 1.227$ \\
5 & $0.847 \lesssim \varphi \lesssim 1.180$ \\
\hline
\end{tabular}
\caption{Numerical intervals in $\varphi>0$ where $3\tau_{\mathbb{P}^{(s)}}(\varphi)\tau_{\mathbb{Q}^{(s)}}(\varphi)-1<0$ for $s=1,2,3,4,5$.}\label{tab:extremality-1-5}
\end{table}

\begin{thm}\label{thm:extremal-intervals-s1-5}
Let $\mu_0^{(s)}$ denote the disordered translation-invariant Gibbs measure of the mixed-spin $(s,\tfrac12)$ Ising model on the CT of order three for $s \in \{1,2,3,4,5\}$.
\par
Then, for each value of $s$, the measure $\mu_0^{(s)}$ is extremal precisely on the set of parameters $\varphi>0$ for which the following inequality holds
\[
3\,\tau_{\mathbb{P}^{(s)}}(\varphi)\,\tau_{\mathbb{Q}^{(s)}}(\varphi) - 1 < 0.
\]
The corresponding extremality regions, determined numerically by solving $3\,\tau_{\mathbb{P}^{(s)}}(\varphi)\,\tau_{\mathbb{Q}^{(s)}}(\varphi)=1$, are given by the following $\varphi$ intervals (the bounds are provided up to the displayed precision):

\begin{enumerate}
    \item[(i)] For $s=1$: $\mu_0^{(1)}$ is extremal for $0.478 \;\lesssim\; \varphi \;\lesssim\; 2.092$ \cite[Theorem~5.3]{Akin-Muk-24JSTAT}.
    \item[(ii)] For $s=2$: $\mu_0^{(2)}$ is extremal for $0.676 \;\lesssim\; \varphi \;\lesssim\; 1.479.$
    \item[(iii)] For $s=3$: $\mu_0^{(3)}$ is extremal for $0.765 \;\lesssim\; \varphi \;\lesssim\; 1.308.$
    \item[(iv)] For $s=4$: $\mu_0^{(4)}$ is extremal for $0.815 \;\lesssim\; \varphi \;\lesssim\; 1.227.$
    \item[(v)] For $s=5$: $\mu_0^{(5)}$ is extremal for $0.847 \;\lesssim\; \varphi \;\lesssim\; 1.180.$
\end{enumerate}
\end{thm}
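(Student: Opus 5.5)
The plan is to combine three ingredients already established in the paper: the sufficient criterion \eqref{eq:kappa-gamma-condition} of K\"ulske--Rozikov, its reduction \eqref{eq:Dobrushin-reduced-s} in Theorem~\ref{thm:extremality-Dobrushin-s}, and the explicit Dobrushin coefficients of Lemma~\ref{lem:Dobrushin-general}.

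\textbf{Step 1: reduction to a one-variable inequality.} Fix $s\in\{1,\dots,5\}$. Under the identifications $\kappa^2=\tau_{\mathbb{P}^{(s)}}\tau_{\mathbb{Q}^{(s)}}$ and $\gamma=\kappa$, the condition $3\kappa\gamma<1$ becomes $3\tau_{\mathbb{P}^{(s)}}(\varphi)\tau_{\mathbb{Q}^{(s)}}(\varphi)<1$. By \cite[Theorem~3.3']{KuelskeRozikov2017}, $\mu_0^{(s)}$ is extremal wherever this holds. The problem therefore reduces to studying the single function
\[
\mathcal{D}^{(s)}(\varphi)=3\tau_{\mathbb{P}^{(s)}}(\varphi)\tau_{\mathbb{Q}^{(s)}}(\varphi)-1,
\]
with $\tau_{\mathbb{P}^{(s)}}$ and $\tau_{\mathbb{Q}^{(s)}}$ given by \eqref{eq:tauP-s-phi-def}--\eqref{eq:tauQ-s-phi-def}.

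\textbf{Step 2: shape of the set $\{\mathcal{D}^{(s)}<0\}$.} By Lemma~\ref{lem:Dobrushin-general}(i)--(iii), $\mathcal{D}^{(s)}$ is continuous, invariant under $\varphi\mapsto1/\varphi$, and satisfies $\mathcal{D}^{(s)}(1)=-1$ and $\mathcal{D}^{(s)}\to2$ as $\varphi\to0$ or $\varphi\to\infty$. Hence it has at least one zero in $(1,\infty)$. To show the zero is unique, I will substitute $t=\varphi^2\ge1$, so that all absolute values become $t^{n}-1$ and $\mathcal{D}^{(s)}$ becomes a rational function of $t$. Then I will argue that $\tau_{\mathbb{P}^{(s)}}$ and $\tau_{\mathbb{Q}^{(s)}}$ are each increasing on $t>1$:
\begin{itemize}
\item[] for $\tau_{\mathbb{P}^{(s)}}=(t^s-1)/(t^s+1)$ this is immediate;
\item[] for $\tau_{\mathbb{Q}^{(s)}}$ I will use that it equals the total variation between the two rows of $\mathbb{Q}^{(s)}$, i.e. $\tau_{\mathbb{Q}^{(s)}}=\sum_n(h_n-g_n)/S_s$, and show the derivative of numerator over denominator is positive.
\end{itemize}
This gives a unique root $\varphi_2^{(s)}>1$, and symmetry then produces $\varphi_1^{(s)}=1/\varphi_2^{(s)}$. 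The extremality region is consequently the interval $(\varphi_1^{(s)},\varphi_2^{(s)})$.

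\textbf{Step 3: computing the endpoints.} For $s=1$ the equation reduces to the quartic treated in the preceding lemma (from \cite[Theorem~5.3]{Akin-Muk-24JSTAT}), which gives the closed-form bounds $0.478,\ 2.092$. For $s=2,\dots,5$, clearing denominators turns $\mathcal{D}^{(s)}=0$ into a palindromic polynomial equation in $t$. I will locate its unique root $t>1$ numerically by bisection, justified by the monotonicity from Step 2, and confirm the tabulated values in Table~\ref{tab:extremal-intervals-s1-5}. For $s=2$ this reproduces $(0.6763353,\,1.4785566)$.

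\textbf{Main obstacle.} The hardest step is the monotonicity of $\tau_{\mathbb{Q}^{(s)}}$ for general $s$, since the $(\varphi^{2n}+1)^4$ terms in $S_s$ grow at the same rate as the numerator. If a clean derivative argument fails, I will fall back on a finite verification for each $s\le5$: Sturm sequences, or Descartes' rule applied to the palindromic polynomial in $u=t+1/t$, to certify that there is exactly one root with $u>2$.

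One caveat must be stated. The word ``precisely'' in the theorem can only refer to the region where the Dobrushin-type sufficient condition holds. Extremality outside that interval is not established by this argument and would require separate analysis, for example via Kesten--Stigum.
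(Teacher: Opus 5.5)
Your proposal is correct and uses the same route as the paper for everything except the interval structure. The shared steps are the reduction of $k\kappa\gamma<1$ to $3\tau_{\mathbb{P}^{(s)}}\tau_{\mathbb{Q}^{(s)}}<1$ under the assumed identifications, the explicit coefficients of Lemma~\ref{lem:Dobrushin-general}, the closed-form quartic for $s=1$, and numerical root-finding for $s\ge 2$. The difference is your Step~2. For $s\ge2$ the paper never proves that $\{\mathcal{D}^{(s)}<0\}$ is an interval. It only reports that Mathematica finds exactly two positive roots of $\mathcal{D}^{(s)}=0$ and appeals to $\varphi\mapsto1/\varphi$. Your monotonicity argument turns that numerical observation into a proof, so numerics are needed only to locate the endpoints.

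Your ``main obstacle'' has a clean resolution, so no Sturm-sequence fallback is needed. Put $x=\log\varphi>0$. Then $\tau_{\mathbb{P}^{(s)}}=\tanh(sx)$, and since $h_n-g_n=2\sinh(nx)\cosh^3(nx)$ and $h_n+g_n=2\cosh^4(nx)$,
\[
\tau_{\mathbb{Q}^{(s)}}=\frac{2\sum_{n=1}^{s}\sinh(nx)\cosh^{3}(nx)}{1+2\sum_{n=1}^{s}\cosh^{4}(nx)}.
\]
In the quotient-rule numerator $N'D-ND'$ every piece is positive:
\begin{itemize}
\item the constant $1$ in the denominator contributes $2\sum_n n\cosh^2(nx)\bigl(\cosh^2(nx)+3\sinh^2(nx)\bigr)$;
\item the diagonal terms $n=m$ contribute $4n\cosh^6(nx)$;
\item each pair $n<m$ contributes $4\cosh^4(nx)\cosh^4(mx)$ times
\[
n\bigl(2+3(a-b)^2-2ab\bigr)+(m-n)\bigl(1+3b^2-4ab\bigr)>0,\qquad a=\tanh(nx)<b=\tanh(mx)<1.
\]
\end{itemize}
Hence $\tau_{\mathbb{Q}^{(s)}}$ is strictly increasing on $\varphi>1$ for every $s$, not just $s\le5$. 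So $3\tau_{\mathbb{P}^{(s)}}\tau_{\mathbb{Q}^{(s)}}$ increases strictly from $0$ to $3$ and crosses $1$ exactly once.

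Your caveat is also right, and it applies to the paper's statement itself. The Dobrushin-type condition is only sufficient, so what is actually established is extremality on the stated intervals, not extremality ``precisely'' there. The paper concedes this later when it labels $\mathcal{D}^{(s)}\ge 0$ as inconclusive, and Algorithm~\ref{alg:dobrushin_extremality} likewise returns \textsc{Inconclusive} in that case.
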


\begin{algorithm}[!htbp]
\caption{Extremality test of the disordered phase via the Dobrushin criterion ($k=3$)}
\label{alg:dobrushin_extremality}
\begin{algorithmic}[1]
\Require Spin quantum number $s\in\mathbb{Z}^+$, thermal parameter $\varphi=e^{\beta J/2} \in\mathbb{R}^+$.
\Ensure Decision for the disordered phase: \textsc{Extremal (certified)} or \textsc{Inconclusive}; value of the criterion $\mathcal{D}^{(s)}(\varphi)$.

\State \textbf{(1) Auxiliary functions and normalization}
\For{$n=1$ \textbf{to} $s$}
  \State $h_n(\varphi)\gets \dfrac{(\varphi^{2n}+1)^3}{8\,\varphi^{2n}}$
  \State $g_n(\varphi)\gets \dfrac{(\varphi^{2n}+1)^3}{8\,\varphi^{4n}}$
\EndFor
\State $S_s(\varphi)\gets 1+\dfrac{1}{8}\displaystyle\sum_{n=1}^{s}\dfrac{(\varphi^{2n}+1)^4}{\varphi^{4n}}$

\vspace{0.25em}
\State \textbf{(2) Compute Dobrushin coefficient $\tau_{\mathbb{Q}}^{(s)}(\varphi)$}
\Comment{$\mathbb{Q}^{(s)}(\varphi)\in\mathbb{R}^{2\times(2s+1)}$; the maximum over rows reduces to the difference of the two rows.}
\State $\tau_{\mathbb{Q}}^{(s)}(\varphi)\gets \dfrac{1}{S_s(\varphi)}\displaystyle\sum_{n=1}^{s}\bigl|h_n(\varphi)-g_n(\varphi)\bigr|$

\vspace{0.25em}
\State \textbf{(3) Compute Dobrushin coefficient $\tau_{\mathbb{P}}^{(s)}(\varphi)$}
\Comment{$\mathbb{P}^{(s)}(\varphi)\in\mathbb{R}^{(2s+1)\times 2}$ with $m\in\{-s,\ldots,s\}$.}
\State $P_{m,-1/2}(\varphi)\gets \dfrac{1}{1+\varphi^{2m}}$ \hfill (and $P_{m,+1/2}=1-P_{m,-1/2}$)
\State $\tau_{\mathbb{P}}^{(s)}(\varphi)\gets
\max\limits_{m,m'\in\{-s,\ldots,s\}}
\bigl|P_{m,-1/2}(\varphi)-P_{m',-1/2}(\varphi)\bigr|$
\Comment{The map $m\mapsto P_{m,-1/2}(\varphi)$ is monotone for fixed $\varphi>0$, hence the maximum is attained at $m=s$ and $m'=-s$.}
\State $\tau_{\mathbb{P}}^{(s)}(\varphi)\gets
\left|\dfrac{1}{1+\varphi^{2s}}-\dfrac{1}{1+\varphi^{-2s}}\right|$

\vspace{0.25em}
\State \textbf{(4) Apply the Dobrushin criterion ($k=3$)}
\Comment{Dobrushin provides a \emph{sufficient} condition for extremality: $3\,\tau_{\mathbb{P}}^{(s)}(\varphi)\tau_{\mathbb{Q}}^{(s)}(\varphi)<1$.}
\State $\mathcal{D}^{(s)}(\varphi)\gets 3\,\tau_{\mathbb{P}}^{(s)}(\varphi)\,\tau_{\mathbb{Q}}^{(s)}(\varphi)-1$
\If{$\mathcal{D}^{(s)}(\varphi)<0$}
  \State \Return \textsc{Extremal (certified)}, $\mathcal{D}^{(s)}(\varphi)$
\Else
  \State \Return \textsc{Inconclusive}, $\mathcal{D}^{(s)}(\varphi)$
\EndIf
\end{algorithmic}
\end{algorithm}

In Figure~\ref{fig:extremality-s}, we display the graphs of the Dobrushin function
$\mathcal{D}^{(s)}(\varphi)=3\,\tau_{\mathbb{P}^{(s)}}(\varphi)\,\tau_{\mathbb{Q}^{(s)}}(\varphi)-1$
for $s=1,2,3,4,5$. Each value of $s$ is represented by a different color.
These curves indicate, for every $s$, the parameter intervals in which the
disordered phase associated with the measure $\mu_0^{(s)}$ is an extremal.
One observes that, as $s$ increases, the interval where the corresponding
curve lies below zero, becomes narrower. The negative region of the function
Therefore, marks the extremality region of the disordered phase $\mu_0^{(s)}$.
In comparison with our previous work, this shows that the order $k$ of the
CT significantly influences the size of the extremality region.
\begin{figure}[!htbp]
\centering
\includegraphics[width=100mm]{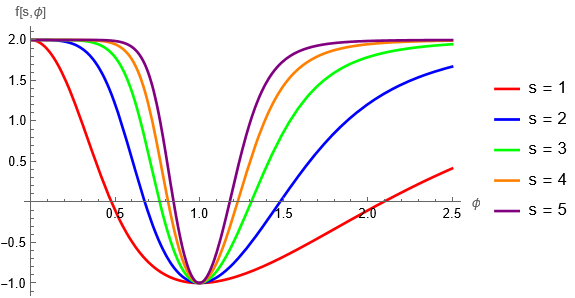}
\caption{(Color online) Plots of the function 
$\mathcal{D}^{(s)}(\varphi)=3\,\tau_{\mathbb{P}^{(s)}}(\varphi)\,\tau_{\mathbb{Q}^{(s)}}(\varphi)-1$
for $s = 1,2,3,4,5$. These curves illustrate, for each $s$, the parameter
regions in which the disordered phase associated with the measure
$\mu_0^{(s)}$ is extremal.}
\label{fig:extremality-s}
\end{figure}
\begin{rem}
We observe that as the spin parameter $s$ increases, the interval of $\varphi$ values for which the disordered phase is extremal narrows down and symmetrically concentrates towards $\varphi = 1$ (see Figure~\ref{fig:extremality-s}).
\end{rem}

\subsection{The non-extremality (reconstruction) of the disordered phases}\label{subsec:non-extremality}
The tree-indexed Markov chain approach is used to analyze whether disordered
phases of lattice models on Cayley-type structures are nonextremal \cite{RR-2021,Rozikov2018}. In this
setting, the CT is assumed to be semi-infinite with branching number
$k$, meaning that each vertex has $k$ forward descendants. One constructs a
stochastic matrix based on nearest-neighbor interactions and studies its
spectral properties. 

\begin{thm}\label{thm:reconstruction-}\cite[Theorem 4.1.]{Mossel2004}  Let $M$ be a channel corresponding to an ergodic Markov chain. Let $T_b$ be a $b$-ary tree. The reconstruction problem is census-solvable if $b|\lambda_2(M)|^2 > 1$, and is not census-solvable if $b|\lambda_2(M)|^2 < 1$. For general trees, the reconstruction problem is solvable when 
\begin{align}\label{eq:KS-condition-0}
br(T)|\lambda_2(M)|^2 > 1,
\end{align}
where $br(T)$ denotes the branching number of the tree.
\end{thm}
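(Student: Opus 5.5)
The natural route is the classical Kesten--Stigum second-moment argument applied to a linear statistic built from the second eigenvector of $M$. The non-solvability half will instead need a central-limit comparison of census distributions.

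\textbf{Solvability for $b|\lambda_2(M)|^2>1$.} Let $f$ be a right eigenvector of $M$ with $Mf=\lambda_2 f$. If $\lambda_2$ is complex, I would work with the real and imaginary parts of $f$ and argue componentwise. Let $\pi$ be the stationary law; since $\lambda_2\neq 1$, we have $\sum_i \pi_i f(i)=0$. Define the census statistic
\[
S_n=\sum_{v\in W_n} f(\sigma_v),
\]
which depends only on the counts of the states at level $n$. By the Markov property along each ray, $\mathbb{E}[f(\sigma_v)\mid \sigma_{\rm root}=i]=\lambda_2^{n}f(i)$, so $\mathbb{E}[S_n\mid \sigma_{\rm root}=i]=b^n\lambda_2^n f(i)$.

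For the second moment, I would decompose the pairs $(u,v)$ in $W_n$ by the depth $m$ of their last common ancestor $w$. Given $\sigma_w$, the two branches are independent, and ergodicity gives $M^{t}=\mathbf{1}\pi+O(|\lambda_2|^{t})$. From this I would derive
\[
\mathbb{E}[S_n^2\mid \sigma_{\rm root}=i]\le C\sum_{m=0}^{n} b^{m}\, b^{2(n-m)}|\lambda_2|^{2(n-m)}.
\]
Dividing by $(b|\lambda_2|)^{2n}$ leaves $C\sum_m (b|\lambda_2|^2)^{-m}$, which is bounded precisely when $b|\lambda_2|^2>1$. The normalized statistic $S_n/(b\lambda_2)^n$ then has means $f(i)$, which differ for some pair of states because $f$ is non-constant, and uniformly bounded second moments. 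A Paley--Zygmund/Chebyshev argument therefore gives a positive-probability separation between the two root states, uniformly in $n$. Since $S_n$ is a function of the census, this proves census-solvability.

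\textbf{Non-solvability for $b|\lambda_2|^2<1$.} The census $(N_n(j))_j$ forms a multi-type Galton--Watson process with mean matrix $bM$. Projecting the counts onto the eigenvectors of $M$ separates the Perron direction from the others. The conditional means given the root differ only through the non-Perron components, hence by $O(b^n|\lambda_2|^n)$. The fluctuations, by contrast, are of order $b^{n/2}$. Since $b|\lambda_2|^2<1$, the mean shift is $o(b^{n/2})$.

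What remains is to upgrade this to total-variation closeness of the census laws. I would apply a (local) CLT for the $b^{n-m}$ independent sub-censuses hanging below a fixed intermediate level $m$, conditioning on that level. I expect this step to be the main obstacle, because it requires a lattice local limit theorem for vector counts (with care for periodic sublattices) rather than just moment bounds. I would follow the Mossel--Peres route: compare the two conditional laws through their Gaussian approximations and let $m\to\infty$ slowly.

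\textbf{General trees.} I would replace the uniform weights by a unit flow $\theta$ from the root to infinity. The statistic becomes $S_n=\sum_{v\in W_n}\theta(v)f(\sigma_v)/\lambda_2^{n}$. The same pair decomposition bounds its second moment by the energy $\sum_{w}\theta(w)^2|\lambda_2|^{-2|w|}$. By Lyons' characterization of the branching number, a flow with finite energy of this type exists whenever $|\lambda_2|^{-2}<br(T)$, i.e. exactly under \eqref{eq:KS-condition-0}. The second-moment argument from the first part then yields solvability.
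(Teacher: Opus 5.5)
The paper does not prove this statement. It is quoted from Mossel's survey, which in turn rests on Kesten--Stigum and Mossel--Peres. The only meaningful comparison is therefore with those standard proofs, and your outline follows them.

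Two of the three claims are essentially complete as you describe them.

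On $T_b$, put $X_n=S_n/(b\lambda_2)^n$. The lowest-common-ancestor decomposition gives $\mathbb{E}|X_n|^2\le C$ under either root. Write $p_i$ for the law of $X_n$ under root $i$. Then $|f(i)-f(j)|\le\sum_z|z|\,|p_i(z)-p_j(z)|\le\sqrt{2C}\,\sqrt{2\,d_{\mathrm{TV}}(p_i,p_j)}$, which bounds the distance between the census laws from below, uniformly in $n$.

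On a general tree, the flow-weighted statistic has second moment at most $C\sum_w\theta(w)^2|\lambda_2|^{-2|w|}$. Lyons' theorem says $\mathrm{RW}_\lambda$ is transient for $\lambda<br(T)$. Combined with the finite-energy criterion for transience (resistances $\lambda^{|e|}$), it produces the required flow at $\lambda=|\lambda_2|^{-2}$. This gives solvability rather than census-solvability, which is all the statement claims.

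Two small repairs are needed:
\begin{itemize}
\item The expansion $M^t=\mathbf{1}\pi+O(|\lambda_2|^t)$ is false when $\lambda_2$ sits in a nontrivial Jordan block, since a polynomial factor appears. You never need it, because $\mathbb{E}[f(\sigma_u)\overline{f(\sigma_v)}\mid\sigma_w]=|\lambda_2|^{2(n-m)}|f(\sigma_w)|^2$ holds exactly.
\item For complex $\lambda_2$, work with the complex statistic and $|S_n|^2$. The real and imaginary parts have conditional means that rotate with $n$.
\end{itemize}

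The genuine gap is the converse: that $b|\lambda_2|^2<1$ excludes census-solvability. There you give only the heuristic that the mean shift $n^{O(1)}(b|\lambda_2|)^n$ is $o(b^{n/2})$. The step you defer to ``the Mossel--Peres route'' is the actual content of that half of the theorem. Concretely, three ingredients are missing.

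(i) You need a \emph{lower} bound of order $b^n$ on $\mathrm{Var}\bigl(\sum_x g(x)N_n(x)\bigr)$ for every non-constant $g$. It does follow from ergodicity, but it has to be proved. Suppose $\mathrm{Var}_{M(x,\cdot)}(M^jg)=0$ for all $x$ and $j$. Then $g(\sigma_n)=(M^ng)(\sigma_0)$ almost surely along a ray, which forces $g$ to be constant once $M^n>0$ entrywise.

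(ii) After conditioning on level $n-m$, the conditional mean $b^mN_{n-m}M^m$ is itself random. Its non-Perron part has size about $b^{n/2}(\sqrt{b}\,|\lambda_2|)^m$. It is therefore negligible against the conditional standard deviation only as $m\to\infty$. This is a second use of $b|\lambda_2|^2<1$, and your sketch leaves it implicit.

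(iii) Most importantly, you need a multidimensional local limit theorem for sums of $b^{n-m}$ independent, non-identically distributed lattice vectors. It must identify the lattice. It must also show that the random coset carrying the conditional law has asymptotically the same distribution under both root states, which is where aperiodicity enters. For example, an almost-sure congruence $\sum_x a_xN_n(x)\equiv c_i \pmod d$ under root $i$ forces $b^na(x)\equiv c_i$ for every state $x$ once $M^n>0$. Hence $c_i$ cannot depend on $i$.

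Moment bounds or a central limit theorem control only weak limits of normalized projections. Weak convergence does not give $d_{\mathrm{TV}}\to 0$ for lattice-valued laws. As it stands, your proposal proves the two solvability claims but not the non-census-solvability claim.
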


The expression in \eqref{eq:KS-condition-0} represents the Kesten-Stigum condition~\cite{KestenStigum1966}. This section explores the specific parameters and regions where inequality holds true. According to the KS condition~\cite{KestenStigum1966},
a sufficient criterion for a Gibbs measure $\mu$ associated with a stochastic
matrix $H$ to be non-extreme is
\begin{align}\label{ex:KS-Condition-k=3}
 k\,|\lambda_2(H)|^{2} -1>0,
\end{align}
where $\lambda_2(H)$ denotes the second-largest eigenvalue (in absolute value) of $H$ in absolute
value.

\subsubsection{Non-extremality of the Disordered Phase Using a Markov Chain Construction for $s=2$}

To analyze the non-extremality of the disordered phase for spin $s=2$, we consider a mixed-spin configuration on the CT: one shell carries spins in $\{-2,-1,0,1,2\}$, while the adjacent shell carries spins in $\{-\tfrac12, \tfrac12\}$. Along each edge connecting these shells, the spin configuration can be described using a two-step Markov chain. The transition from a spin-2 shell to a spin-$\tfrac12$ shell is governed by the matrix $\mathbb{P}^{(2)}(\varphi)$, and the reverse transition by $\mathbb{Q}^{(2)}(\varphi)$.  

We obtain the symmetric fixed point $\ell_0^{(2)}$ as 
\[
Z= e^{\widetilde{h}_{1/2}-\widetilde{h}_{-1/2}} = 1,
\qquad
X_{\pm 1}= \Bigl(\tfrac{\varphi^{2}+1}{2\varphi}\Bigr)^{3},
\qquad
X_{\pm 2} = \Bigl(\tfrac{\varphi^{4}+1}{2\varphi^{2}}\Bigr)^{3},
\qquad
X_{0}= 1.
\]
Substituting these relations into the general expressions for
$\mathbb{P}^{(2)}(\varphi)$ and $\mathbb{Q}^{(2)}(\varphi)$ yields the following explicit
$\varphi$-dependent 

For $i \in \{-2,-1,0,1,2\}$ and $j \in \{-\tfrac12,\tfrac12\}$,  from  \eqref{StocMat1} and \eqref{eq:Matrix-P-arbitrary-s}, the stochastic matrix $\mathbb{P}^{(2)}(\varphi)$ at $\ell_0^{(2)}$ is given by
\begin{align}\label{P-rho-s2-explicit}
\mathbb{P}^{(2)}(\varphi)=
\begin{pmatrix}
\dfrac{\varphi^{4}}{1+\varphi^{4}} & \dfrac{1}{1+\varphi^{4}} \\[0.4em]
\dfrac{\varphi^{2}}{1+\varphi^{2}} & \dfrac{1}{1+\varphi^{2}} \\[0.4em]
\dfrac{1}{2} & \dfrac{1}{2} \\[0.4em]
\dfrac{1}{1+\varphi^{2}} & \dfrac{\varphi^{2}}{1+\varphi^{2}} \\[0.4em]
\dfrac{1}{1+\varphi^{4}} & \dfrac{\varphi^{4}}{1+\varphi^{4}}
\end{pmatrix}.
\end{align}
The rows are ordered by $i=-2,-1,0,1,2$, and the columns by $j=-\tfrac12,\tfrac12$.

We consider the mixed $(2,\tfrac12)$-spin Ising model and focus on the
transition matrix from the spin-$\tfrac12$ layer $\Psi = \{-\tfrac12,\tfrac12\}$
to the spin-$2$ layer $\Phi = \{-2,-1,0,1,2\}$. Similarly, from \eqref{StocMat2} and \eqref{eq:Matrix-Q-arbitrary-s}, the transition matrix $\mathbb{Q}^{(2)}(\varphi)$ can be written explicitly as
\begin{equation}\label{Q-rho-s2-explicit}
\mathbb{Q}^{(2)}(\varphi)
=
\frac{1}{S_2(\varphi)}
\begin{pmatrix}
\displaystyle
\frac{(\varphi^{4}+1)^{3}}{8\varphi^{4}} &
\displaystyle
\frac{(\varphi^{2}+1)^{3}}{8\varphi^{2}} &
1 &
\displaystyle
\frac{(\varphi^{2}+1)^{3}}{8\varphi^{4}} &
\displaystyle
\frac{(\varphi^{4}+1)^{3}}{8\varphi^{8}}
\\[1.0em]
\displaystyle
\frac{(\varphi^{4}+1)^{3}}{8\varphi^{8}} &
\displaystyle
\frac{(\varphi^{2}+1)^{3}}{8\varphi^{4}} &
1 &
\displaystyle
\frac{(\varphi^{2}+1)^{3}}{8\varphi^{2}} &
\displaystyle
\frac{(\varphi^{4}+1)^{3}}{8\varphi^{4}}
\end{pmatrix},
\end{equation}
where \begin{equation*}\label{S-rho-s2}
S_2(\varphi)
=
1 + \frac{(\varphi^{2}+1)^{4}}{8\varphi^{4}}
  + \frac{(\varphi^{4}+1)^{4}}{8\varphi^{8}}.
\end{equation*}

\subsubsection{Effective two-step transition matrix and eigenvalues for $s=2$}

For $s=2$, at the symmetric fixed point $\ell_0^{(2)}$ the one-step transition
matrices $\mathbb{P}^{(2)}(\varphi)$ and $\mathbb{Q}^{(2)}(\varphi)$ (from the spin--$2$
shell to the spin--$\tfrac12$ shell, and back) depend only on the single
parameter $\varphi = e^{\beta J/2}$.  
The effective two-step transition along a ray of the tree is then governed by
the $2\times 2$ matrix
\[
\mathbb{R}^{(2)}(\varphi) := \mathbb{Q}^{(2)}(\varphi)\,\mathbb{P}^{(2)}(\varphi).
\]

Using the explicit forms of $\mathbb{P}^{(2)}(\varphi)$ and $\mathbb{Q}^{(2)}(\varphi)$
given in \eqref{P-rho-s2-explicit} and \eqref{Q-rho-s2-explicit}, a straightforward computation yields
\begin{equation}\label{eq:M-rho-matrix-s2}
\mathbb{R}^{(2)}(\varphi)
=
\frac{1}{D(\varphi)}
\begin{pmatrix}
A(\varphi) & B(\varphi) \\
B(\varphi) & A(\varphi)
\end{pmatrix},
\end{equation}
where
\begin{align}
A(\varphi)
&=
\varphi^{16} + 3\varphi^{12} + 2\varphi^{10} + 8\varphi^{8}
+ 2\varphi^{6} + 3\varphi^{4} + 1,
\label{eq:A-rho-s2}\\[4pt]
B(\varphi)
&=
2\varphi^{4}\bigl(\varphi^{8} + \varphi^{6} + 6\varphi^{4} + \varphi^{2} + 1\bigr),
\label{eq:B-rho-s2}\\[4pt]
D(\varphi)
&=
\varphi^{16} + 5\varphi^{12} + 4\varphi^{10} + 20\varphi^{8}
+ 4\varphi^{6} + 5\varphi^{4} + 1.
\label{eq:D-rho-s2}
\end{align}
By construction one has
\begin{equation}\label{eq:D-equals-A-plus-B-s2}
D(\varphi) = A(\varphi) + B(\varphi),
\end{equation}
so that $\mathbb{R}^{(2)}(\varphi)$ is a symmetric stochastic matrix.

Since $\mathbb{R}^{(2)}(\varphi)$ is of the form
\(
\bigl[\begin{smallmatrix}A & B\\ B & A\end{smallmatrix}\bigr]/D
\),
its eigenvalues are
\begin{equation}\label{eq:M-eigs-raw-s2}
\lambda_1(\varphi) = \frac{A(\varphi)+B(\varphi)}{D(\varphi)},
\qquad
\lambda_2(\varphi) = \frac{A(\varphi)-B(\varphi)}{D(\varphi)}.
\end{equation}
Using \eqref{eq:D-equals-A-plus-B-s2}, the leading eigenvalue reduces to
\begin{equation*}\label{eq:lambda1-equals-1-s2}
\lambda_1(\varphi) = 1,
\end{equation*}
as expected for the stochastic matrix.

The second largest eigenvalue is given explicitly by
\begin{equation}\label{eq:lambda2-rational-s2}
\lambda_2(\varphi)
=
\frac{A(\varphi)-B(\varphi)}{D(\varphi)}
=
\frac{\varphi^{16} + \varphi^{12} - 4\varphi^{8} + \varphi^{4} + 1}
     {\varphi^{16} + 5\varphi^{12} + 4\varphi^{10} + 20\varphi^{8} + 4\varphi^{6} + 5\varphi^{4} + 1},
\qquad \varphi = e^{\beta J/2} > 0.
\end{equation}
Moreover, the numerator factorises as
\begin{equation*}\label{eq:lambda2-factorised-num-s2}
\varphi^{16} + \varphi^{12} - 4\varphi^{8} + \varphi^{4} + 1
=
(\varphi-1)^{2}(\varphi+1)^{2}(\varphi^{2}+1)^{2}(\varphi^{8}+3\varphi^{4}+1),
\end{equation*}
so that
\begin{equation}\label{eq:lambda2-factorised-s2}
\lambda_2(\varphi)
=
\frac{(\varphi-1)^{2}(\varphi+1)^{2}(\varphi^{2}+1)^{2}(\varphi^{8}+3\varphi^{4}+1)}
     {\varphi^{16} + 5\varphi^{12} + 4\varphi^{10} + 20\varphi^{8} + 4\varphi^{6} + 5\varphi^{4} + 1}.
\end{equation}
Since the denominator in \eqref{eq:lambda2-rational-s2} is strictly
positive for all $\varphi>0$, the sign of $\lambda_2(\varphi)$ is determined
by the numerator, which is non–negative and vanishes only at $\varphi=1$.
Thus
\[
\lambda_2(\varphi) \ge 0,\qquad
\lambda_2(1) = 0,\qquad
\lambda_2(\varphi) > 0 \ \text{for}\ \varphi \neq 1.
\]
Furthermore,
\[
\lim_{\varphi\to 0^{+}}\lambda_2(\varphi) = 1,
\qquad
\lim_{\varphi\to +\infty}\lambda_2(\varphi) = 1,
\]
and one checks that $\lambda_2(\varphi)$ is strictly decreasing on $(0,1)$
and strictly increasing on $(1,\infty)$, with a unique global minimum at
$\varphi=1$. The symmetry
\[
\lambda_2(\varphi) = \lambda_2\bigl(1/\varphi\bigr)
\]
follows directly from the factorisation
\eqref{eq:lambda2-factorised-s2}.

We now analyse the inequality
\begin{equation}\label{eq:ineq-3lambda2-s2}
3\lambda_2(\varphi)^2 - 1 > 0,
\end{equation}
which is the non–extremality criterion in the present case. Since
$\lambda_2(\varphi)\ge 0$ for all $\varphi>0$, the inequality
\eqref{eq:ineq-3lambda2-s2} is equivalent to
\begin{equation}\label{eq:ineq-abs-s2}
3\lambda_2(\varphi)^2 - 1 > 0
\quad\Longleftrightarrow\quad
\lambda_2(\varphi) > \frac{1}{\sqrt{3}}.
\end{equation}

By continuity and monotonicity of $\lambda_2(\varphi)$ on $(0,1)$ and
$(1,\infty)$, there exist exactly two positive solutions
$\varphi_- \in (0,1)$ and $\varphi_+ \in (1,\infty)$ of the equation
\[
\lambda_2(\varphi) = \frac{1}{\sqrt{3}},
\]
and they satisfy $\varphi_- = 1/\varphi_+$. Numerically,
\begin{equation}\label{eq:rho-pm-s2}
\varphi_- \approx 0.561522,
\qquad
\varphi_+ \approx 1.780873.
\end{equation}
Combining \eqref{eq:ineq-abs-s2} and \eqref{eq:rho-pm-s2}, we obtain
\begin{equation}\label{eq:ineq-regions-s2}
3\lambda_2(\varphi)^2 - 1 > 0
\quad\Longleftrightarrow\quad
0 < \varphi < \varphi_- \;\;\text{ or }\;\; \varphi > \varphi_+.
\end{equation}

In the ferromagnetic regime $J>0$, where $\varphi=e^{\beta J/2}>1$, only
the upper branch is physically relevant, so that
\eqref{eq:ineq-3lambda2-s2} reduces to
\begin{equation*}\label{eq:ineq-phys-s2}
3\lambda_2(\varphi)^2 - 1 > 0
\quad\Longleftrightarrow\quad
\varphi > \varphi_+ \approx 1.780873.
\end{equation*}
Equivalently, in terms of $\beta J$ this corresponds to
\[
\beta J > 2\ln \varphi_+ \approx 1.15421,
\]
which characterises the parameter region in which the disordered fixed
point $Z=1$ loses its stability along the ray.

 Therefore, we can write the following theorem.
\begin{thm}\label{thm:non-extremal-s2-k3}
Let $\mu_0^{(2)}$ be the free boundary (disordered) Gibbs measure associated with the disordered fixed point $\ell_0^{(2)}$ of the mixed $(2,\tfrac12)$-spin Ising model on the CT of order $k=3$. Then the disordered phase (DP) $\mu_0^{(2)}$ is \emph{not extremal} in the regime
\begin{equation*}\label{eq:rho-region-nonext-s2-short}
\varphi \in (0,\varphi_-)\,\cup\,(\varphi_+,+\infty),
\end{equation*}
where the thresholds are $\varphi_- \approx 0.561522$ and $\varphi_+ \approx 1.780873$ (with $\varphi_- = \varphi_+^{-1}$).
\end{thm}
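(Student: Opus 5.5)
The plan is to reduce the statement to the Kesten--Stigum criterion (Theorem~\ref{thm:reconstruction-}) applied to the effective two-step chain on the spin-$\tfrac12$ layer, and then to invoke the computations of $\lambda_2(\varphi)$ carried out just above.

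\textbf{Step 1: reduction to a $2$-state chain.} The disordered measure $\mu_0^{(2)}$, restricted to the vertices of $\Gamma^-$ (the spin-$\tfrac12$ shells), is a tree-indexed Markov chain. Two levels apart, each such vertex has $k^2=9$ descendants, and the kernel between them is $\mathbb{R}^{(2)}(\varphi)=\mathbb{Q}^{(2)}(\varphi)\mathbb{P}^{(2)}(\varphi)$. This uses the splitting (Markov) property of the TISGM together with the conditional independence of siblings given the parent.

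Along a single ray, reconstruction for the full chain is implied by reconstruction for the restricted chain. The reason is that the root information is carried only through the descendants, and non-extremality of the restricted chain yields a nontrivial tail event for $\mu_0^{(2)}$. I would therefore apply the Kesten--Stigum part of Theorem~\ref{thm:reconstruction-} to $\mathbb{R}^{(2)}$ on the $9$-ary tree, where the condition reads $9\lambda_2(\varphi)^2>1$.

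\textbf{Step 2: reconciling the criterion with the stated thresholds.} The stated thresholds come from the weaker-looking condition $3\lambda_2^2>1$, i.e. \eqref{ex:KS-Condition-k=3} with $k=3$. I would handle this in one of two ways.

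\emph{Option (a).} Argue as the paper does, treating $\mathbb{R}^{(2)}$ as the effective channel per branching step with branching number $k=3$.

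\emph{Option (b).} Observe that $3\lambda_2^2>1$ implies $9\lambda_2^2>1$. Hence the asserted region lies inside the rigorous KS region for the $9$-ary two-step tree, and non-extremality follows there a fortiori.

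I favour (b), since it avoids any question about which branching number is correct. Equivalently, $\lambda_2(\mathbb{R})$ is the product of the nontrivial singular values of $\mathbb{Q}$ and $\mathbb{P}$, so the KS quantity over two generations is $k^2\lambda_2(\mathbb{R})^2$.

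\textbf{Step 3: locating the region.} By \eqref{eq:lambda2-factorised-s2}, $\lambda_2\ge0$, so the condition reduces to $\lambda_2(\varphi)>1/\sqrt3$, as in \eqref{eq:ineq-abs-s2}. Four facts about $\lambda_2$ are needed:
\begin{itemize}
\item the symmetry $\lambda_2(\varphi)=\lambda_2(1/\varphi)$, obtained by multiplying numerator and denominator by $\varphi^{-16}$;
\item the value $\lambda_2(1)=0$;
\item the limits $\lambda_2\to1$ as $\varphi\to0^+$ and as $\varphi\to\infty$;
\item strict monotonicity of $\lambda_2$ on $(1,\infty)$.
\end{itemize}
From these, $\lambda_2=1/\sqrt3$ has exactly one root $\varphi_+>1$, and by symmetry $\varphi_-=1/\varphi_+$. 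This gives exactly the region $(0,\varphi_-)\cup(\varphi_+,\infty)$ of \eqref{eq:ineq-regions-s2}, with numerical values $\varphi_\pm$ taken from \eqref{eq:rho-pm-s2}.

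\textbf{Main obstacle.} The monotonicity is the main obstacle. I would set $t=\varphi^2+\varphi^{-2}\ge2$ and rewrite $\lambda_2$ as a rational function of $t$ using the factorisation. Its numerator is $(t^2-4)(t^2+1)$ up to a positive factor, and its denominator can be written similarly. Checking that the derivative in $t$ is positive for $t>2$ is then a polynomial sign check. Failing a clean argument, I would certify uniqueness of the crossing via a Sturm sequence on the polynomial $3(A-B)^2-D^2$, i.e. count its positive roots, which should equal two.
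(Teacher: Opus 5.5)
Your Step~3 follows the paper's own computation: the formula for $\lambda_2$, its symmetry and limits, and the numerical roots. Your justification of the criterion, however, differs from the paper's, and it is the sounder of the two.

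\textbf{Comparison with the paper.} The paper takes the two-generation kernel $\mathbb{R}^{(2)}=\mathbb{Q}^{(2)}\mathbb{P}^{(2)}$ and feeds it into the $b$-ary Kesten--Stigum theorem with $b=k=3$; this is your Option~(a). It never addresses the fact that $\mathbb{R}^{(2)}$ spans two generations, and it only asserts (``one checks'') that $\lambda_2$ is monotone.

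Your Option~(b) uses the right bookkeeping. With respect to the detailed-balance weights, $\mathbb{Q}^{(2)}$ is the adjoint of $\mathbb{P}^{(2)}$. Hence $\lambda_2(\mathbb{R}^{(2)})=\sigma^2\ge 0$, where $\sigma$ is the nontrivial singular value of a single edge. The KS quantity is therefore $k\sigma^2=3\lambda_2$ per generation, i.e.\ $9\lambda_2^2$ per two generations, and $3\lambda_2^2>1$ implies $9\lambda_2^2>1$.

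This proves the theorem as stated. It also shows that the stated set is only part of the KS-certified region. The condition $\lambda_2>1/3$ already gives roughly $\varphi>1.508$ or $\varphi<0.663$. This is still disjoint from the paper's Dobrushin interval $(0.676,1.479)$, as the inequality $\lambda_2(\mathbb{R}^{(2)})\le\tau_{\mathbb{Q}^{(2)}}\tau_{\mathbb{P}^{(2)}}$ forces. Since the theorem only claims non-extremality on the smaller set, nothing breaks.

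Your monotonicity plan goes through cleanly. With $t=\varphi^2+\varphi^{-2}\ge 2$ one gets $\lambda_2=(t^2-4)(t^2+1)/(t^4+t^2+4t+12)$. The numerator of $d\lambda_2/dt$ is $8t^5+12t^4+64t^3-12t^2-64t+16$, which is positive for $t\ge 2$.

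\textbf{What needs fixing in Step~1.} The restriction of $\mu_0^{(2)}$ to $\Gamma^-$ is \emph{not} a tree-indexed Markov chain on a $9$-ary tree with independent $\mathbb{R}^{(2)}$-edges. The nine grandchildren of a $\Psi$-vertex come in three triples, each sharing a $\Phi$-parent. Given the grandparent, two members of the same triple are correlated through that parent. Their joint law is therefore not $\mathbb{R}^{(2)}(i,\cdot)\otimes\mathbb{R}^{(2)}(i,\cdot)$, and Theorem~\ref{thm:reconstruction-}, which is stated for the standard broadcast model, cannot be invoked verbatim on that tree.

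The repair is to use Kesten--Stigum in its original multi-type Galton--Watson form \cite{KestenStigum1966}, which allows arbitrary, dependent offspring laws.
\begin{itemize}
\item The type census on the even levels, started at $x^0$, is a positively regular multi-type GW process with bounded offspring.
\item Its mean matrix is $9\,\mathbb{P}^{(2)}\mathbb{Q}^{(2)}$. Because the nonzero spectra of $\mathbb{P}\mathbb{Q}$ and $\mathbb{Q}\mathbb{P}$ coincide, its nontrivial eigenvalue is $9\lambda_2$.
\item The condition $(9\lambda_2)^2>9$ then gives census reconstruction of the root.
\item This yields a nonconstant tail-measurable limit, and hence non-extremality.
\end{itemize}
A direct second-moment census computation on the ternary tree with alternating channels gives the same condition.

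Working from $x^0$ on $\Gamma^+$ also lets you drop the ``along a single ray'' remark, which is not meaningful as stated: reconstruction is not a single-ray property. All you need is that a nontrivial tail event of the sublattice field is also a tail event of the full field.
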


\begin{table}[h]
\centering
\caption{KS condition analysis for $s = 1,2,3,4,5$. 
For each $s$, the table lists the positive roots of 
$3\bigl|\lambda_{\max}^{(s)}(\varphi)\bigr|^2 - 1 = 0$ 
and the corresponding non-extremal parameter regions 
$3\bigl|\lambda_{\max}^{(s)}(\varphi)\bigr|^2 > 1$.}\label{tab:KS_results}
\begin{tabular}{ccc}
\toprule
$s$ & $\varphi$ roots of $3|\lambda_{\max}^{(s)}|^2 = 1$ & Regions where $3|\lambda_{\max}^{(s)}|^2 > 1$ \\
\midrule
1 & $0.3453$, $2.8957$ & $(0, 0.3453) \cup (2.8957, \infty)$ \\
2 & $0.5615$, $1.7809$ & $(0, 0.5615) \cup (1.7809, \infty)$ \\
3 & $0.6692$, $1.4943$ & $(0, 0.6692) \cup (1.4943, \infty)$ \\
4 & $0.7341$, $1.3622$ & $(0, 0.7341) \cup (1.3622, \infty)$ \\
5 & $0.7776$, $1.2861$ & $(0, 0.7776) \cup (1.2861, \infty)$ \\
\bottomrule
\end{tabular}
\end{table}
To generate Table \ref{tab:KS_results}, we compute the second-largest eigenvalue of the product matrix $\mathbb{R}^{(s)}(\varphi) = \mathbb{Q}^{(s)}(\varphi) \mathbb{P}^{(s)}(\varphi)$, where the stochastic matrices $\mathbb{P}^{(s)}$ and $\mathbb{Q}^{(s)}$ are defined in \eqref{eq:Matrix-P-arbitrary-s} and \eqref{eq:Matrix-Q-arbitrary-s} for arbitrary $s$. These calculations were performed using a custom code in Mathematica.

Let $\mu_0^{(s)}$ denote the disordered, translation-invariant Gibbs measure of the mixed-spin $(s,\tfrac12)$ Ising model on the CT of order three for $s \in \{1,2,3,4,5\}$.
The disordered Gibbs measure $\mu_0^{(s)}$ is \emph{non-extremal} (i.e., reconstruction occurs) if the KS condition $3|\lambda_{\max}^{(s)}(\varphi)|^2 > 1$ is satisfied \eqref{ex:KS-Condition-k=3}. This condition defines a symmetric non-extremality region:
\begin{equation}
\varphi \in (0,\varphi_{-}^{(s)})\,\cup\,(\varphi_{+}^{(s)},+\infty),
\end{equation}
where $\varphi_{-}^{(s)}$ and $\varphi_{+}^{(s)}$ are the roots of $3|\lambda_{\max}^{(s)}(\varphi)|^2 = 1$, such that $\varphi_{-}^{(s)} = (\varphi_{+}^{(s)})^{-1}$.
\par
Numerically, the bounds for the non-extremality regions are given by the following approximate values:
\begin{enumerate}
    \item[(i)] For $s=1$: $\varphi_{-}^{(1)} \approx 0.3453$, $\varphi_{+}^{(1)} \approx 2.8957$.
    \item[(ii)] For $s=2$: $\varphi_{-}^{(2)} \approx 0.5615$, $\varphi_{+}^{(2)} \approx 1.7809$.
    \item[(iii)] For $s=3$: $\varphi_{-}^{(3)} \approx 0.6692$, $\varphi_{+}^{(3)} \approx 1.4943$.
    \item[(iv)] For $s=4$: $\varphi_{-}^{(4)} \approx 0.7341$, $\varphi_{+}^{(4)} \approx 1.3622$.
    \item[(v)] For $s=5$: $\varphi_{-}^{(5)} \approx 0.7776$, $\varphi_{+}^{(5)} \approx 1.2861$.
\end{enumerate}
In light of these observations, we derive the main theorem of this subsection as follows:

\begin{thm}\label{thm:nonextreme-KS}
Let $k=3$ and let $\lambda_{\max}^{(s)}(\varphi)$ denote the second-largest
eigenvalue in absolute value of the $2\times2$ matrix
$\mathbb{R}^{(s)}(\varphi)=\mathbb{Q}^{(s)}(\varphi)\mathbb{P}^{(s)}(\varphi)$
for the mixed--spin $(s,\tfrac12)$ Ising model, with $s=1,2,3,4,5$.
Then the Kesten--Stigum condition
\(
3\,\bigl|\lambda_{\max}^{(s)}(\varphi)\bigr|^{2} > 1
\)
holds precisely on the following intervals of $\varphi>0$
\begin{itemize}
  \item $s=1$:\quad $\varphi \in (0,\,0.3453)\,\cup\,(2.8957,\,\infty)$,
  \item $s=2$:\quad $\varphi \in (0,\,0.5615)\,\cup\,(1.7809,\,\infty)$,
  \item $s=3$:\quad $\varphi \in (0,\,0.6692)\,\cup\,(1.4943,\,\infty)$,
  \item $s=4$:\quad $\varphi \in (0,\,0.7341)\,\cup\,(1.3622,\,\infty)$,
  \item $s=5$:\quad $\varphi \in (0,\,0.7776)\,\cup\,(1.2861,\,\infty)$,
\end{itemize}
where the numerical values are approximations as listed in
Table~\ref{tab:KS_results}. In particular, for these values of $\varphi$,
the corresponding disordered Gibbs measure $\mu_0^{(s)}$ is nonextremal.
\end{thm}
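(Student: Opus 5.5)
The plan is to compute the $2\times2$ matrix $\mathbb{R}^{(s)}(\varphi)=\mathbb{Q}^{(s)}(\varphi)\mathbb{P}^{(s)}(\varphi)$ in closed form for arbitrary $s$, as was done for $s=2$, and then reduce the Kesten--Stigum inequality to a one-variable problem.

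\textbf{Step 1: closed form for $\lambda_2$.} At the symmetric fixed point the matrix $\mathbb{R}^{(s)}$ is stochastic, and the $j\mapsto -j$ symmetry of $\mathbb{P}^{(s)}$ and $\mathbb{Q}^{(s)}$ makes it of the form $\bigl[\begin{smallmatrix}a&b\\ b&a\end{smallmatrix}\bigr]$. Hence $\lambda_{\max}^{(s)}=a-b$. This equals
\[
\lambda_{\max}^{(s)}(\varphi)=\sum_{j=-s}^{s}\mathbb{Q}_{-,j}\bigl(P_{j,-1/2}-P_{j,1/2}\bigr),
\]
where $P_{j,-1/2}-P_{j,1/2}=\frac{1-\varphi^{2j}}{1+\varphi^{2j}}$ and $\mathbb{Q}_{-,j}=X_j\varphi^{-j}/S_-$ with $X_j=\bigl(\frac{\varphi^{2j}+1}{2\varphi^{j}}\bigr)^3$, as in \eqref{eq:Q_entries}. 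Pairing the terms $j$ and $-j$ gives the contribution $(\varphi^{2n}+1)^2(\varphi^{2n}-1)^2/(8\varphi^{4n})$. Setting $c_n:=\varphi^n+\varphi^{-n}\ge 2$, I expect
\[
\lambda_{\max}^{(s)}(\varphi)=\frac{\sum_{n=1}^{s}c_n^2(c_n^2-4)}{8+\sum_{n=1}^{s}c_n^4}
=1-\frac{4\bigl(2+\sum_{n=1}^{s}c_n^2\bigr)}{8+\sum_{n=1}^{s}c_n^4}.
\]
I will check this against \eqref{eq:lambda2-factorised-s2} for $s=2$.

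\textbf{Step 2: qualitative properties.} The formula above yields $\lambda_{\max}^{(s)}\ge0$. It also shows the symmetry $\lambda_{\max}^{(s)}(\varphi)=\lambda_{\max}^{(s)}(1/\varphi)$, because each $c_n$ is invariant under $\varphi\mapsto 1/\varphi$. At $\varphi=1$ all $c_n=2$, so $\lambda_{\max}^{(s)}(1)=0$. As $\varphi\to0$ or $\varphi\to\infty$ the numerator and denominator are both dominated by $c_s^4$, so $\lambda_{\max}^{(s)}\to1$. Consequently $3|\lambda_{\max}^{(s)}|^2>1$ is equivalent to $\lambda_{\max}^{(s)}>1/\sqrt3$. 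This fails at $\varphi=1$ and holds near $0$ and near $\infty$.

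\textbf{Step 3: exactly two crossings (main obstacle).} To show the solution set is exactly $(0,\varphi_-^{(s)})\cup(\varphi_+^{(s)},\infty)$ with $\varphi_-^{(s)}=1/\varphi_+^{(s)}$, I need $\lambda_{\max}^{(s)}$ to be strictly increasing on $(1,\infty)$. The symmetry then transfers this to $(0,1)$. Set $w_n=c_n^2$; each $w_n$ is increasing in $\varphi>1$ and satisfies $w_n\ge4$. It then suffices to show that
\[
G(w)=\frac{2+\sum_n w_n}{8+\sum_n w_n^2}
\]
decreases along the curve $\varphi\mapsto w(\varphi)$.
\begin{itemize}
\item I would first try the partial derivatives $\partial G/\partial w_n$, whose sign is that of $8+\sum_m w_m^2-2w_n\bigl(2+\sum_m w_m\bigr)$. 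Using the ordering $w_1\le\dots\le w_s$ and the growth $w_s'\gg w_1'$, I would aim to show that the total derivative is negative.
\item If this estimate does not close for general $s$, I will instead treat each $s\in\{1,\dots,5\}$ separately. In each case, $3\lambda_{\max}^2-1=0$ becomes a palindromic polynomial equation in $t=\varphi^2$, of the form $(3\sum c_n^2(c_n^2-4))^2-(8+\sum c_n^4)^2=0$ up to a positive factor. I will count its positive roots with the substitution $u=t+1/t$ (as in the $s=1$ lemma) together with Sturm/Descartes counting.
\end{itemize}

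\textbf{Step 4: numerics and conclusion.} The numerical roots in Table~\ref{tab:KS_results} are then obtained by solving $\lambda_{\max}^{(s)}(\varphi)=1/\sqrt3$ on $(1,\infty)$ and inverting to get the root in $(0,1)$. Non-extremality of $\mu_0^{(s)}$ on these intervals follows from the Kesten--Stigum criterion \eqref{ex:KS-Condition-k=3}, i.e. Theorem~\ref{thm:reconstruction-}, applied with $k=3$ to the effective two-step chain.
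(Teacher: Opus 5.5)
Your route is correct, and it is genuinely different from the paper's. The paper does three things: it writes out $\mathbb{R}^{(2)}$ explicitly, it asserts (``one checks'') that $\lambda_2$ is monotone on $(0,1)$ and on $(1,\infty)$, and for the other $s$ it computes the second eigenvalue of $\mathbb{Q}^{(s)}\mathbb{P}^{(s)}$ numerically in Mathematica to fill Table~\ref{tab:KS_results}. So in the paper the word ``precisely'' rests on numerics alone.

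Your closed form $\lambda_{\max}^{(s)}=\sum_n c_n^2(c_n^2-4)\big/\bigl(8+\sum_n c_n^4\bigr)$ is right. It reproduces \eqref{eq:lambda2-factorised-s2}, and it equals $(A-B)/S_s$ for the $A,B$ of the appendix. For $s=1$ it becomes $(u^2-4)/(u^2+4u+12)$ with $u=\varphi^2+\varphi^{-2}$, which crosses $1/\sqrt3$ at $\varphi\approx 2.8957$, as tabulated. It gives positivity, the $\varphi\mapsto 1/\varphi$ symmetry, $\lambda(1)=0$ and $\lambda\to 1$ for every $s$ at once. Once Step~3 is done, you therefore have a proof of the two-interval structure with reciprocal endpoints for all $s$. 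Numerics are then needed only to locate a single root of $\lambda_{\max}^{(s)}=1/\sqrt3$ on $(1,\infty)$. The paper's route is quicker, but for $s\neq 2$ it gives no argument that there are exactly two crossings.

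Step~3 closes along your first bullet once the ``growth'' is made precise as an ordering of logarithmic derivatives: $w_n'/w_n=\frac{2n}{\varphi}\tanh(n\ln\varphi)$ increases with $n$ for $\varphi>1$. The sign of $dG/d\varphi$ is that of $T=\sum_n w_n'\bigl[8+\sum_m w_m^2-2w_n\bigl(2+\sum_m w_m\bigr)\bigr]$. Grouping the $(n,m)$ and $(m,n)$ terms gives
\[
T=\sum_n w_n'\bigl(8-4w_n-w_n^2\bigr)-\sum_{n<m}\Bigl[w_m'w_n(2w_m-w_n)-w_n'w_m(w_m-2w_n)\Bigr].
\]
Since $w_m'w_n\ge w_n'w_m$ and $2w_m-w_n>0$, each bracket is at least $w_n'w_m(w_m+w_n)>0$. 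Also $w_n\ge 4$ makes the first sum negative. Hence $T<0$, and $\lambda_{\max}^{(s)}=1-4G$ is strictly increasing on $(1,\infty)$.

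Two small corrections:
\begin{itemize}
\item In your fallback, write $N=\sum_n c_n^2(c_n^2-4)$ and $D=8+\sum_n c_n^4$. Then $3\lambda^2=1$ reads $3N^2=D^2$, or equivalently $\sqrt3\,N=D$ since $N,D\ge 0$. It does not read $(3N)^2=D^2$, which would locate $\lambda=1/3$ instead.
\item In Step~4, Theorem~\ref{thm:reconstruction-} with $b=3$ does not apply verbatim to $\mathbb{R}^{(s)}$. A $\Psi$-vertex has nine $\Psi$-grandchildren, and they are correlated in groups of three. Keep one grandchild through each $\Phi$-child: this gives a $3$-ary subtree whose edges are independent copies of $\mathbb{R}^{(s)}$. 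Reconstruction on this subtree implies reconstruction on the whole tree. The paper uses this step tacitly too.
\end{itemize}
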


\begin{figure}[h!]
    \centering
    \includegraphics[width=0.6\textwidth]{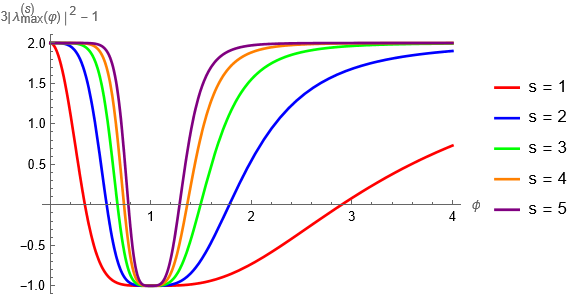}
    \caption{(Color online) Plot of the Kesten--Stigum function 
    $g(s,\varphi) = 3|\lambda_{\max}^{(s)}(\varphi)|^2 - 1$ 
    for the mixed-spin $(s,\tfrac{1}{2})$ Ising model on the CT of order three, for $s = 1,\dots,5$. 
    For each $s$, the interval of $\varphi$ where $g(s,\varphi) > 0$ identifies the non-extremality (reconstruction) regime of the disordered Gibbs measure $\mu_0^{(s)}$.}
    \label{fig:ks_non_extremality_plot}
\end{figure}

Figure~\ref{fig:ks_non_extremality_plot} displays the Kesten--Stigum function 
$g(s,\varphi)=3|\lambda_{\max}^{(s)}(\varphi)|^{2}-1$ for the mixed $(s,1/2)$ 
Ising model on the ternary CT, with spin parameters $s=1,\dots,5$. 
For each $s$, the condition $g(s,\varphi)>0$ marks the set of coupling values 
$\varphi$ for which the disordered Gibbs measure $\mu_{0}^{(s)}$ is 
non‑extremal, that is, the regime in which reconstruction is possible. 
As $s$ increases, the non‑extremal interval widens, in contrast to the 
behavior of the extremal region. This observation raises the following question: 
what is the physical interpretation of the disordered phase $\mu_{0}^{(s)}$ 
in parameter regions that are neither purely extremal nor purely non‑extremal?

\begin{figure}[!htbp]
\centering
\includegraphics[width=0.4\textwidth]{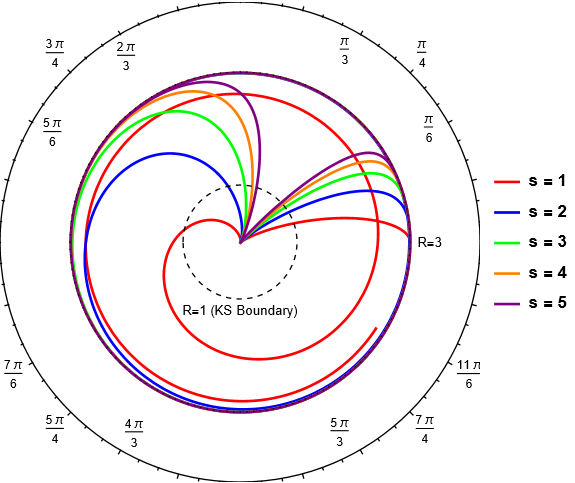}
\caption{(Color online) Polar-coordinate representation of the functions $g(s,\varphi)+1$ shown in Fig.~\ref{fig:ks_non_extremality_plot} for spin quantum numbers $s=1,2,\ldots,5$. This view highlights how the non-extremal region varies with temperature (via $\varphi$) and with the spin value.}
\label{fig:ks_non_extremality_Polar-plot}
\end{figure}

Figure~\ref{fig:ks_non_extremality_Polar-plot} presents a polar-coordinate version of the functions plotted in Fig.~\ref{fig:ks_non_extremality_plot}. In this representation, the dependence of $g(s,\varphi)+1$ on the temperature parameter $\varphi$ becomes transparent, making it easy to track how the curves evolve as $\varphi$ increases toward satisfying the KS condition. As shown, $g(s,\varphi)+1$ approaches the limiting value $3$ asymptotically. The disordered phases are not nonextremal inside the unit circle ($R<1$), whereas they become nonextremal in the annular region bounded by $R=1$ and $R=3$ (i.e., $1<R<3$).

\subsection{Comparative analysis of extremality criteria and determination of inconclusive regions}

In this subsection we identify the parameter ranges in which the extremality status of the disordered phase cannot be settled, by quantifying the gap between the Dobrushin and Kesten--Stigum thresholds, with emphasis on the spin-\(2\) case. For convenience, the Dobrushin-based extremality check is outlined in Algorithm~\ref{alg:dobrushin_extremality}. We label the set \(\mathcal{D}^{(2)}(\varphi)\ge 0\) as \emph{inconclusive}, since meeting this condition (or, likewise, the Kesten--Stigum bound) is necessary for non-extremality but does not, by itself, certify it.

As the subsequent \(s=2\) analysis shows, overlaying the proven extremal region with the region predicted to be nonextremal leaves two intermediate intervals in which the disordered phase cannot be definitively classified as extremal or nonextremal (see Fig. ~\ref{fig:inconclusive-region1}). In the corresponding phase diagram, the blue region marks the verified nonextremal regime determined by the Kesten--Stigum criterion, whereas the orange region denotes the extremal regime ensured by the Dobrushin condition. The remaining white bands indicate parameter values where neither criterion yields a decisive conclusion, thereby visualizing the separation between the two theoretical bounds in the spin-\(2\) setting.

More generally, we adopt the same convention for an arbitrary spin \(s\): the domain where \(\mathcal{D}^{(s)}(\varphi) \ge 0\) is treated as inconclusive, as these threshold tests provide only one-sided guarantees of extremality. Consequently, even when the extremal and predicted non-extremal regions are identified, there may be intermediate parameter windows (cf. Fig.~\ref{fig:inconclusive-region1}), where the qualitative nature of the disordered phase remains unresolved. As demonstrated in this comparative analysis, Theorem~\ref{thm:reconstruction-} fails to provide a definitive classification for the mixed-spin Ising model within these specific regimes.

\begin{figure}[!htbp]
\centering
\includegraphics[width=110mm]{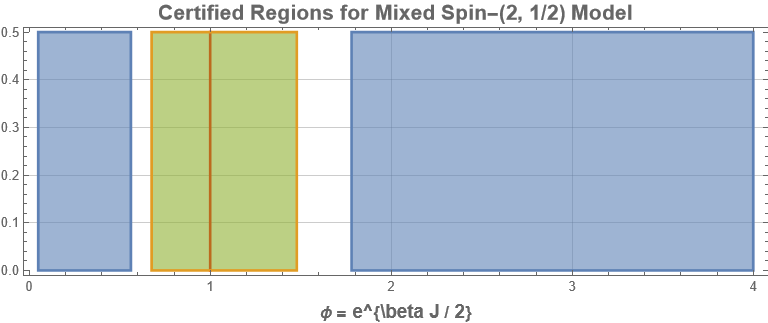}
\caption{(Color online) The blue area represents the confirmed non-extremal phase, while the orange area shows the region where the disordered phase is extremal. The white areas correspond to regions that do not satisfy either of these conditions.}
\label{fig:inconclusive-region1}
\end{figure}

\subsection{Three readings of $\varphi=e^{J/(2T)}$ (local stability, extremality, reconstruction): the sharp $s=5$ comparison}
\label{subsec:s5-three-way-comparison}

We encode coupling vs.\ temperature by the single dimensionless parameter
\[
\varphi=\exp\!\Big(\frac{J}{2T}\Big),\qquad (k_B=1),
\]
so that
\[
J>0 \iff \varphi>1 \ \text{(ferromagnetic)},\qquad
J<0 \iff 0<\varphi<1 \ \text{(antiferromagnetic)}.
\]
Because the model is bipartite, it is natural to measure information survival via the
\emph{two-step} effective channel on $\Psi$,
\[
\mathbb R^{(s)}_\Psi(\varphi)=\mathbb Q^{(s)}(\varphi)\,\mathbb P^{(s)}(\varphi),
\]
which compares the states on the same parity sublattice (grandparent $\to$ grandchild). In
particular, the relevant long-range mode is governed by the magnitude of the second-largest eigenvalue $|\lambda_{\max}^{(s)}(\varphi)|$, and the resulting thresholds are expected to
appear in reciprocal pairs under $\varphi\mapsto 1/\varphi$ (equivalently, $J\mapsto -J$).

\medskip
\noindent\textbf{For $s=5$ there are three numerically determined ``milestones'', one per language}

\paragraph{(i) Dynamical stability / phase-transition signal (repelling $\ell_0^{(5)}$).}
Theorem~\ref{thm:repelling-l0} states that the symmetric fixed point $\ell_0^{(5)}$ is
\emph{repelling} for
\[
\varphi\in(0,\,0.901258\ldots)\ \cup\ (1.109560\ldots,\,\infty),
\]
hence it is locally \emph{attracting} only in the narrow high-temperature window
$\varphi\in(0.901258\ldots,\,1.109560\ldots)$ around $\varphi=1$. In terms of the
dimensionless ratio
\[
\frac{|J|}{T}=2|\log\varphi|,
\]
the loss of local stability begins at approximately $|J|/T\simeq 0.208$ (on both ferro and
antiferro sides).

\paragraph{(ii) Extremality $\leftrightarrow$ non-reconstruction (Dobrushin certification)}
For $s=5$ (see Tab. \ref{tab:extremality-1-5}), the Dobrushin-type sufficient condition
\[
3\,\tau_{\mathbb P^{(5)}}(\varphi)\,\tau_{\mathbb Q^{(5)}}(\varphi)<1
\]
holds on the numerical interval
\[
\varphi\in(0.847,\,1.180).
\]
Therefore, throughout this region the free-boundary disordered state $\mu_0^{(5)}$ is
\emph{certified extremal}, equivalently boundary information is \emph{certified} to die out
(non-reconstruction).

\paragraph{(iii) Two-step channel amplification / reconstruction (Kesten--Stigum certification).}
For $s=5$, the KS sufficient condition
\[
3\,\bigl|\lambda_2^{(5)}(\varphi)\bigr|^2>1
\]
is certified on
\[
\varphi\in(0,\,0.7776)\ \cup\ (1.2861,\,\infty),
\]
equivalently at strong coupling $|J|/T\gtrsim 2\ln(1.2861)\approx 0.503$.
On the antiferromagnetic side ($\varphi\ll 1$), the surviving long-range mode is naturally
\emph{staggered} across levels; using the two-step kernel $\mathbb R_\Psi$ aligns with this
because it probes same-parity correlations, so anti-alignment on single edges does not
artificially cancel the signal.

\begin{figure}[htbp]
    \centering
    \includegraphics[width=0.45\textwidth]{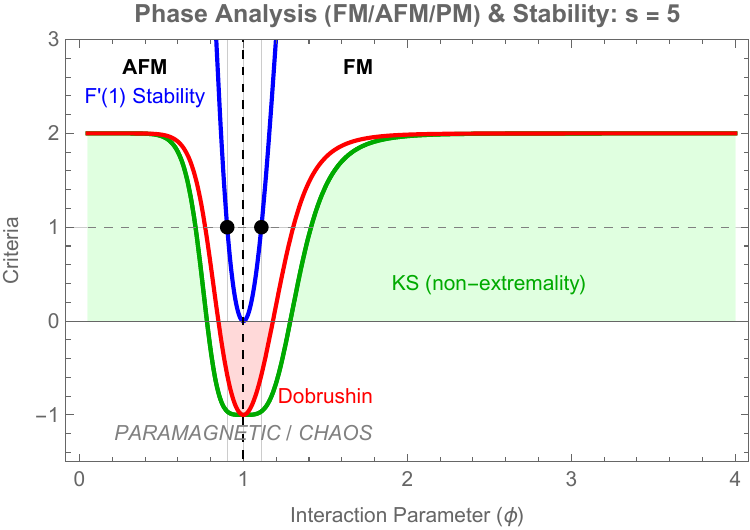}
    \caption{\textbf{Phase diagram and stability analysis for the mixed $(s, 1/2)$ model ($s=5$).} 
    The plot illustrates the correspondence between the stability criterion $|F'(1)|$ (blue), Kesten-Stigum bound $g_{KS}$ (green), and Dobrushin bound $D_{dob}$ (red). 
    The black points indicate the intersection with the stability threshold $y=1$ at $\varphi_1$ and $\varphi_2$. 
    The shaded regions represent the extremality (green) and nonextremality (red) phases. 
    The dashed vertical line at $\varphi=1$ marks the purely paramagnetic state.}
    \label{fig:stability_reconstruction}
\end{figure}

\medskip
\noindent\textbf{Putting the three languages side-by-side: phase transition $\neq$ reconstruction.}
Because (ii) and (iii) are only \emph{sufficient} criteria, they need not meet at a single
threshold. For $s=5$ this produces a clean ``separation of phenomena.”
\paragraph{Eight parameter regimes for $\varphi$ (ferromagnetic and antiferromagnetic sides, $s=5$).}
Combining both sides, we obtain the following eight regimes (see Fig.~\ref{fig:stability_reconstruction}):
\begin{enumerate}
\item[\textbf{(F1)}] \textbf{$1<\varphi<1.109560\ldots$}:
$\ell_0^{(5)}$ is attracting and Dobrushin applies, hence extremality/non-reconstruction holds.

\item[\textbf{(F2)}] \textbf{$1.109560\ldots<\varphi<1.180$}:
$\ell_0^{(5)}$ is repelling (multiplicity signal), yet Dobrushin still certifies extremality/non-reconstruction.

\item[\textbf{(F3)}] \textbf{$1.180<\varphi<1.2861$}:
$\ell_0^{(5)}$ is repelling; \emph{gray zone}: neither Dobrushin nor KS decides (sharper tools needed).

\item[\textbf{(F4)}] \textbf{$\varphi>1.2861$}:
KS certifies reconstruction (hence non-extremality/long-range memory), and in the same regime, $\ell_0^{(5)}$ is repelling, indicating an ongoing local instability consistent with a phase-transition mechanism.

\item[\textbf{(AF1)}] \textbf{$0.901258\ldots<\varphi<1$}:
$\ell_0^{(5)}$ is attracting and Dobrushin applies, hence extremality/non-reconstruction holds.

\item[\textbf{(AF2)}] \textbf{$0.847<\varphi<0.901258\ldots$}:
$\ell_0^{(5)}$ is repelling, yet Dobrushin still certifies extremality (anti-mirror of the ferromagnetic band).

\item[\textbf{(AF3)}] \textbf{$0.7776<\varphi<0.847$}:
$\ell_0^{(5)}$ is repelling, but this remains a \emph{gray zone} (undecided by these criteria).
: undecided based on sufficient criteria.

\item[\textbf{(AF4)}] \textbf{$0<\varphi<0.7776$}:
The KS certifies reconstruction/non-extremality (staggered long-range memory), and in the same range $\ell_0^{(5)}$ is also repelling.
\end{enumerate}

\medskip
\noindent\textbf{Critical ferromagnetic band.}
For $s=5$, the interaction parameter exhibits a distinguished ferromagnetic window
\[
1.109560\ldots<\varphi<1.180,
\]
in which a local instability has already set in, while the free-boundary disordered measure
remains extremal. In particular, this regime carries a phase-transition signal without
(reconstructible) long-range memory; see Fig.~\ref{fig:stability_reconstruction}.

\medskip
\noindent\textbf{Reciprocity check.}
The numerically observed thresholds on the antiferromagnetic side are, to high accuracy,
reciprocals of those on the ferromagnetic side (cf.\ Fig.~\ref{fig:stability_reconstruction}):
\[
0.901258\ldots \approx \frac{1}{1.109560\ldots},\qquad
0.7776\approx \frac{1}{1.2861},\qquad
0.847\approx \frac{1}{1.180}.
\]
This agrees with the $\varphi\mapsto 1/\varphi$ symmetry of the two-step (same-parity)
description and provides a strong internal consistency check that the ferro and antiferro
regimes were incorporated coherently into all three criteria.

\medskip
\noindent\textbf{Phase transition $\neq$ reconstruction for $s=5$ (and the antiferromagnetic mirror).}
For $s=5$, the loss of local stability of the symmetric fixed point (a \emph{local} phase-transition
signal) occurs strictly before the KS-certified reconstruction threshold. On the ferromagnetic side
($\varphi>1$), the three criteria produce the nested structure
\[
\underbrace{1<\varphi<\varphi_{\mathrm{stab}}^{+}}_{\substack{\ell_{0}^{(5)}\ \text{attracting}\\ \text{(locally stable disordered)}}}
\ \subset\
\underbrace{1<\varphi\le \varphi_{\mathrm{Dob}}^{+}}_{\substack{\text{Dobrushin certifies}\\ \text{extremal / non-reconstruction}}}
\ <\
\underbrace{\varphi_{\mathrm{Dob}}^{+}<\varphi<\varphi_{\mathrm{KS}}^{+}}_{\substack{\text{gap: neither criterion decides}\\ \text{(sharper tools needed)}}}
\ <\
\underbrace{\varphi>\varphi_{\mathrm{KS}}^{+}}_{\substack{\text{KS certifies}\\ \text{reconstruction / non-extremal}}},
\]
with numerical values
\[
\varphi_{\mathrm{stab}}^{+}=1.109560\ldots,\qquad
\varphi_{\mathrm{Dob}}^{+}\approx 1.180,\qquad
\varphi_{\mathrm{KS}}^{+}\approx 1.2861.
\]
Accordingly, the \emph{critical} ferromagnetic band
\[
\varphi_{\mathrm{stab}}^{+}<\varphi\le \varphi_{\mathrm{Dob}}^{+}
\quad (1.109560\ldots<\varphi\lesssim 1.180)
\]
is precisely the regime in which a phase-transition mechanism is already indicated
($\ell_0^{(5)}$ is repelling), yet extremality of the free-boundary disordered state is still
\emph{certified} (no reconstructible long-range memory).

On the antiferromagnetic side ($0<\varphi<1$), the picture mirrors under $\varphi\mapsto 1/\varphi$:
\[
\underbrace{\varphi<\varphi_{\mathrm{KS}}^{-}}_{\substack{\text{KS certifies}\\ \text{reconstruction / non-extremal}}}
\ <\
\underbrace{\varphi_{\mathrm{KS}}^{-}<\varphi<\varphi_{\mathrm{Dob}}^{-}}_{\substack{\text{gap: neither criterion decides}}}
\ <\
\underbrace{\varphi_{\mathrm{Dob}}^{-}\le \varphi<1}_{\substack{\text{Dobrushin certifies}\\ \text{extremal / non-reconstruction}}}
\ \supset\
\underbrace{\varphi_{\mathrm{stab}}^{-}<\varphi<1}_{\substack{\ell_{0}^{(5)}\ \text{attracting}}},
\]
where
\[
\varphi_{\mathrm{stab}}^{-}=0.901258\ldots,\qquad
\varphi_{\mathrm{Dob}}^{-}\approx 0.847,\qquad
\varphi_{\mathrm{KS}}^{-}\approx 0.7776.
\]
Thus, the antiferromagnetic counterpart of the band ``phase transition $\neq$ reconstruction'' is
\[
\varphi_{\mathrm{Dob}}^{-}\le \varphi<\varphi_{\mathrm{stab}}^{-}
\quad (0.847\lesssim \varphi<0.901258\ldots).
\]

\section{Markov Entropy Rate and Thermodynamic Disorder: A Dual Perspective}\label{sec:entropy}

In this section, we present the derivation of both the thermodynamic entropy and the specific (per-site) entropy for the mixed $(s,\tfrac12)$-spin Ising model defined on the CT. Our approach explicitly leverages the associated Markov chain representation of the system to express these entropy measures \cite{Reza1994}. Notably, we successfully obtained an explicit formula for the entropy rate at the symmetric fixed point for the particular case of the $(s,\tfrac12)$ model on a CT of order three, expressing the result as a function dependent solely on the thermal parameter, $\varphi$.

For the $(s,\tfrac12)$ mixed-spin Ising model on an $n$-shell CT, the
finite-volume Gibbs measure $\mu_n^{\mathbf{h}}(\xi_n)$ under a boundary
field $\mathbf{h}$ is given by the formula in~\eqref{Gibbs1}. The
finite-volume thermodynamic entropy in $V_n$ is defined by
\begin{equation}\label{eq:Sn-def}
S_n(\beta)
=
-\sum_{\xi_n}
\mu_n^{\mathbf{h}}(\xi_n)\,
\log \mu_n^{\mathbf{h}}(\xi_n).
\end{equation}
Substituting \eqref{Gibbs1} into \eqref{eq:Sn-def} yields
\begin{equation}\label{eq:Sn-explicit}
\begin{aligned}
S_n(\beta)
&=
-\sum_{\xi_n}\mu_n^{\mathbf{h}}(\xi_n)
\biggl[
-\log Z_n(\beta,\mathbf{h})
-\beta H_n(\xi_n)
+
\sum_{x\in W_n} h_{\xi(x)}(x)
\biggr]
\\[2mm]
&=
\log Z_n(\beta,\mathbf{h})
+
\beta\,\mathbb{E}_{\mu_n^{\mathbf{h}}}\bigl[H_n(\xi_n)\bigr]
-
\sum_{x\in W_n}
\mathbb{E}_{\mu_n^{\mathbf{h}}}\bigl[h_{\xi(x)}(x)\bigr].
\end{aligned}
\end{equation}
The specific (per-site) entropy is defined by the thermodynamic limit
\begin{equation*}\label{eq:entropy-density}
s(\beta)
=
\lim_{n\to\infty}\frac{S_n(\beta)}{|V_n|},
\end{equation*}
whenever the limit exists.

Introducing the free energy per site
\begin{equation*}\label{eq:free-energy-density}
f(\beta)
=
-\frac{1}{\beta}\,
\lim_{n\to\infty}\frac{1}{|V_n|}
\log Z_n(\beta,\mathbf{h}),
\end{equation*}
and the internal energy per site
\begin{equation*}\label{eq:energy-density}
e(\beta)
=
\lim_{n\to\infty}
\frac{1}{|V_n|}
\mathbb{E}_{\mu_n^{\mathbf{h}}}\bigl[H_n(\xi_n)\bigr],
\end{equation*}
one obtains the standard thermodynamic relation
\begin{equation*}\label{eq:s-ef}
s(\beta)
=
\beta\bigl(e(\beta)-f(\beta)\bigr),
\end{equation*}
with the Boltzmann constant $k_B=1$.

In our parametrisation $\varphi = e^{\frac{\beta J}{2}}$, differentiation
with respect to $\beta$ satisfies
\begin{align*}\label{eq:beta-rho-derivative}
\frac{\partial}{\partial \beta}
=
\frac{J}{2}\,\varphi\,
\frac{\partial}{\partial \varphi}.
\end{align*}
Hence, if $f(\varphi)$ is known explicitly as a function of $\varphi$, then
\begin{equation*}\label{eq:s-rho-form}
s(\beta)
=
\beta^2\frac{\partial f}{\partial\beta}
=
\beta^2\frac{J}{2}\,\varphi\,\frac{\partial f}{\partial\varphi}.
\end{equation*}

\subsubsection*{Entropy Perspectives: Thermodynamic vs. Information-Theoretic}

A fundamental distinction exists between thermodynamic entropy and entropy rate in terms of their scope and characterization. Thermodynamic entropy ($S$) serves as a global, static measure of total disorder within the system, conventionally derived from the free energy $F$ relative to temperature $T$ via the relation $S = -\frac{\partial F}{\partial T}$ (see \cite{Akin2025IJMPB,Akin23CJP}). In this context, the free energy is obtained from the partition function $Z$ of the model, representing the macroscopic state of the entire lattice. 

In contrast, the entropy rate ($h$) provides a local, process-oriented perspective within an information-theoretic framework and quantifies the average information produced by a stochastic process per step. For the mixed-spin $(s, 1/2)$ model on a CT, while the thermodynamic entropy characterizes the equilibrium state of the collective system, the entropy rate represents the growth rate of uncertainty as the ancestral signal propagates from the root to the leaves of the CT.

By employing the Markov entropy rate, we can bridge the gap between local spin interactions and the global extremality of the Gibbs measure. This approach provides a more rigorous quantification of information preservation across phase boundaries because it directly links the spectral properties of the transition matrices to the dynamic stability of the disordered phase.

\subsection{Entropy rate from the Markov-chain representation}\label{sec:entropy_rate_derivation}

For TISGMs, the model admits a Markov chain representation along the CT shells.
The shell-to-shell dynamics between consecutive layers, corresponding to the $(s,\tfrac12)$-spin
sublattices, are governed by the stochastic matrices $\mathbb{P}^{(s)}(\varphi)$ and
$\mathbb{Q}^{(s)}(\varphi)$ given in \eqref{eq:Matrix-P-arbitrary-s} and
\eqref{eq:Matrix-Q-arbitrary-s}. The induced $2\times 2$ transition matrix on the spin-$\tfrac12$ state space
$\Psi=\{-\tfrac12,\tfrac12\}$ is defined by
\begin{equation}\label{eq:R-def}
\mathbb{R}_\Psi^{(s)}(\varphi)
:=
\mathbb{Q}^{(s)}(\varphi)\,\mathbb{P}^{(s)}(\varphi)
=
\bigl(R^{(s)}_{ij}(\varphi)\bigr)_{i,j\in\Psi}.
\end{equation}

This construction yields a two-state time-homogeneous Markov chain along the CT shells. 
Its stationary distribution $\pi^{(s)}(\varphi)=(\pi^{(s)}_i(\varphi))_{i\in\Psi}$ is characterized by
\begin{equation}\label{eq:stationary-R}
\pi^{(s)}(\varphi)\,\mathbb{R}_\Psi^{(s)}(\varphi)=\pi^{(s)}(\varphi),
\qquad
\sum_{i\in\Psi}\pi^{(s)}_i(\varphi)=1.
\end{equation}
The (information-theoretic) entropy rate of this chain, denoted by $h_{\mathrm{MC}}^{(s)}(\varphi)$, is
\begin{equation}\label{eq:markov-entropy-rate}
h_{\mathrm{MC}}^{(s)}(\varphi)
=
-\sum_{i\in\Psi}\pi^{(s)}_i(\varphi)\sum_{j\in\Psi}
R^{(s)}_{ij}(\varphi)\,\log R^{(s)}_{ij}(\varphi).
\end{equation}

\paragraph{Two-step induced chain on the $\Psi$-layers.}
Since the shells alternate between $\Psi$- and $\Phi$-layers, the product
$\mathbb{R}^{(s)}_{\Psi}(\varphi)=\mathbb{Q}^{(s)}(\varphi)\,\mathbb{P}^{(s)}(\varphi)$
describes the induced two-step transition from one $\Psi$-shell to the next one.
For TISGMs the parameter $\varphi$ does not depend on the shell index, so
$\mathbb{R}^{(s)}_{\Psi}(\varphi)$ is time-homogeneous and admits the stationary distribution
$\pi^{(s)}(\varphi)$ in \eqref{eq:stationary-R}; the corresponding entropy rate is given by
\eqref{eq:markov-entropy-rate}.

If, more generally, $\varphi=\varphi_t$ varies with the shell index, the induced chain becomes
time-inhomogeneous. In that case one considers the stepwise conditional entropies
\[
h_t = H(X_{t+1}\mid X_t), 
\qquad 
p_{t+1}=p_t\,\mathbb{R}^{(s)}_{\Psi}(\varphi_t),
\]
and averages them over time, for instance via
\[
\overline{h}_n=\frac{1}{n}\sum_{t=0}^{n-1} h_t.
\]

\subsubsection{Parametrization and Explicit Formula}
Assume $a^{(s)}(\varphi),\, b^{(s)}(\varphi)\in[0,1]$ and $a^{(s)}(\varphi)+b^{(s)}(\varphi)>0$.
 We parametrize the $2\times 2$ transition matrix from $\Psi$ to $\Psi$ as
\[
\mathbb{R}_\Psi^{(s)}(\varphi)=
\begin{pmatrix}
1-a^{(s)}(\varphi) & a^{(s)}(\varphi)\\
b^{(s)}(\varphi) & 1-b^{(s)}(\varphi)
\end{pmatrix}.
\]
With this parametrization, the stationary distribution is
\[
\pi^{(s)}=
\left(
\frac{b^{(s)}(\varphi)}{a^{(s)}(\varphi)+b^{(s)}(\varphi)},
\;
\frac{a^{(s)}(\varphi)}{a^{(s)}(\varphi)+b^{(s)}(\varphi)}
\right).
\]
Substituting $\pi^{(s)}$ into \eqref{eq:markov-entropy-rate} yields the entropy rate in the explicit form
\begin{equation}\label{eq:markov-entropy-rate-ab}
\begin{aligned}
h_{\mathrm{MC}}^{(s)}(\varphi)
&=
-\frac{b^{(s)}(\varphi)}{a^{(s)}(\varphi)+b^{(s)}(\varphi)}
\Bigl[
\bigl(1-a^{(s)}(\varphi)\bigr)\log\bigl(1-a^{(s)}(\varphi)\bigr)
+
a^{(s)}(\varphi)\log a^{(s)}(\varphi)
\Bigr]
\\[1mm]
&\quad
-\frac{a^{(s)}(\varphi)}{a^{(s)}(\varphi)+b^{(s)}(\varphi)}
\Bigl[
b^{(s)}(\varphi)\log b^{(s)}(\varphi)
+
\bigl(1-b^{(s)}(\varphi)\bigr)\log\bigl(1-b^{(s)}(\varphi)\bigr)
\Bigr].
\end{aligned}
\end{equation}
Equivalently, using the binary entropy function $H_2$, we may write
\begin{equation}\label{eq:markov-entropy-rate-binary}
h_{\mathrm{MC}}^{(s)}(\varphi)
=
\frac{b^{(s)}(\varphi)}{a^{(s)}(\varphi)+b^{(s)}(\varphi)}\,H_2\!\bigl(a^{(s)}(\varphi)\bigr)
+
\frac{a^{(s)}(\varphi)}{a^{(s)}(\varphi)+b^{(s)}(\varphi)}\,H_2\!\bigl(b^{(s)}(\varphi)\bigr),
\end{equation}
where
\[
H_2(x) = -\bigl[(1-x)\log(1-x)+x\log x\bigr], \qquad 0\le x\le 1.
\]

\subsubsection{Connection to Specific Entropy}
On a $k$-regular CT, the asymptotic density of boundary nodes satisfies $|W_n|/|V_n|\to (k-1)/k$ as $n\to\infty$. Consequently, the contribution of the spin-$\tfrac12$ boundary process to the \textbf{specific entropy $s(\beta)$} is directly proportional to the entropy rate $h_{\mathrm{MC}}^{(s)}(\varphi)$ given in \eqref{eq:markov-entropy-rate}. Crucially, this entropy rate remains strictly positive along the non-extremality (reconstruction) region, which is defined by the KS condition. This positive value reflects the \textbf{persistence of information} regarding the boundary condition deep inside the tree structure.

\subsection{ Markov chain representation and entropy rate for $s=1$, $k=3$}

We now specialise to the $(1,\tfrac12)$ mixed-spin Ising model on the CT of order $k=3$. The spin sets are
\[
\Phi=\{-1,0,1\},
\qquad
\Psi=\Bigl\{-\tfrac12,\tfrac12\Bigr\},
\]
and we focus on the symmetric (disordered) fixed point of the corresponding
TISGMs. Introducing
\[
X_i=e^{U_i}=e^{h_i-h_0},\quad i=\pm1,
\qquad
Z=e^{R}=e^{\widetilde{h}_{\tfrac12}-\widetilde{h}_{-\tfrac12}},
\qquad
X_0=1,
\]
the dynamical system for $k=3$ takes the form
\begin{align}
X_{-1}
&=
\left(
\frac{\varphi^{2}+Z}{\varphi(1+Z)}
\right)^{3},
\label{ds-1-k3-a}
\\[3pt]
X_1
&=
\left(
\frac{1+\varphi^{2}Z}{\varphi(1+Z)}
\right)^{3},
\label{ds-1-k3-b}
\\[3pt]
Z
&=
\left(
\frac{X_{-1} + \varphi X_0 + \varphi^{2} X_{1}}
     {\varphi^{2}X_{-1} + \varphi X_0 + X_{1}}
\right)^{3},
\label{ds-1-k3-c}
\end{align}
with $X_0=1$. Eliminating $X_{\pm1}$ from \eqref{ds-1-k3-a}–\eqref{ds-1-k3-b}
and substituting into \eqref{ds-1-k3-c}, one obtains the one-dimensional
rational map
\begin{equation}\label{FZ-s1k3}
F(Z)
=
\left(
\frac{
\varphi^{4}(1+Z)^{3}
+
\varphi^{2}(1+\varphi^{2}Z)^{3}
+
(Z+\varphi^{2})^{3}
}{
\varphi^{4}(1+Z)^{3}
+
\varphi^{2}(Z+\varphi^{2})^{3}
+
(1+\varphi^{2}Z)^{3}
}
\right)^{3},
\end{equation}
so that the fixed points satisfy $F(Z)=Z$.  
The disordered phase corresponds to a symmetric fixed point, $Z=1$.

\subsubsection{One-step transition matrices.}
From \eqref{eq:Matrix-P-arbitrary-s}, the transition matrix from the spin-$1$ shell $\Phi$ to the spin-$\tfrac12$ shell
$\Psi$ is defined by
\begin{align}\label{eq:Stochastic-M-P-s=1}
\mathbb{P}^{(1)}(\varphi) =
\begin{pmatrix}
\dfrac{\varphi^{2}}{\varphi^{2}+1} & \dfrac{1}{\varphi^{2}+1} \\[4pt]
\dfrac{1}{2}                 & \dfrac{1}{2}         \\[4pt]
\dfrac{1}{\varphi^{2}+1}        & \dfrac{\varphi^{2}}{\varphi^{2}+1}
\end{pmatrix},  
\end{align}
where the rows are indexed by $i\in\{-1,0,1\}$ and the columns by
$j\in\{-\tfrac12,\tfrac12\}$.

Similarly, from \eqref{eq:Matrix-Q-arbitrary-s}, the transition matrix from the spin-$\tfrac12$ shell $\Psi$ to the spin-$1$ shell
$\Phi$ is defined by
\begin{equation}\label{eq:Stochastic-M-Q-s=1}
\mathbb{Q}^{(1)}(\varphi)
=
\frac{1}{\varphi^{8} + 4\varphi^{6} + 14\varphi^{4} + 4\varphi^{2} + 1}
\begin{pmatrix}
\varphi^{2}(\varphi^{2}+1)^{3} & 8\varphi^{4} & (\varphi^{2}+1)^{3} \\[4pt]
(\varphi^{2}+1)^{3} & 8\varphi^{4} & \varphi^{2}(\varphi^{2}+1)^{3}
\end{pmatrix},
\end{equation}
where rows correspond to $i=-\tfrac12,\tfrac12$ and columns to
$j=-1,0,1$.

\subsubsection{Markov chain representation and entropy rate for $1/2$-spin layer}

 Taking the matrices $\mathbb{P}^{(1)}(\varphi)$ and $\mathbb{Q}^{(1)}(\varphi)$ given in \eqref{eq:Stochastic-M-P-s=1} and \eqref{eq:Stochastic-M-Q-s=1} 
and forming the product $\mathbb{R}_\Psi^{(1)}(\varphi)=\mathbb{Q}(\varphi)\mathbb{P}(\varphi)$, we find
that the effective $2\times2$ transition matrix on $\Psi$ takes the form
\begin{equation}\label{eq:R-rho-final}
\mathbb{R}_\Psi^{(1)}(\varphi)
=
\begin{pmatrix}
p(\varphi) & 1-p(\varphi) \\
1-p(\varphi) & p(\varphi)
\end{pmatrix},
\end{equation}
where
\begin{equation}\label{eq:p-rho-final}
p(\varphi)
=
\frac{A(\varphi)}{D(\varphi)},
\qquad
A(\varphi) = \varphi^{6} + \varphi^{4} + 2\varphi^{3} + \varphi^{2} + 1,
\qquad
D(\varphi) = 4\varphi^{3} + (\varphi^{2}+1)^{3}.
\end{equation}
Since $0<p(\varphi)<1$ for all $\varphi>0$ and $\mathbb{R}_\Psi^{(1)}(\varphi)$ is symmetric and
stochastic, the stationary distribution is
\[
\pi = \Bigl(\tfrac12,\tfrac12\Bigr).
\]

Using \eqref{eq:markov-entropy-rate-ab}, the entropy rate of the effective Markov chain at
the symmetric fixed point becomes
\begin{equation}\label{eq:hMC-rho-final}
h_{\mathrm{MC}}^{(1)}(\varphi)
=
-\Bigl[
p(\varphi)\,\log p(\varphi)
+
\bigl(1-p(\varphi)\bigr)\,\log\bigl(1-p(\varphi)\bigr)
\Bigr],
\end{equation}
with $p(\varphi)$ given by \eqref{eq:p-rho-final}. This provides an explicit
$\varphi$–dependent expression for the entropy rate at the disordered phase of the
$(1,\tfrac12)$ mixed-spin Ising model on the CT of order three.

\begin{figure}[!htbp]
\centering
\includegraphics[width=80mm]{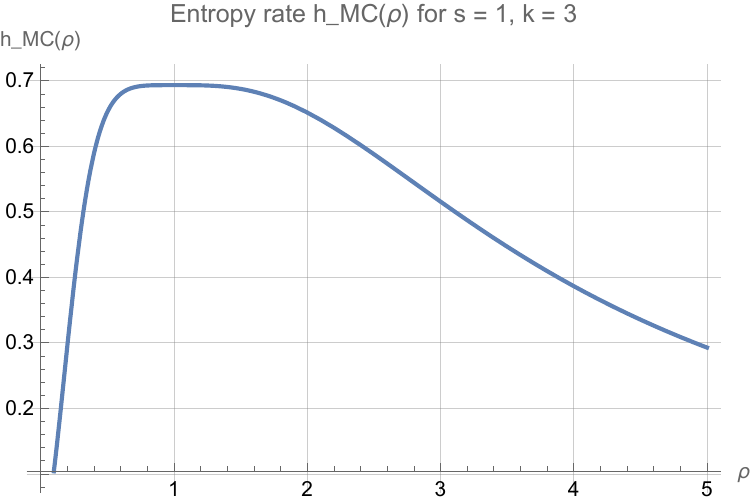}
\caption{(Color online) Entropy rate $h_{\mathrm{MC}}^{(1)}(\varphi)$ for the mixed
spin-$(1,\tfrac12)$ Ising model on a CT of order three at the symmetric
fixed point.}
\label{fig:Entropy1a}
\end{figure}

The dependence of the Markov entropy rate $h_{\mathrm{MC}}^{(1)}(\varphi)$ on $\varphi$
for the mixed spin-$(1,\tfrac12)$ Ising model on a Cayley tree of order $k=3$ is shown
in Fig.~\ref{fig:Entropy1a}. The computation is carried out at the symmetric fixed point
($Z=1$), corresponding to the disordered phase.

Plotted in natural units against the thermal parameter $\varphi=e^{\beta J/2}$, the curve
quantifies the average uncertainty produced per shell-to-shell step in the induced
two-state Markov chain. In particular, it provides an information-theoretic proxy for the
degree of thermodynamic disorder and highlights how changes in the thermal environment
modulate information production. This perspective is useful for assessing the stability of
the disordered phase and for isolating the contribution of the $s=1$ component to the
resulting entropy rate.

\subsubsection{Effective Markov chain on the spin-$1$ layer and entropy rate for $s=1$, $k=3$}\label{subsec:MC-s1k3}

Starting from the one-step stochastic matrices $\mathbb{P}(\varphi)$ and $\mathbb{Q}(\varphi)$ in
\eqref{eq:Stochastic-M-P-s=1} and \eqref{eq:Stochastic-M-Q-s=1}, the induced two-step transition
matrix on the spin-$1$ layer $\Phi=\{-1,0,1\}$ is
\[
\widetilde{\mathbb{R}}_\Phi^{(1)}(\varphi)
:=
\mathbb{P}(\varphi)\,\mathbb{Q}(\varphi)
\in\mathbb{R}^{3\times 3}.
\]
A direct computation yields
\begin{equation}\label{eq:Rphi-rho-correct}
\widetilde{\mathbb{R}}_\Phi^{(1)}(\varphi)
=
\frac{1}{\Delta(\varphi)}
\begin{pmatrix}
a(\varphi) & b(\varphi) & c(\varphi) \\[2pt]
d(\varphi) & b(\varphi) & d(\varphi) \\[2pt]
c(\varphi) & b(\varphi) & a(\varphi)
\end{pmatrix},
\end{equation}
where
\begin{align}
\Delta(\varphi)
&=
\varphi^{8} + 4\varphi^{6} + 14\varphi^{4} + 4\varphi^{2} + 1,
\label{eq:Delta-rho-correct}\\[4pt]
a(\varphi)
&=
(\varphi^{2}+1)^{2}(\varphi^{4}+1),
\label{eq:a-rho-correct}\\[4pt]
b(\varphi)
&=
8\varphi^{4},
\label{eq:b-rho-correct}\\[4pt]
c(\varphi)
&=
2\varphi^{2}(\varphi^{2}+1)^{2},
\label{eq:c-rho-correct}\\[4pt]
d(\varphi)
&=
\frac{(\varphi^{2}+1)^{4}}{2}.
\label{eq:d-rho-correct}
\end{align}
Moreover,
\[
a(\varphi)+b(\varphi)+c(\varphi)=\Delta(\varphi),
\qquad
2d(\varphi)+b(\varphi)=\Delta(\varphi),
\]
so each row of $\widetilde{\mathbb{R}}_\Phi^{(1)}(\varphi)$ sums to $1$, i.e.\ it is row-stochastic.

\medskip
\noindent\textbf{Stationary distribution.}
By symmetry we have $\pi_{-1}(\varphi)=\pi_{1}(\varphi)$ for the stationary distribution
$\pi(\varphi)=(\pi_{-1}(\varphi),\pi_{0}(\varphi),\pi_{1}(\varphi))$ on $\Phi$.
Solving $\pi(\varphi)\widetilde{\mathbb{R}}_\Phi^{(1)}(\varphi)=\pi(\varphi)$ together with
$\sum_{i\in\Phi}\pi_i(\varphi)=1$ gives
\begin{equation}\label{eq:pi-rho-correct}
\pi_0(\varphi)
=
\frac{8\varphi^{4}}{\Delta(\varphi)},
\qquad
\pi_{\pm1}(\varphi)
=
\frac{\varphi^{8} + 4\varphi^{6} + 6\varphi^{4} + 4\varphi^{2} + 1}
     {2\,\Delta(\varphi)}.
\end{equation}

\medskip
\noindent\textbf{Entropy rate.}
Let $H_i(\varphi)$ denote the Shannon entropy of the $i$-th row of
$\widetilde{\mathbb{R}}_\Phi^{(1)}(\varphi)$ (for $i\in\Phi$):
\[
H_i(\varphi)
=
-\sum_{j\in\Phi} \widetilde{R}^{(1)}_{\Phi,ij}(\varphi)\,
\log \widetilde{R}^{(1)}_{\Phi,ij}(\varphi).
\]
From \eqref{eq:Rphi-rho-correct} we obtain
\begin{align}
H_{-1}(\varphi)
&=
-\biggl[
\frac{a(\varphi)}{\Delta(\varphi)}\log\frac{a(\varphi)}{\Delta(\varphi)}
+
\frac{b(\varphi)}{\Delta(\varphi)}\log\frac{b(\varphi)}{\Delta(\varphi)}
+
\frac{c(\varphi)}{\Delta(\varphi)}\log\frac{c(\varphi)}{\Delta(\varphi)}
\biggr],
\label{eq:Hminus1-correct}\\[4pt]
H_0(\varphi)
&=
-\biggl[
\frac{b(\varphi)}{\Delta(\varphi)}\log\frac{b(\varphi)}{\Delta(\varphi)}
+
2\,\frac{d(\varphi)}{\Delta(\varphi)}\log\frac{d(\varphi)}{\Delta(\varphi)}
\biggr],
\label{eq:H0-correct}
\end{align}
and by symmetry $H_1(\varphi)=H_{-1}(\varphi)$. Therefore, the entropy rate of the induced Markov
chain on the spin-$1$ layer is
\begin{equation}\label{eq:hMC-Rphi-correct}
\widetilde{h}_{\mathrm{MC}}^{(1)}(\varphi)
=
2\,\pi_{-1}(\varphi)\,H_{-1}(\varphi)
+
\pi_0(\varphi)\,H_0(\varphi),
\end{equation}
with $\pi$ from \eqref{eq:pi-rho-correct} and $a,b,c,d,\Delta$ given in
\eqref{eq:Delta-rho-correct}--\eqref{eq:d-rho-correct}. The resulting function
$h_{\mathrm{MC}}^{(1)}(\varphi)$ is plotted in Fig.~\ref{fig:Entropy1b}.

\begin{figure}[!htbp]
\centering
\includegraphics[width=80mm]{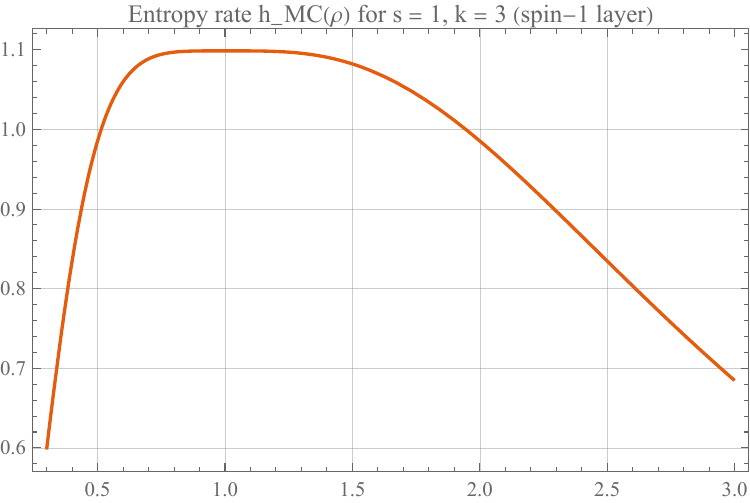}
\caption{(Color online) Entropy rate $\widetilde{h}_{\mathrm{MC}}^{(1)}(\varphi)$ from
\eqref{eq:hMC-Rphi-correct} for the mixed spin-$(1,\tfrac12)$ Ising model on a CT of order three
($k=3$) at the symmetric fixed point.}
\label{fig:Entropy1b} 
\end{figure}

\subsection{Markov chain representation and entropy rate for $s=2$, $k=3$}

For $s=2$ and $k=3$, the effective two-step transition along a ray of the
CT is governed by the $2\times 2$ stochastic matrix
\begin{equation}\label{eq:M-rho-matrix-s2-again}
\mathbb{R}_\Psi^{(2)}(\varphi)
=
\frac{1}{D(\varphi)}
\begin{pmatrix}
A(\varphi) & B(\varphi) \\
B(\varphi) & A(\varphi)
\end{pmatrix},
\end{equation}
where $A(\varphi)$, $B(\varphi)$ and $D(\varphi)$ are given by
\begin{align}
A(\varphi)
&=
\varphi^{16} + 3\varphi^{12} + 2\varphi^{10} + 8\varphi^{8}
+ 2\varphi^{6} + 3\varphi^{4} + 1,
\label{eq:A-rho-s2-again}\\[4pt]
B(\varphi)
&=
2\varphi^{4}\bigl(\varphi^{8} + \varphi^{6} + 6\varphi^{4} + \varphi^{2} + 1\bigr),
\label{eq:B-rho-s2-again}\\[4pt]
D(\varphi)
&=
\varphi^{16} + 5\varphi^{12} + 4\varphi^{10} + 20\varphi^{8}
+ 4\varphi^{6} + 5\varphi^{4} + 1,
\label{eq:D-rho-s2-again}
\end{align}
and satisfy
\begin{equation}\label{eq:D-equals-A-plus-B-s2-again}
D(\varphi) = A(\varphi) + B(\varphi).
\end{equation}
The state space of this effective Markov chain is \( \Psi = \Bigl\{-\tfrac12,\tfrac12\Bigr\},\)
and the rows of $\mathbb{R}_\Psi^{(2)}(\varphi)$ give the transition probabilities
between successive spin-$\tfrac12$ levels along a ray.

\paragraph{Stationary distribution.}
Since $\mathbb{R}_\Psi^{(2)}(\varphi)$ is symmetric and of the form
\[
\mathbb{R}_\Psi^{(2)}(\varphi)
=
\begin{pmatrix}
p(\varphi) & 1-p(\varphi) \\
1-p(\varphi) & p(\varphi)
\end{pmatrix},
\qquad
p(\varphi) := \frac{A(\varphi)}{D(\varphi)},
\]
the two rows are identical. Hence the stationary distribution is
uniform:
\begin{equation}\label{eq:pi-s2-k3}
\pi = \Bigl(\tfrac12,\tfrac12\Bigr),
\qquad
\pi\,\mathbb{R}_\Psi^{(2)}(\varphi) = \pi.
\end{equation}

\paragraph{Entropy rate}
The entropy rate of this two-state Markov chain is defined by
\begin{equation}\label{eq:hMC-def-s2-k3}
h_{\mathrm{MC}}^{(2)}(\varphi)
=
-\sum_{i\in\Psi}\pi_i
\sum_{j\in\Psi} R_{ij}(\varphi)\,\log R_{ij}(\varphi).
\end{equation}
Using \eqref{eq:pi-s2-k3} and the fact that both rows of
$\mathbb{R}_\Psi^{(2)}(\varphi)$ coincide with $(p(\varphi),1-p(\varphi))$, we obtain
\begin{equation}\label{eq:hMC-binary-s2-k3}
h_{\mathrm{MC}}^{(2)}(\varphi)
=
-\Bigl[
p(\varphi)\,\log p(\varphi)
+
\bigl(1-p(\varphi)\bigr)\,\log\bigl(1-p(\varphi)\bigr)
\Bigr],
\end{equation}
where
\begin{equation}\label{eq:p-of-rho-s2-k3}
p(\varphi)
=
\frac{A(\varphi)}{D(\varphi)}
=
\frac{
\varphi^{16} + 3\varphi^{12} + 2\varphi^{10} + 8\varphi^{8}
+ 2\varphi^{6} + 3\varphi^{4} + 1
}{
\varphi^{16} + 5\varphi^{12} + 4\varphi^{10} + 20\varphi^{8}
+ 4\varphi^{6} + 5\varphi^{4} + 1
}.
\end{equation}
Thus $h_{\mathrm{MC}}^{(2)}(\varphi)$ is just the binary entropy of $p(\varphi)$.
It satisfies $0 \le h_{\mathrm{MC}}^{(2)}(\varphi) \le \log 2$, attaining its
minimum $0$ in the ordered limits $\varphi\to 0^{+}$ and
$\varphi\to +\infty$, and its maximum $\log 2$ at the point where
$p(\varphi)=1/2$.
\begin{figure}[!htbp]
\centering
\includegraphics[width=80mm]{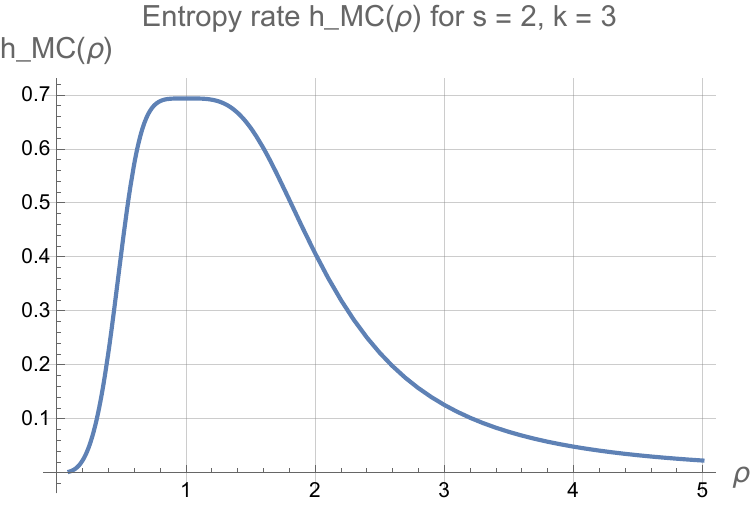}
\caption{(Color online) Entropy rate $h_{\mathrm{MC}}(\varphi)$ for the mixed
spin-$(2,\tfrac12)$ Ising model on a CT of order three at the symmetric
fixed point.}
\label{fig:Entropy-s=2}
\end{figure}

\subsection{Entropy rate for $s=5$ and $k=3$}

Considering the stochastic matrix given in \eqref{app-eq:Stochastis-M-R-s=5}, for the mixed $(5,\tfrac12)$–spin Ising model on a CT of order
$k=3$, at the symmetric fixed point $\ell_0^{(5)}$ the effective two–step
transition matrix on the spin–$\tfrac12$ states has the form
\[
\mathbb{R}^{(5)}(\varphi)
=
\frac{1}{D(\varphi)}
\begin{pmatrix}
A(\varphi) & B(\varphi)\\[2pt]
B(\varphi) & A(\varphi)
\end{pmatrix},
\qquad \varphi = e^{\beta J/2}>0,
\]
where
\begin{align*}
A(\varphi)
&=
\varphi^{40}
+\varphi^{36}
+\varphi^{32}
+2\varphi^{30}
+3\varphi^{28}
+2\varphi^{26}
+3\varphi^{24}
+2\varphi^{22}
+14\varphi^{20}
\\[-2pt]
&\hspace{2em}
+2\varphi^{18}
+3\varphi^{16}
+2\varphi^{14}
+3\varphi^{12}
+2\varphi^{10}
+\varphi^{8}
+\varphi^{4}
+1,
\\[4pt]
B(\varphi)
&=
2\varphi^{30}
+2\varphi^{28}
+2\varphi^{26}
+2\varphi^{24}
+2\varphi^{22}
+24\varphi^{20}
+2\varphi^{18}
+2\varphi^{16}
+2\varphi^{14}
+2\varphi^{12}
+2\varphi^{10},
\\[4pt]
D(\varphi)
&=
A(\varphi)+B(\varphi)
\\
&=
\varphi^{40}
+\varphi^{36}
+\varphi^{32}
+4\varphi^{30}
+5\varphi^{28}
+4\varphi^{26}
+5\varphi^{24}
+4\varphi^{22}
+38\varphi^{20}
\\[-2pt]
&\hspace{2em}
+4\varphi^{18}
+5\varphi^{16}
+4\varphi^{14}
+5\varphi^{12}
+4\varphi^{10}
+\varphi^{8}
+\varphi^{4}
+1.
\end{align*}
Since $D(\varphi)=A(\varphi)+B(\varphi)$, the matrix $\mathbb{R}^{(5)}(\varphi)$ is
symmetric and stochastic. Writing
\[
p(\varphi) := \frac{A(\varphi)}{D(\varphi)},
\]
we can rewrite $\mathbb{R}^{(5)}(\varphi)$ as
\begin{equation}\label{eq:M-rho-s5-2state}
\mathbb{R}^{(5)}(\varphi)
=
\begin{pmatrix}
p(\varphi) & 1-p(\varphi)\\[2pt]
1-p(\varphi) & p(\varphi)
\end{pmatrix}.
\end{equation}
In particular, $0<p(\varphi)<1$ for all $\varphi>0$, and the stationary
distribution is symmetric,
\[
\pi = \Bigl(\tfrac12,\tfrac12\Bigr).
\]

The entropy rate of this two–state Markov chain is therefore
\begin{equation}\label{eq:hMC-s5-def}
h_{\mathrm{MC}}^{(5)}(\varphi)
=
-\sum_{i\in\{-\tfrac12,\tfrac12\}}\pi_i
\sum_{j\in\{-\tfrac12,\tfrac12\}} R_{ij}(\varphi)\,\log R_{ij}(\varphi).
\end{equation}
Substituting \eqref{eq:M-rho-s5-2state} and $\pi=(\tfrac12,\tfrac12)$
into \eqref{eq:hMC-s5-def} gives the explicit $\varphi$–dependent formula
\begin{equation}\label{eq:hMC-s5-final}
h_{\mathrm{MC}}^{(5)}(\varphi)
=
-\Bigl[
p(\varphi)\,\log p(\varphi)
+
\bigl(1-p(\varphi)\bigr)\,
\log\bigl(1-p(\varphi)\bigr)
\Bigr],
\end{equation}
with
\[
p(\varphi) = \frac{A(\varphi)}{D(\varphi)}
\]
as defined above. This quantity is proportional to the specific entropy
of the translation–invariant splitting Gibbs measure corresponding to
the disordered phase of the $(5,\tfrac12)$–mixed spin Ising model on the
CT of order three.

\begin{figure}[!htbp]
\centering
\includegraphics[width=0.5\linewidth]{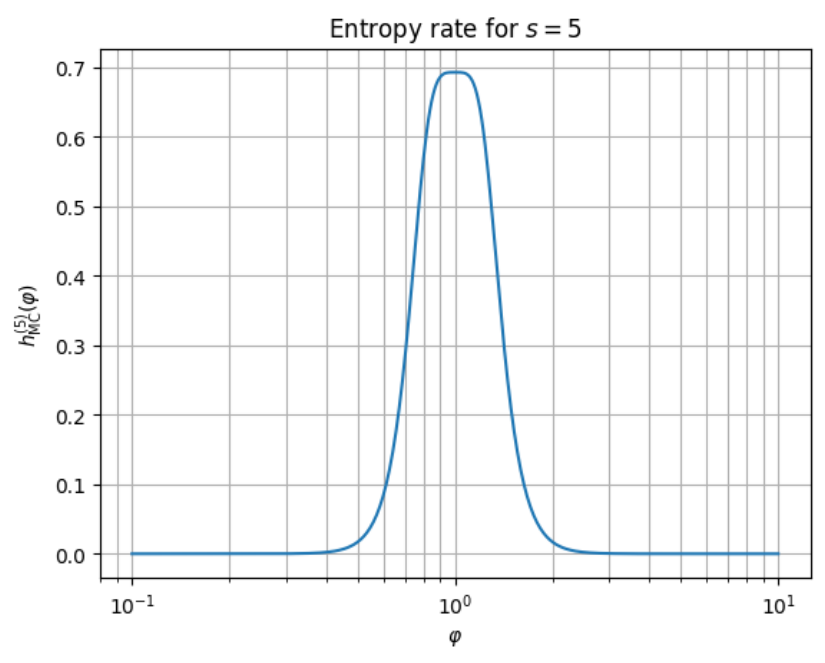}
\caption{(Color online) Entropy rate $h_{\mathrm{MC}}^{(5)}(\varphi)$ given in \eqref{eq:hMC-s5-final} for the mixed
spin-$(5,\tfrac12)$ Ising model on a CT of order three at the symmetric
fixed point.}
\label{fig:Entropy-s=5}
\end{figure}

\subsection{Entropy Rate Test}\label{subset:Algorithma2-entropy}
For clarity, Algorithm~\ref{alg:entropy_rate} provides a summary of the entropy rate test procedure.
\begin{algorithm}[!htbp]
\caption{Computation of the Markov entropy rate $h_{\mathrm{MC}}^{(s)}(\varphi)$}
\label{alg:entropy_rate}
\begin{algorithmic}[1]
\Require Spin quantum number $s\in\mathbb{Z}^+$, thermal parameter $\varphi\in\mathbb{R}^+$.
\Ensure Markov entropy rate $h_{\mathrm{MC}}^{(s)}(\varphi)$.

\State \textbf{Auxiliary quantities:}
\State $h_n(\varphi)\gets \dfrac{(\varphi^{2n}+1)^3}{8\,\varphi^{2n}}$,\quad
      $g_n(\varphi)\gets \dfrac{(\varphi^{2n}+1)^3}{8\,\varphi^{4n}}$
\State $S_s(\varphi)\gets 1+\dfrac{1}{8}\displaystyle\sum_{n=1}^{s}\dfrac{(\varphi^{2n}+1)^4}{\varphi^{4n}}$

\vspace{0.2em}
\State \textbf{Construct the matrices $P^{(s)}(\varphi)$ and $Q^{(s)}(\varphi)$:}
\State Initialize $P^{(s)}\in\mathbb{R}^{(2s+1)\times 2}$
\For{$i=-s$ \textbf{to} $s$}
  \State $P^{(s)}_{i,\, -1/2}\gets \dfrac{1}{1+\varphi^{2i}}$
  \State $P^{(s)}_{i,\, +1/2}\gets \dfrac{\varphi^{2i}}{1+\varphi^{2i}}$
\EndFor

\State Form the unnormalized row vectors (length $2s+1$):
\State $\mathbf{q}_1\gets \big(h_s,\ldots,h_1,1,g_1,\ldots,g_s\big)$
\State $\mathbf{q}_2\gets \big(g_s,\ldots,g_1,1,h_1,\ldots,h_s\big)$
\State Define $Q^{(s)}\in\mathbb{R}^{2\times(2s+1)}$ by
\State $Q^{(s)}(\varphi)\gets \dfrac{1}{S_s(\varphi)}
\begin{pmatrix}
\mathbf{q}_1\\[0.2em]
\mathbf{q}_2
\end{pmatrix}$

\vspace{0.2em}
\State \textbf{Effective (two-state) transition matrix:}
\State $M^{(s)}(\varphi)\gets Q^{(s)}(\varphi)\,P^{(s)}(\varphi)\in\mathbb{R}^{2\times 2}$
\vspace{0.2em}
\State \textbf{Stationary distribution:}
\State Find $\boldsymbol{\pi}^{(s)}$ such that $\boldsymbol{\pi}^{(s)} M^{(s)}=\boldsymbol{\pi}^{(s)}$ and $\sum_{i=1}^{2}\pi^{(s)}_i=1$
\vspace{0.2em}
\State \textbf{Markov entropy rate:}
\State $h_{\mathrm{MC}}^{(s)}(\varphi)\gets 0$
\For{$i=1$ \textbf{to} $2$}
  \For{$j=1$ \textbf{to} $2$}
    \If{$M^{(s)}_{ij}(\varphi)>0$}
      \State $h_{\mathrm{MC}}^{(s)}(\varphi)\gets
      h_{\mathrm{MC}}^{(s)}(\varphi)-\pi^{(s)}_i\,M^{(s)}_{ij}(\varphi)\,\log\!\big(M^{(s)}_{ij}(\varphi)\big)$
    \EndIf
  \EndFor
\EndFor
\State \Return $h_{\mathrm{MC}}^{(s)}(\varphi)$
\end{algorithmic}
\end{algorithm}

\subsection{Graphical analysis of entropy rate for different $s$-spin values}
By showing the explicit $\varphi$-dependence for these distinct values of $s$, the plot effectively demonstrates how the intrinsic spin magnitude affects the long-term thermodynamic disorder (or information content) of the system in the disordered phase. Analyzing this family of curves allows us to observe the specific influence of the high-spin component ($s$) on the overall uncertainty generated per step along the tree shells. Furthermore, the plot provides crucial insights into the phase stability limits and the boundary of the reconstruction region (defined by the KS condition) as the spin magnitude $s$ is varied, revealing how the system's ability to retain boundary information changes with increasing spin complexity. 
\begin{figure}[!htbp]
\centering
\includegraphics[width=90mm]{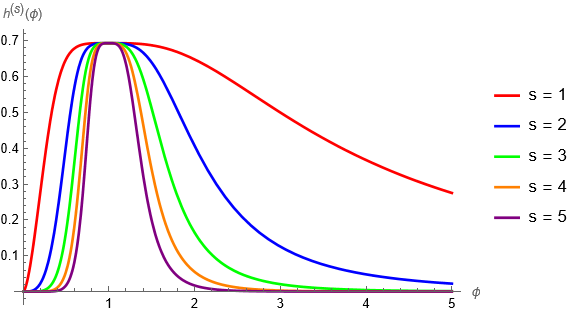}
\caption{(Color online) Entropy rate $h_{\mathrm{MC}}^{(s)}(\varphi)$ for the mixed
spin-$(s,\tfrac12)$ Ising model on a CT of order three at the symmetric
fixed point, for $s=1,2,3,4,5$.}
\label{fig:Entropy-s-1-5}
\end{figure}

The Markov entropy rate $h_{\mathrm{MC}}^{(s)}(\varphi)$ is computed using the procedure outlined in Algorithm~\ref{alg:entropy_rate}. This involves determining the transition matrices $P^{(s)}(\varphi)$ and $Q^{(s)}(\varphi)$ based on the spin state and thermal parameter, multiplying them to obtain the 2-state Markov transition matrix $M^{(s)}(\varphi)$, and subsequently solving for the stationary distribution $\pi^{(s)}(\varphi)$. Finally, the entropy rate is calculated using the standard formula involving $\pi^{(s)}(\varphi)$ and $M^{(s)}(\varphi)$. 

Figure \ref{fig:Entropy-s-1-5} presents the calculated Markov entropy rate, $h_{\mathrm{MC}}^{(s)}(\varphi)$, for the mixed spin-$(s,\tfrac12)$ Ising model (MSIM) on the CT with coordination number $k=3$. The figure specifically illustrates the behavior of $h_{\mathrm{MC}}^{(s)}(\varphi)$ at the symmetric (disordered) fixed point across an increasing sequence of spin quantum numbers, $s$, ranging from $s=1$ to $s=5$. 

In Figure \ref{fig:polar_entropy_s1_s6}, the entropy rate is visualized using a polar coordinate system to illustrate its dependence on both the thermal parameter and spin magnitude. Specifically: The angular coordinate is parametrized by $\varphi = e^{\beta J/2}$, where $\beta$ denotes the inverse temperature ($\beta = 1/(k_B T)$) and $J$ is the exchange coupling constant. The radial coordinate directly represents the Markov entropy rate $h_{\mathrm{MC}}^{(s)}(\varphi)$, expressed in natural units. The distinct curves correspond to increasing spin quantum numbers, $s$, ranging from $s=1$ to $s=6$. This polar representation effectively demonstrates the interplay between the thermal effects (controlled by $\varphi$) and the intrinsic spin magnitude $s$ on the overall thermodynamic disorder of the system.

\begin{figure}[!htbp]
\centering
\includegraphics[width=0.4\textwidth]{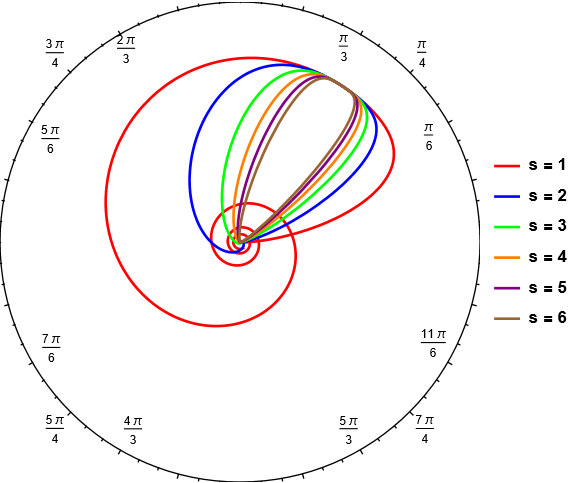}
\caption{Polar plot of the Markov entropy rate $h_{\mathrm{MC}}^{(s)}(\varphi)$ for the mixed-spin $(s, 1/2)$ Ising model on a CT with coordination number $k=3$ for spin quantum numbers $s = 1, 2, \dots, 6$.}
\label{fig:polar_entropy_s1_s6}
\end{figure}

\subsection*{Rationale for Coordinate Choice}

This visualization choice is deliberate, as representing the function in polar coordinates significantly \textbf{eases the comprehension of its underlying behavior} compared with conventional orthogonal (Cartesian) coordinates.

\subsection*{Tracking Convergence and Symmetry}
By expanding the domain of the visualization, we can effectively \textbf{track the number of cycles} required for the function $h_{\mathrm{MC}}^{(s)}(\varphi)$ to ultimately reach zero.

The key observations from the figure are as follows:
\begin{itemize}
    \item \textbf{Symmetry Development:} As the spin value, $s$, increases ($s=1, 2, 3, 4, 5, 6$), the function's behavior progressively becomes more \textbf{symmetrical with respect to the axis $\varphi=1$}.
    \item \textbf{Convergence for $s=1$:} Specifically, for the case where the spin value is $s=1$, the function's convergence to zero is observed to occur precisely \textbf{after 4 cycles}.
\end{itemize}

\subsection{Interpretation and parameter ranges}\label{sec:interp_param_ranges}

\subsubsection*{Temperature parametrization and entropy-scale bounds}

Both entropy rates are bounded by the logarithm of the effective state-space size of the corresponding induced Markov chain. 
On the spin-$\tfrac12$ layer ($\Psi$), the chain has two states, hence
\[
0\le h_{\mathrm{MC}}^{(s)}(\varphi)\le \log 2 \quad \text{for all } s,
\]
whereas on the spin-$s$ layer ($\Phi$) the induced chain has $2s+1$ states and therefore
\[
0\le \widetilde{h}_{\mathrm{MC}}^{(s)}(\varphi)\le \log(2s+1),
\]
which increases with $s$. At $\varphi=1$ (the unbiased/high-temperature point) the transitions become uniform and the entropy rates attain their maximal values, while for $\varphi\to 0$ or $\varphi\to\infty$
the dynamics become increasingly deterministic and both rates tend to $0$. For illustrations, see
Fig.~\ref{fig:Entropy1b} for $\widetilde{h}_{\mathrm{MC}}^{(s)}(\varphi)$, and
Figs.~\ref{fig:Entropy1a}, \ref{fig:Entropy-s=2}, \ref{fig:Entropy-s=5}, and \ref{fig:Entropy-s-1-5}
for $h_{\mathrm{MC}}^{(s)}(\varphi)$.

We use the temperature parametrization
\[
\varphi(T)=\exp\!\left(\frac{J}{2T}\right),\qquad T>0.
\]
For $J>0$, $T\to\infty$ implies $\varphi(T)\to 1^{+}$ and $T\to 0^{+}$ implies $\varphi(T)\to\infty$,
so the physically accessible regime is $\varphi\ge 1$. For $J<0$, $T\to\infty$ implies
$\varphi(T)\to 1^{-}$ and $T\to 0^{+}$ implies $\varphi(T)\to 0^{+}$, so the physically accessible
regime is $\varphi\le 1$. In particular, under $J>0$ the portion $\varphi<1$ in plots over intervals such
as $\varphi\in(0.01,5)$ is not reachable in the temperature parametrization.

\subsubsection*{Why $QP$ is uniformly bounded while $PQ$ grows with $s$}

The coarse-grained chain $\mathbb{R}_\Psi^{(s)}(\varphi)=\mathbb{Q}^{(s)}(\varphi)\mathbb{P}^{(s)}(\varphi)$
has $|\Psi|=2$ states, whereas the lifted chain
$\mathbb{R}_\Phi^{(s)}(\varphi)=\mathbb{P}^{(s)}(\varphi)\mathbb{Q}^{(s)}(\varphi)$ has $|\Phi|=2s+1$ states.
Since the entropy rate of an $N$-state Markov chain satisfies $h\le \log N$ (in nats), it follows that
\[
h_{QP}(s,\varphi)\le \log 2,
\qquad
h_{PQ}(s,\varphi)\le \log(2s+1).
\]
Thus $h_{QP}$ is uniformly bounded by $\log 2\simeq 0.693$, while the natural entropy scale for $h_{PQ}$
grows like $\log(2s+1)$ as $s$ increases, which simply reflects coarse-graining versus refinement.

\subsubsection*{High-temperature interpretation of the peak near $\varphi\simeq 1$}

Under $\varphi=\exp\!\bigl(J/(2T)\bigr)$, the regime $\varphi\simeq 1$ corresponds to $J/(2T)\simeq 0$,
that is, a large $T$ (high temperature). In this region transition probabilities are typically more spread out
(closer to uniform); therefore, the entropy rate approaches its upper bound. As $T$ decreases (moving $\varphi$
away from $1$), transitions concentrate and metastable structure may emerge; correspondingly the spectral
gap may shrink, consistent with a phase-separation/extremality narrative in large-$s$ limits.

\section{Coupling--temperature control and the information--physics bridge}
\label{subsec:bridge}

We encode the competition between interaction strength $J$ and thermal noise $T$ by the
dimensionless parameter
\[
\varphi \;=\; e^{\beta J/2} \;=\; \exp\!\Big(\frac{J}{2T}\Big),
\qquad (k_B=1).
\]
Thus $\varphi\to 1$ corresponds to the high--temperature (noise--dominated) regime, while
deviations from $\varphi=1$ represent low temperature/strong coupling (large $|J|/T$).
The symmetry $\varphi\mapsto 1/\varphi$ treats ferromagnetic ($J>0$, $\varphi>1$) and
antiferromagnetic ($J<0$, $\varphi<1$) regimes in a unified way.

On the rooted CT with branching number $k$ (in our setting $k=3$), the same
control parameter $\varphi$ governs both thermodynamic behavior (stability and phase
transitions) and \emph{information flow} across generations. In particular, at the
disordered (free-boundary) fixed point $Z=1$, the bipartite mixed-spin specification
induces two alternating edge-channels $\Phi\to\Psi$ and $\Psi\to\Phi$ with kernels
$\mathbb{P}^{(s)}(\varphi)$ and $\mathbb{Q}^{(s)}(\varphi)$. Along each ray, the induced
\emph{two-step} effective channel on $\Psi=\{-\tfrac12,\tfrac12\}$ is
\[
\mathbb{R}^{(s)}_{\Psi}(\varphi)
\;=\;
\mathbb{Q}^{(s)}(\varphi)\,\mathbb{P}^{(s)}(\varphi),
\]
so that root-to-boundary transmission can be studied with the standard
broadcasting/reconstruction formalism for trees.

\subsection{Entropy rate, extremality, reconstruction, and phase transition: one phenomenon in three languages}
\label{subsec:entropy-extremality-reconstruction}

A rooted tree provides a common mathematical skeleton for three parallel questions across distinct scientific domains:
(i) in \textbf{statistical mechanics} \cite{Bleher1990}, whether the disordered (free-boundary) Gibbs measure is \emph{extremal} (representing a pure physical phase) or \emph{ nonextremal } (a mixture of phases);
(ii) in \textbf{information theory} \cite{Reza1994,Evans2000}, whether the root symbol can be inferred from observations at the boundary, a problem known as \emph{reconstruction} \cite{Mossel2004}; and 
(iii) in \textbf{evolutionary biology and phylogenetics} \cite{Muffato2023}, whether an ancestral state can be recovered from extant taxa at the leaves of the tree.

In this study, we demonstrate that for the mixed-spin $(s, 1/2)$ Ising model, these three perspectives converge at the critical stability threshold of the disordered phase, where the onset of reconstruction coincides with the loss of extremality and the emergence of a nontrivial entropy rate.

\paragraph{Extremality $\leftrightarrow$ non-reconstruction (tail triviality)}
For tree-indexed Markov specifications, extremality of the free-boundary Gibbs measure
is equivalent to triviality of the tail $\sigma$-field, hence to the impossibility of
reconstructing the root from the boundary as $n\to\infty$ \cite{Mossel2004PT-Phylogeny-entropy}. If $X_0$ is the root state and
$Y_n$ is the configuration at level $n$, \emph{non-reconstruction} may be stated as
\[
\lim_{n\to\infty}
\bigl\|
\mathbb{P}(X_0\in\cdot \mid Y_n)-\mathbb{P}(X_0\in\cdot)
\bigr\|_{\mathrm{TV}}
=0,
\qquad\text{equivalently}\qquad
\lim_{n\to\infty} I(X_0;Y_n)=0.
\]
In our mixed-spin setting, the disordered TISGM $\mu_{0}^{(s)}$ (the symmetric fixed
point $Z=1$) is extremal precisely when boundary influence contracts inward; it becomes
non-extremal once a long-range memory mode survives, making the reconstruction possible.

\paragraph{Two-stage channel and the effective eigenmode}
Because information propagates in two alternating steps, the persistence of correlations
is captured by the second largest eigenvalue of the $2\times 2$ kernel
$\mathbb{R}^{(s)}_{\Psi}(\varphi)$. Writing
\[
\lambda_2(\varphi)\;=\;\lambda_2\!\bigl(\mathbb{R}^{(s)}_{\Psi}(\varphi)\bigr),
\]
the Kesten--Stigum bound provides a sufficient reconstruction criterion:
\[
k\,|\lambda_2(\varphi)|^2 \;>\; 1
\quad\Longrightarrow\quad
\text{reconstruction (hence non-extremality) of }\mu_0^{(s)}.
\]
Intuitively, weak single-edge correlations can be amplified by branching, producing
nontrivial root-to-boundary memory at sufficiently strong coupling (i.e.\ $\varphi$ far
from $1$).

\paragraph{Dobrushin contraction as a strong data-processing statement}
A complementary sufficient extremality criterion is obtained from Dobrushin-type
contraction bounds. With contraction coefficients
$\tau_{\mathbb{P}^{(s)}}(\varphi)$ and $\tau_{\mathbb{Q}^{(s)}}(\varphi)$,
\[
k\,\tau_{\mathbb{P}^{(s)}}(\varphi)\,\tau_{\mathbb{Q}^{(s)}}(\varphi)\;<\;1
\quad\Longrightarrow\quad
\text{extremality (hence non-reconstruction) of }\mu_0^{(s)}.
\]
Information-theoretically, this expresses a strong data-processing phenomenon: boundary
perturbations lose distinguishability exponentially fast as they propagate inward.
Thus Dobrushin typically certifies extremality near $\varphi\simeq 1$ (high temperature),
while Kesten--Stigum certifies non-extremality at sufficiently strong coupling. Since
both are sufficient (not necessary), an intermediate parameter region where neither
applies naturally, calling for sharper reconstruction/uniqueness tools.

\paragraph{Where entropy rate enters (local randomness vs.\ long-range memory)}
Thermodynamic disorder along rays can be quantified by the entropy rate of the stationary
two-state Markov chain driven by $\mathbb{R}^{(s)}_{\Psi}(\varphi)$:
\[
h_{\mathrm{MC}}^{(s)}(\varphi)
=
-\sum_{i\in\Psi}\pi_i^{(s)}(\varphi)\sum_{j\in\Psi}R^{(s)}_{ij}(\varphi)\log R^{(s)}_{ij}(\varphi).
\]
This measures the average uncertainty generated per generation. Crucially,
$h_{\mathrm{MC}}^{(s)}$ reflects \emph{local} randomness, whereas extremality versus nonextremality reflects \emph{global} memory. Hence high entropy rate does not imply
high recoverability: near $\varphi\approx 1$ one typically has large
$h_{\mathrm{MC}}^{(s)}$ but small $|\lambda_2|$ (memory loss), while in low-temperature /
strong-coupling regimes $h_{\mathrm{MC}}^{(s)}$ can decrease even as $|\lambda_2|$
increases, and reconstruction becomes possible.

\subsection{Phase transitions as loss of stability of the disordered fixed point.}
From the dynamical-systems viewpoint, a phase transition occurs when the symmetric fixed
point $Z=1$ loses local stability (becomes repelling), leading to additional fixed points
and, hence, multiple TISGMs. The same control parameter $\varphi=e^{\beta J/2}$ therefore
simultaneously governs (a) stability/phase transition of the disordered phase,
(b) extremality vs.\ non-extremality (reconstruction), and (c) entropy-rate disorder.

\paragraph{Phylogenetic interpretation.}
In phylogenetic language, $X_0$ is an ancestral character at the root and $Y_n$ are the
extant taxa at depth $n$ \cite{Mossel2004}. The matrices $\mathbb{P}^{(s)}$ and $\mathbb{Q}^{(s)}$ act as
substitution/noise channels across the edges. The two-step kernel
$\mathbb{R}^{(s)}_{\Psi}=\mathbb{Q}^{(s)}\mathbb{P}^{(s)}$ quantifies the net
information passed from grandparent to grandchild when the intermediate generation is
latent, mirroring the practical ancestral state reconstruction. The non-reconstruction regime
corresponds to an ``inference limit'': beyond it, boundary data become asymptotically
uninformative regarding the root. Matrices $\mathbb{P}^{(s)}$ and $\mathbb{Q}^{(s)}$ act as time-homogeneous noise channels across edges. Due to this homogeneity, the two-step kernel $\mathbb{R}^{(s)}_{\Psi}=\mathbb{Q}^{(s)}\mathbb{P}^{(s)}$ remains invariant across generations, quantifying the net information passed from grandparent to grandchild at any depth.
\begin{table}[!htbp]
\centering
\renewcommand{\arraystretch}{1.25}
\setlength{\tabcolsep}{6pt}
\begin{tabular}{|c|p{3.9cm}|p{4.1cm}|p{4.1cm}|}
\hline
\textbf{Object / Criterion} &
\textbf{Information Theory} &
\textbf{Statistical Physics} &
\textbf{Biology / Phylogenetics} \\
\hline
$X_0$ &
Source/root symbol &
Root spin/state &
Common-ancestor trait/state \\
\hline
$Y_n$ &
Observations at depth $n$ (boundary data) &
Boundary configuration at level $n$ &
Extant taxa (leaves) at depth $n$ \\
\hline
Non-reconstruction &
$I(X_0;Y_n)\to 0$ (information loss) &
\textbf{Extremal} free-boundary Gibbs state (tail-trivial; no long-range memory) &
Signal erosion; $\mathrm{ASR}$ impossible in the limit \\
\hline
Reconstruction &
$\limsup_{n\to\infty} I(X_0;Y_n) > 0$ &
\textbf{Non-extremality} (nontrivial mixture; long-range memory) &
Signal retention; $\mathrm{ASR}$ possible \\
\hline
Threshold / bounds &
Spectral/broadcasting limits (e.g.\ KS: $b|\lambda_2|^2>1$) &
KS as non-extremality indicator; Dobrushin-type conditions as extremality guarantees &
Detectability threshold for ancestral inference (phase transition in $\mathrm{ASR}$) \\
\hline
\end{tabular}
\caption{One phenomenon in three languages: information flow, Gibbs extremality, and ancestral inference on trees. Here $\mathrm{ASR}$ denotes \emph{Ancestral State Reconstruction}.}
\label{tab:analogy}
\end{table}

Table \ref{tab:analogy} highlights the mathematical isomorphism between information theory, statistical physics and phylogenetics. This analogy centers on the \textbf{reconstruction problem}: whether a root state $X_0$ can be inferred from the noisy leaf observations $Y_n$ at depth $n$.

\begin{itemize}
    \item \textbf{Non-reconstruction Phase:} As $n \to \infty$, the mutual information $I(X_0; Y_n) \to 0$. This corresponds to a \textit{pure phase} in physics and a total \textit{signal erosion} in biology.
    \item \textbf{Reconstruction Phase:} A non-vanishing correlation remains ($\limsup I(X_0; Y_n) > 0$), implying \textit{long-range memory} and feasible \textit{ancestral inference}.
    \item \textbf{Critical Threshold:} The transition is governed by the \textbf{Kesten-Stigum bound} ($k\lambda_2^2 > 1$), defining the universal limit for information persistence across all three domains.
\end{itemize}

\section{Conclusions}\label{sec:conclusion}
We studied mixed spin-$(s,\tfrac12)$ Ising models on a \emph{semi-infinite} Cayley tree with branching number
$k=3$, focusing on translation-invariant splitting Gibbs measures (TISGMs). Beyond providing a computable link
between dynamical stability, reconstruction/extremality, and entropy production, a key point is that passing to
$k=3$ produces \emph{stability regions that differ markedly from those obtained in our earlier work} \cite{Akin2024,Akin-Muk-24JSTAT}: the location
and width of the stability/instability bands shift, and new parameter windows arise in which the disordered fixed
point changes its stability behavior.

For the representative case $s=5$, the associated recursion is an $11$-dimensional dynamical system.
A stability analysis identifies phase-transition regions through the loss of local stability of the disordered fixed
point, detected when the Jacobian satisfies $|\lambda_{\max}|\ge 1$. In particular, compared to the lower-branching
setting, the $k=3$ geometry yields a quantitatively different stability diagram, confirming that the branching
structure decisively affects the phase-transition mechanism encoded by the recursion.

At the disordered fixed point we derived explicit one-step stochastic matrices $\mathbb{P}^{(s)}(\varphi)$ and
$\mathbb{Q}^{(s)}(\varphi)$ and the effective two-step kernel
$\mathbb{R}^{(s)}(\varphi)=\mathbb{Q}^{(s)}(\varphi)\mathbb{P}^{(s)}(\varphi)$ on the spin-$\tfrac12$ layer.
Its nontrivial eigenvalue $\lambda_2(\varphi)$ controls correlation persistence and drives the classical
reconstruction tests~\cite{Mossel2001,MosselPeres2003}. With $k$ denoting the branching number (each vertex in shell $n$
has $k$ forward neighbors in shell $n+1$), the Kesten--Stigum condition reads $k\,\lambda_2(\varphi)^2>1$, and hence
$3\,\lambda_2(\varphi)^2>1$ in our setting. In parallel, Dobrushin-type contraction yields a rigorous sufficient
condition for extremality, leaving intermediate ``gray'' regimes where sharper tools are needed.

We also introduced the Markov entropy rate $h_{\mathrm{MC}}^{(s)}(\varphi)$ of the effective chain as a quantitative
observable measurement of uncertainty production per generation. An additional structural message concerns the natural magnitude of the two entropy rates: each is bounded by the logarithm of the effective state-space size of the induced Markov chain. In particular, the spin-$\tfrac12$ (two-state) description on $\Psi$ satisfies
\[
0\le h_{\mathrm{MC}}^{(s)}(\varphi)\le \log 2 \qquad \text{for all } s,
\]
whereas the induced chain on the spin-$s$ layer $\Phi$ has $2s+1$ states and hence
\[
0\le \widetilde{h}_{\mathrm{MC}}^{(s)}(\varphi)\le \log(2s+1),
\]
so the latter entropy-rate scale grows with $s$ and reflects the refinement from the two-state coarse-graining to the full $(2s+1)$-state dynamics. We presented numerical or graphical illustrations of the entropy rate behavior for values of \(s\) from 1 to 5, which confirmed the qualitative trends predicted by the spectral criteria.

 While $h_{\mathrm{MC}}^{(s)}(\varphi)$ is not, by itself,
an extremality diagnostic, it provides a computable thermodynamic/information-theoretic companion to the spectral criteria and helps quantify the disorder of the effective dynamics across parameter regimes.
A key qualitative message is that a \emph{local} phase-transition signal (loss of stability of the disordered fixed
point) need not coincide with the reconstruction threshold: there are parameter ranges in which local instability has
already initiated, yet the free-boundary disordered Gibbs measure remains extremal according to the available rigorous
criteria. This separation clarifies that ``phase transition'' and ``reconstruction'' capture distinct mechanisms on trees.

Overall, the methodology is flexible and can be extended to larger spin values, alternative coupling structures, or
more general hierarchical graphs, providing a systematic route to explore phase stability and long-range memory in
mixed-spin systems on the trees.

\appendix
\section{Construction of Stochastic Matrices at the Fixed Point $\ell_0^{(s)}$}

In this appendix we construct the stochastic matrices associated with the
$(s,\tfrac12)$ mixed-spin Ising model on a semi-infinite CT of order
three. Unlike symmetric single-spin models, mixed-spin systems lead
naturally to two distinct transition kernels, and hence to two stochastic
matrices, see~\cite{Akin2024,Akin-Muk-24JSTAT,Akin2023Chaos}. Using the
definitions in Eqs.~\eqref{StocMat1} and~\eqref{StocMat2}, we introduce
$\mathbb{P}^{(s)}(\varphi)\in\mathbb{R}^{(2s+1)\times 2}$ and
$\mathbb{Q}^{(s)}(\varphi)\in\mathbb{R}^{2\times(2s+1)}$, where
$\mathbb{P}^{(s)}$ has $2s+1$ rows and $2$ columns, while $\mathbb{Q}^{(s)}$
has $2$ rows and $2s+1$ columns. 
We put
\[
\varphi = e^{\beta J/2},
\qquad
\Phi = \{-s,-s+1,\dots,s-1,s\},
\qquad
\Psi = \Bigl\{-\tfrac12,\tfrac12\Bigr\}.
\]

\subsection{The stochastic matrix $\mathbb{P}^{(s)}(\varphi)$ for $k=3$}
At the symmetric fixed point $Z=1$ the transition probabilities from
$\Phi$ to $\Psi$ are
\[
P_{i,-\frac12}(\varphi)
=
\frac{1}{1+\varphi^{2i}},
\qquad
P_{i,\frac12}(\varphi)
=
\frac{\varphi^{2i}}{1+\varphi^{2i}},
\qquad i\in\Phi.
\]
Ordering the rows by $i=s,s-1,\dots,1,0,-1,\dots,-s$, one obtains the
$(2s+1)\times 2$ stochastic matrix
\begin{align}\label{eq:Matrix-P-arbitrary-s}
\mathbb{P}^{(s)}(\varphi)
=
\begin{pmatrix}
\dfrac{\varphi^{2s}}{\varphi^{2s}+1} & \dfrac{1}{\varphi^{2s}+1}\\[4pt]
\dfrac{\varphi^{2(s-1)}}{\varphi^{2(s-1)}+1} & \dfrac{1}{\varphi^{2(s-1)}+1}\\[2pt]
\vdots & \vdots\\[2pt]
\dfrac{\varphi^{2}}{\varphi^{2}+1} & \dfrac{1}{\varphi^{2}+1}\\[4pt]
\dfrac{1}{2} & \dfrac{1}{2}\\[4pt]
\dfrac{1}{1+\varphi^{2}} & \dfrac{\varphi^{2}}{1+\varphi^{2}}\\[2pt]
\vdots & \vdots\\[2pt]
\dfrac{1}{1+\varphi^{2(s-1)}} & \dfrac{\varphi^{2(s-1)}}{1+\varphi^{2(s-1)}}\\[4pt]
\dfrac{1}{1+\varphi^{2s}} & \dfrac{\varphi^{2s}}{1+\varphi^{2s}}
\end{pmatrix}.  
\end{align}
\subsection{The stochastic matrix $\mathbb{Q}^{(s)}(\varphi)$ for $k=3$}
Let us consider the fixed point associated wit the dynamical system given in Equations~\eqref{TI-1}–\eqref{TI-3} as 
\begin{align}\label{eq:FP-initial-arbitrary-s}
\ell_0^{(s)} :=
\begin{cases}
X_i^{(0)} = \left( \dfrac{\varphi^{2|i|} + 1}{2\varphi^{|i|}} \right)^{3}, 
& \quad i \in \{-s,\dots,-1,1,\dots,s\}, \\[6pt]
Z^{(0)} = 1, & \quad \text{otherwise}.
\end{cases}
\end{align}
At the symmetric fixed point $\ell_0^{(s)}$ the transition probabilities from
$\Psi$ to $\Phi$ can be written in the compact form
\begin{align}\label{eq:Matrix-Q-arbitrary-s}
\mathbb{Q}^{(s)}(\varphi)
=
\frac{1}{S_s(\varphi)}
\left(
\begin{array}{ccccccc}
h_s(\varphi) & h_{s-1}(\varphi) & \cdots & h_1(\varphi) &
1 &
g_1(\varphi) & \cdots g_s(\varphi)
\\[2pt]
g_s(\varphi) & g_{s-1}(\varphi) & \cdots & g_1(\varphi) &
1 &
h_1(\varphi) & \cdots h_s(\varphi)
\end{array}
\right),    
\end{align}
where the columns are indexed by $j=-s,-s+1,\dots,-1,0,1,\dots,s$, and
\[
h_n(\varphi)
=
\frac{(\varphi^{2n}+1)^3}{8\varphi^{2n}},
\qquad
g_n(\varphi)
=
\frac{(\varphi^{2n}+1)^3}{8\varphi^{4n}},
\qquad n=1,\dots,s,
\]
while the normalising factor $S_s(\varphi)$ is given by
\[
S_s(\varphi)
=
1+\frac18\sum_{n=1}^{s}\frac{(\varphi^{2n}+1)^4}{\varphi^{4n}}.
\]
By construction,
\[
\sum_{j=-s}^{s} Q^{(s)}_{ij}(\varphi) = 1,
\qquad i\in\Psi,
\]
so $\mathbb{Q}^{(s)}(\varphi)$ is a $2\times(2s+1)$ row–stochastic matrix.

The stochastic matrices given in \eqref{eq:Matrix-P-arbitrary-s} and \eqref{eq:Matrix-Q-arbitrary-s} can be composed to form
square stochastic matrices as follows
\[
\mathbb{R}^{(s)}(\varphi)=\mathbb{Q}^{(s)}(\varphi)\,\mathbb{P}^{(s)}(\varphi)
\in\mathbb{R}^{2\times 2},
\qquad\text{or}\qquad
\widetilde{\mathbb{R}}^{(s)}(\varphi)=\mathbb{P}^{(s)}(\varphi)\,\mathbb{Q}^{(s)}(\varphi)
\in\mathbb{R}^{(2s+1)\times(2s+1)}.
\]
These matrices will be used later to study the extremality versus non-extremality in the disordered phase and to compute the entropy rate.

In our earlier work \cite{Akin2024}, we  showed that the second-largest eigenvalue is the same for the
two compositions $\mathbb{R}^{(s)}(\varphi)=\mathbb{Q}^{(s)}(\varphi)\mathbb{P}^{(s)}(\varphi)$
and $\widetilde{\mathbb{R}}^{(s)}(\varphi)=\mathbb{P}^{(s)}(\varphi)\mathbb{Q}^{(s)}(\varphi)$.
Although both orderings therefore yield the same Kesten--Stigum threshold, the
computational effort is very different: as $s$ grows, extracting eigenvalues
from $\widetilde{\mathbb{R}}^{(s)}(\varphi)$ becomes increasingly costly since it
is a $(2s+1)\times(2s+1)$ matrix. In contrast, $\mathbb{R}^{(s)}(\varphi)$ is
always $2\times2$, so its eigenvalues—and in particular the second-largest
one—can be obtained in closed form with minimal effort.

\subsection{Resulting $2 \times 2$ Stochastic Matrix $\mathbb{R}^{(s)}(\varphi)=\mathbb{Q}^{(s)}(\varphi) \, \mathbb{P}^{(s)}(\varphi)$}
The matrix $\mathbb{R}^{(s)}(\varphi)$ is defined by the product $\mathbb{R}^{(s)}(\varphi) = \mathbb{Q}^{(s)}(\varphi) \, \mathbb{P}^{(s)}(\varphi)$. Given that $\mathbb{Q}^{(s)}$ is $2 \times (2s+1)$ and $\mathbb{P}^{(s)}$ is $(2s+1) \times 2$, the product is a $2 \times 2$ row-stochastic matrix exhibiting a symmetric structure:
\begin{align}\label{eq:Staochastic-Mat-R-s}
\mathbb{R}^{(s)}(\varphi) = \frac{1}{S_s(\varphi)} \begin{pmatrix} A(\varphi) & B(\varphi) \\ B(\varphi) & A(\varphi) \end{pmatrix}.  
\end{align}
\section*{Component Definitions}

\subsection*{Intermediate Functions (from $\mathbb{Q}^{(s)}$)}
The functions $h_n(\varphi)$ and $g_n(\varphi)$ are:
$$
h_n(\varphi) = \frac{(\varphi^{2n}+1)^3}{8\varphi^{2n}},
\qquad
g_n(\varphi) = \frac{(\varphi^{2n}+1)^3}{8\varphi^{4n}},
\qquad n=1,\dots,s.
$$

\subsection*{Normalization Factor $S_s(\varphi)$}
The factor derived from the row sum of $\mathbb{Q}^{(s)}$ is
$$
S_s(\varphi) = 1+\frac18\sum_{n=1}^{s}\frac{(\varphi^{2n}+1)^4}{\varphi^{4n}}.
$$

\subsection*{Diagonal Sum $A(\varphi)$}
The diagonal elements $R_{i,i}$ are calculated using the component sum $A(\varphi)$:
$$
A(\varphi) = \frac{1}{2} + \sum_{n=1}^{s} \left[ h_{n} \frac{\varphi^{2n}}{\varphi^{2n}+1} + g_{n} \frac{1}{\varphi^{2n}+1} \right]
$$
After algebraic simplification ($\tau_n^A$):
$$
A(\varphi) = \frac{1}{2} + \sum_{n=1}^{s} \frac{(\varphi^{2n}+1)^2(\varphi^{4n}+1)}{8\varphi^{4n}}
$$

\subsection*{Off-Diagonal Sum $B(\varphi)$}
The off-diagonal elements $R_{i,k}$ ($i \neq k$) are calculated using the component sum $B(\varphi)$:
$$
B(\varphi) = \frac{1}{2} + \sum_{n=1}^{s} \left[ h_{n} \frac{1}{\varphi^{2n}+1} + g_{n} \frac{\varphi^{2n}}{\varphi^{2n}+1} \right]
$$
After algebraic simplification ($\tau_n^B$):
$$
B(\varphi) = \frac{1}{2} + \sum_{n=1}^{s} \frac{(\varphi^{2n}+1)^2}{4\varphi^{2n}}.
$$

For illustration, let us take $s=5$. In this case the effective $2\times2$
matrix $\mathbb{R}^{(5)}$ on $\Psi=\{-\tfrac12,\tfrac12\}$ is given by
\begin{align}\label{app-eq:Stochastis-M-R-s=5}
\mathbb{R}^{(5)}(\varphi)
&=
\mathbb{Q}^{(5)}(\varphi)\,\mathbb{P}^{(5)}(\varphi)
=
\begin{pmatrix}
\displaystyle \sum_{i=-5}^{5} \frac{X_i \varphi^{-i}}{S_{-}^{(5)}} \, P^{(5)}_{i,-\tfrac12} &
\displaystyle \sum_{i=-5}^{5} \frac{X_i \varphi^{-i}}{S_{-}^{(5)}} \, P^{(5)}_{i,\tfrac12} \\[2mm]
\displaystyle \sum_{i=-5}^{5} \frac{X_i \varphi^{i}}{S_{+}^{(5)}} \, P^{(5)}_{i,-\tfrac12} &
\displaystyle \sum_{i=-5}^{5} \frac{X_i \varphi^{i}}{S_{+}^{(5)}} \, P^{(5)}_{i,\tfrac12}
\end{pmatrix},   
\end{align}
where
\[
P^{(5)}_{i,-\frac12}
=
\frac{\exp\!\bigl(-\tfrac12 i \beta J + \widetilde{h}_{-\tfrac12}\bigr)}
     {\displaystyle\sum_{u\in\{-\tfrac12,\tfrac12\}}
      \exp\!\bigl(iu \beta J + \widetilde{h}_{u}\bigr)},
\qquad
P^{(5)}_{i,\frac12}
=
\frac{\exp\!\bigl(\tfrac12 i \beta J + \widetilde{h}_{\tfrac12}\bigr)}
     {\displaystyle\sum_{u\in\{-\tfrac12,\tfrac12\}}
      \exp\!\bigl(iu \beta J + \widetilde{h}_{u}\bigr)}.
\]

At the symmetric fixed point $Z=1$, using \eqref{dynamicals3aa1}–\eqref{dynamicals3aa2}
we have
\begin{align}\label{IP-FP-1a}
X_{\pm 1}^{(0)} = \frac{(\varphi^{2}+1)^3}{8\varphi^{3}},\
X_{\pm 2}^{(0)} =\frac{(\varphi^{4}+1)^3}{8\varphi^{6}},\
X_{\pm 3}^{(0)} = \frac{(\varphi^{6}+1)^3}{8\varphi^{9}},\
X_{\pm 4}^{(0)} =\frac{(\varphi^{8}+1)^3}{8\varphi^{12}},\
X_{\pm 5}^{(0)}  = \frac{(\varphi^{10}+1)^3}{8\varphi^{15}}
\end{align}
and $X_0^{(0)}=1$. 

Substituting these values into the expression for
$\mathbb{Q}^{(5)}(\varphi)$ and into the normalising factors
\[
S_{-}^{(5)}(\varphi) = \sum_{j=-5}^{5} X_j^{(0)} \varphi^{-j},
\qquad
S_{+}^{(5)}(\varphi) = \sum_{j=-5}^{5} X_j^{(0)} \varphi^{j},
\]
we obtain, by symmetry, $S_{-}^{(5)}(\varphi)=S_{+}^{(5)}(\varphi)=:S^{(5)}(\varphi)$ with
\begin{equation}\label{S-rho-s5}
S^{(5)}(\varphi)
=
1 + \sum_{m=1}^{5} \frac{(\varphi^{2m}+1)^{4}}{8\,\varphi^{4m}}.
\end{equation}

Thus the effective $2\times2$ transition matrix on the spin-$\tfrac12$ states
$\Psi=\{-\tfrac12,\tfrac12\}$ can be written as
\[
\mathbb{R}^{(5)}(\varphi)
=
\begin{pmatrix}
R^{(5)}_{-\frac12,-\frac12}(\varphi) & R^{(5)}_{-\frac12,\frac12}(\varphi)\\[2pt]
R^{(5)}_{\frac12,-\frac12}(\varphi)  & R^{(5)}_{\frac12,\frac12}(\varphi)
\end{pmatrix},
\]
where
\begin{align*}
R^{(5)}_{-\frac12,-\frac12}(\varphi)
&=
\sum_{j=-5}^{5} Q^{(5)}_{-\frac12,j}(\varphi)\,P^{(5)}_{j,-\frac12}(\varphi)
=
\frac{N^{(5)}(\varphi)}{S^{(5)}(\varphi)},\\[4pt]
R^{(5)}_{\frac12,\frac12}(\varphi)
&=
\sum_{j=-5}^{5} Q^{(5)}_{\frac12,j}(\varphi)\,P^{(5)}_{j,\frac12}(\varphi)
=
\frac{N^{(5)}(\varphi)}{S^{(5)}(\varphi)},
\end{align*}
and, by symmetry,
\[
R^{(5)}_{-\frac12,\frac12}(\varphi)
=
R^{(5)}_{\frac12,-\frac12}(\varphi)
=
1-\frac{N^{(5)}(\varphi)}{S^{(5)}(\varphi)}.
\]
Here the numerator $N^{(5)}(\varphi)$ is given explicitly by
\begin{equation}\label{N-rho-s5}
N^{(5)}(\varphi)
=
\frac{1}{2}
+
\sum_{m=1}^{5}
\frac{(\varphi^{2m}+1)^{3}\,\bigl(1+\varphi^{4m}\bigr)}
     {8\,\varphi^{4m}\bigl(1+\varphi^{2m}\bigr)}.
\end{equation}

Consequently, at the fixed point \eqref{IP-FP-1a}, the matrix
$\mathbb{R}^{(5)}(\varphi)$ takes the compact symmetric form
\begin{equation}\label{R-rho-s5-final}
\mathbb{R}^{(5)}(\varphi)
=
\begin{pmatrix}
p^{(5)}(\varphi) & 1-p^{(5)}(\varphi)\\[2pt]
1-p^{(5)}(\varphi) & p^{(5)}(\varphi)
\end{pmatrix},
\qquad
p^{(5)}(\varphi) = \frac{N^{(5)}(\varphi)}{S^{(5)}(\varphi)},
\end{equation}
with $S^{(5)}(\varphi)$ and $N^{(5)}(\varphi)$ defined in
\eqref{S-rho-s5}–\eqref{N-rho-s5}.



\begin{thebibliography}{99}

\bibitem{Akin2024}
Ak\i n, H.
New disordered phases of the $(s,1/2)$-mixed spin Ising model for arbitrary spin $s$.
\emph{Chaos, Solitons \& Fractals}, \textbf{189}, 115733 (2024).
\url{https://doi.org/10.1016/j.chaos.2024.115733}

\bibitem{Mossel2001}
Mossel, E.
Reconstruction on trees: Beating the second eigenvalue.
\emph{Annals of Applied Probability}, \textbf{11}, 285--300 (2001).
\url{https://doi.org/10.1214/aoap/998926994}


\bibitem{Mossel2004} Mossel, E. Survey: Information Flow on Trees. Graphs, Morphisms and Statistical Physics. In DIMACS Series Discrete Mathematics and Theoretical Computer Science 63; American Mathematical Society: Providence, RI, USA, 2004; pp. 155–170.

\bibitem{MosselPeres2003}
Mossel, E., and Peres, Y.
Information flow on trees.
\emph{Annals of Applied Probability}, \textbf{13}, 817--844 (2003).
\url{https://doi.org/10.1214/aoap/1060202828}

\bibitem{CoverThomas2006}
Cover, T.~M., and Thomas, J.~A.
\emph{Elements of Information Theory}, 2nd ed., Wiley, Hoboken, NJ, 2006. \url{https://doi.org/10.1002/047174882X}


\bibitem{Cavender1978} Cavender, J. Taxonomy with condence. Math. BioSci. 1978, 40, 271–280. \url{https://doi.org/10.1016/0025-5564(78)90089-5}


\bibitem{Preston1974} Preston, C. Gibbs States on Countable Sets; Cambridge University Press: London, UK, 1974. \url{https://doi.org/10.1017/CBO9780511897122}

\bibitem{Spitzer1975} Spitzer, F. Markov random fields on an infinite tree. Ann. Probab. 1975, 3, 387–394. \url{https://doi.org/10.1214/aop/1176996347}


\bibitem{Muffato2023} Muffato, M., Louis, A., Nguyen, N.T.T. et al. Reconstruction of hundreds of reference ancestral genomes across the eukaryotic kingdom. Nat Ecol Evol 7, 355–366 (2023). \url{https://doi.org/10.1038/s41559-022-01956-z}

\bibitem{Reza1994} Reza, F. M. An introduction to information theory (Courier Corporation, 1994).

\bibitem{Georgii1988}
Georgii, H.-O.
\emph{Gibbs Measures and Phase Transitions}.
De Gruyter, Berlin/New York, 2011.
\url{https://doi.org/10.1515/9783110250329}

\bibitem{Rozikov2013}
Rozikov, U.~A.
\emph{Gibbs Measures on Cayley Trees}.
World Scientific, Singapore; 2013.
\url{https://doi.org/10.1142/8841}

\bibitem{DemboMontanari2010}
Dembo, A., and Montanari, A.
Ising models on locally tree-like graphs.
\emph{Annals of Applied Probability}, \textbf{20}(2), 565--592 (2010).
\url{https://doi.org/10.1214/09-AAP627}
\bibitem{Gandolfo-Haydar-2013}
Gandolfo, D., Haydarov, F.~H., Rozikov, U.~A., and Ruiz, J.
New phase transitions of the Ising model on Cayley trees.
\emph{J. Stat. Phys.}, \textbf{153}, 400--411 (2013).
\url{https://doi.org/10.1007/s10955-013-0836-3}

\bibitem{Gandolfo-Rahmet-2013}
Gandolfo, D., Rakhmatullaev, M.~M., Rozikov, U.~A., and Ruiz, J.
On free energies of the Ising model on the Cayley tree.
\emph{J. Stat. Phys.}, \textbf{150}, 1201--1217 (2013).
\url{https://doi.org/10.1007/s10955-013-0713-0}

\bibitem{EAP16}
De La Espriella, N., Arenas, A.~J., and Paez Meza, M.~S.
Critical and compensation points of a mixed spin-2-spin-5/2 Ising ferrimagnetic system with crystal field and nearest and next-nearest neighbors interactions.
\emph{J. Magn. Magn. Mater.}, \textbf{417}, 434--441 (2016).
\url{https://doi.org/10.1016/j.jmmm.2016.05.046}

\bibitem{EBM18}
De La Espriella, N., Buendia, G.~M., and Madera, J.~C.
Mixed spin-1 and spin-2 Ising model: Study of the ground states.
\emph{J. Phys. Commun.}, \textbf{2}, 025006 (2018).
\url{https://doi.org/10.1088/2399-6528/aaa39b}

\bibitem{strecka2006}
Stre\v{c}ka, J., Canov\'a, L., and Dely, J.
Weak universal critical behaviour of the mixed spin-$(1/2,S)$ Ising model on the Union Jack (centered square) lattice: Integer versus half-odd-integer spin-$S$ case.
\emph{Physica Status Solidi (b)}, \textbf{243}(1), 194--206 (2006).
\url{https://doi.org/10.1002/pssb.200642018}

\bibitem{akin2024-CJP}
Ak\i n, H.
Exploring the phase transition challenge by analyzing stability in a 5-D dynamical system linked to (2,1/2)-MSIM.
\emph{Chinese Journal of Physics}, \textbf{91}, 494--504 (2024).
\url{https://doi.org/10.1016/j.cjph.2024.08.008}


\bibitem{Akin-Muk-24JSTAT}
Ak\i n, H., and Mukhamedov, F.
The extremality of disordered phases for the mixed spin-(1,1/2) Ising model on a Cayley tree of arbitrary order.
\emph{J. Stat. Mech.} (2024) 013207.
\url{https://doi.org/10.1088/1742-5468/ad1be2}

\bibitem{Akin-Ulusoy-2023}
Ak\i n, H., and Ulusoy, S.
A new approach to studying the thermodynamic properties of the $q$-state Potts model on a Cayley tree.
\emph{Chaos, Solitons \& Fractals}, \textbf{174}, 113811 (2023).
\url{https://doi.org/10.1016/j.chaos.2023.113811}

\bibitem{Akin-Ulusoy-2022}
Ak\i n, H., and Ulusoy, S.
Limiting Gibbs measures of the $q$-state Potts model with competing interactions.
\emph{Physica B: Condensed Matter}, \textbf{640}, 413944 (2022).
\url{https://doi.org/10.1016/j.physb.2022.413944} \bibitem{Akin2025Chaos}
Ak\i n, H.
$q$-state modified Potts model on a Cayley tree and its phase transition in antiferromagnetic region.
\emph{Chaos, Solitons \& Fractals}, \textbf{199} (Part 2), 116746 (2025).
\url{https://doi.org/10.1016/j.chaos.2025.116746}

\bibitem{Akin2023Chaos}
Ak\i n, H.
The classification of disordered phases of mixed spin (2,1/2) Ising model and the chaoticity of the corresponding dynamical system.
\emph{Chaos, Solitons \& Fractals}, \textbf{167}, 113086 (2023).
\url{https://doi.org/10.1016/j.chaos.2022.113086}

\bibitem{Has-Far-2021}
Ak\i n, H., and Mukhamedov, F.
Phase transition for the Ising model with mixed spins on a Cayley tree.
\emph{J. Stat. Mech.} (2022) 053204.
\url{https://doi.org/10.1088/1742-5468/ac68e4}

\bibitem{Külske2009a}
K\"ulske, C., and Formentin, M.
A symmetric entropy bound on the non-reconstruction regime of Markov chains on Galton--Watson trees.
\emph{Electronic Communications in Probability}, \textbf{14}, 587--596 (2009).
\url{https://doi.org/10.1214/ECP.v14-1516}

\bibitem{ioffe1996}
Ioffe, D.
On the extremality of the disordered state for the Ising model on the Bethe lattice.
\emph{Lett. Math. Phys.}, \textbf{37}(2), 137--143 (1996).
\url{https://doi.org/10.1007/BF00416016}

\bibitem{MartinelliSinclairWeitz2004}
Martinelli, F., Sinclair, A., and Weitz, D.
Glauber dynamics on trees: Boundary conditions and mixing time.
\emph{Communications in Mathematical Physics}, \textbf{250}(2), 301--334 (2004).
\url{https://doi.org/10.1007/s00220-004-1147-y}

\bibitem{KuelskeRozikov2017}
K\"ulske, C., and Rozikov, U.~A.
Fuzzy transformations and extremality of Gibbs measures for the Potts model on a Cayley tree.
\emph{Random Structures \& Algorithms}, \textbf{50}, 636--678 (2017).
\url{https://doi.org/10.1002/rsa.20671}

\bibitem{Martin2003}
Martin, J.~B.
Reconstruction thresholds on regular trees.
In \emph{Discrete Random Walks} (Paris 2003), Discrete Math. Theor. Comput. Sci. Proc., Vol.~AC,
Assoc.\ Discrete Math.\ Theor.\ Comput.\ Sci., Nancy, pp.~191--204 (2003).

\bibitem{MartinelliSinclairWeitz2007}
Martinelli, F., Sinclair, A., and Weitz, D.
Fast mixing for independent sets, coloring and other models on trees.
\emph{Random Structures \& Algorithms}, \textbf{31}, 134--172 (2007).
\url{https://doi.org/10.1002/rsa.20132}

\bibitem{Moraal1978}
Moraal, H.
Ising spin systems on Cayley tree-like lattices: Spontaneous magnetization and correlation functions far from the boundary.
\emph{Physica A}, \textbf{92}, 305--314 (1978).
\url{https://doi.org/10.1016/0378-4371(78)90037-7}


\bibitem{RR-2021}
Rahmatullaev, M.~M., and Rasulova, M.~A.
Extremality of translation-invariant Gibbs measures for the Potts-SOS model on the Cayley tree.
\emph{Journal of Statistical Mechanics: Theory and Experiment}, \textbf{2021}(7), 073201 (2021).
\url{https://doi.org/10.1088/1742-5468/ac08ff}

\bibitem{Rozikov2018}
Rozikov, U.~A., Khakimov, R.~M., and Khaidarov, F.~K.
Extremality of the translation-invariant Gibbs measures for the Potts model on the Cayley tree.
\emph{Theor. Math. Phys.}, \textbf{196}, 1043--1058 (2018).
\url{https://doi.org/10.1134/S0040577918070103}

\bibitem{KuelskeRozikov2015}
K\"ulske, C., and Rozikov, U.~A.
Extremality of translation-invariant phases for a three-state SOS-model on the binary tree.
\emph{J. Stat. Phys.}, \textbf{160}(3), 659--680 (2015).
\url{https://doi.org/10.1007/s10955-015-1279-9}

\bibitem{Ak-Mukh-2025-PS} Ak\i n, H., \& Mukhamedov, F.  3-state hybrid Potts–SOS model with different coupling constants and its phase transition phenomenon. Physica Scripta, 100(8), 085201 (2025).
\url{https://doi.org/10.1088/1402-4896/adefb2}

\bibitem{Qawasmeh-Ak-Mukh-2025-PS} Qawasmeh, A., Mukhamedov, F., \& Ak\i n, H. Analysis of Gibbs measures and stability of dynamical system linked to (1,1/2)-mixed Ising model on $(m,k)$-ary trees. Math Phys Anal Geom 28, 10 (2025). \url{https://doi.org/10.1007/s11040-025-09504-4}

\bibitem{Seino1992}
Seino, M.
The free energy of the random Ising model on the Bethe lattice.
\emph{Physica A: Statistical Mechanics and its Applications}, \textbf{181}(3--4), 233--242 (1992).
\url{https://doi.org/10.1016/0378-4371(92)90087-7}

\bibitem{Petersen-2020}
Petersen, K., and Salama, I.
Entropy on regular trees.
\emph{Discrete \& Continuous Dynamical Systems}, \textbf{40}(7), 4453 (2020).
\url{https://doi.org/10.3934/dcds.2020186}

\bibitem{Kuelske2004}
K\"ulske, C., Le Ny, A., and Redig, F.
Relative entropy and variational properties of generalized Gibbsian measures.
\emph{Ann. Probab.}, \textbf{32}(2), 1691--1726 (2004).
\url{https://doi.org/10.1214/009117904000000342}
\bibitem{Feutrill2021}
Feutrill, A., and Roughan, M.
A review of Shannon and differential entropy rate estimation.
\emph{Entropy}, \textbf{23}, 1046 (2021).
\url{https://doi.org/10.3390/e23081046}

\bibitem{Brandani2013}
Brandani, G.~B., Schor, M., MacPhee, C.~E., Grubm\"uller, H., and Zachariae, U.
Quantifying disorder through conditional entropy: An application to fluid mixing.
\emph{PLoS ONE}, \textbf{8}(6), e65617 (2013).
\url{https://doi.org/10.1371/journal.pone.0065617}

\bibitem{Akin2025IJMPB}
Ak\i n, H.  
Thermodynamic analysis of the $q$-state Akin model on a Cayley tree.
\emph{International Journal of Modern Physics B}, \textbf{39} (32),  2550285 (2025).
\url{https://doi.org/10.1142/S0217979225502856}

\bibitem{Rahmatullaev2025FreeEnergyEntropy}
Rahmatullaev, M.M., and Burxonova, Z.A.
Free energy and entropy for the constructive Gibbs measures of the Ising model on the Cayley tree of order three,
\emph{Nanosystems: Physics, Chemistry, Mathematics} 16(3) (2025) 261--273. \url{https://doi.org/10.17586/2220-8054-2025-16-3-261-273}

\bibitem{Akin2022}
Ak\i n, H.
Calculation of the free energy of the Ising model on a Cayley tree via the self-similarity method.
\emph{Axioms}, \textbf{11}(12), 703 (2022).
\url{https://doi.org/10.3390/axioms11120703}

\bibitem{Akin2022CJP}
Ak\i n, H.
A novel computational method of the free energy for an Ising model on Cayley tree of order three.
\emph{Chinese Journal of Physics}, \textbf{77}, 2276--2287 (2022).
\url{https://doi.org/10.1016/j.cjph.2022.01.016}


\bibitem{Akin23CJP}
Ak\i n, H.
Quantitative behavior of (1,1/2)-MSIM on a Cayley tree.
\emph{Chinese Journal of Physics}, \textbf{83}, 501--514 (2023).
\url{https://doi.org/10.1016/j.cjph.2023.04.014}

\bibitem{Shiryaev-1980}
Shiryaev, A.~N.
\emph{Probability}.
Springer, Moscow, 1980.
\url{https://doi.org/10.1007/978-0-387-72206-1}

\bibitem{Mukhamedov-2022}
Mukhamedov, F.
Extremality of disordered phase of $\lambda$-model on Cayley trees.
\emph{Algorithms}, \textbf{15}, 18 (2022).
\url{https://doi.org/10.3390/a15010018}

\bibitem{KestenStigum1966}
Kesten, H., and Stigum, B.~P.
Additional limit theorems for indecomposable multidimensional Galton--Watson processes.
\emph{Ann. Math. Statist.}, \textbf{37}(6), 1463--1481 (1966).
\url{https://doi.org/10.1214/aoms/1177699139}

\bibitem{Bleher1990}
Bleher, P.~M.
Extremity of the disordered phase in the Ising model on the Bethe lattice.
\emph{Communications in Mathematical Physics}, \textbf{128}(2), 411--419 (1990).
\url{https://doi.org/10.1007/BF02108787}








































\bibitem{Evans2000} Evans, W., Kenyon, C. and Peres,  Y. Broadcasting on trees and the Ising model,
The Annals of Applied Probability \textbf{10} (2000), no.~2, 410--433. \url{https://doi.org/10.1214/aoap/1019487349}

\bibitem{Mossel2004PT-Phylogeny-entropy}
Mossel, E.  Phase transitions in phylogeny,
\emph{Transactions of the American Mathematical Society}
\textbf{356}(6) (2004), 2379--2404.
\url{https://doi.org/10.1090/S0002-9947-03-03382-8}.


\bibitem{Mathematica}
Wolfram Research, Inc.
\emph{Mathematica}, Version 13.3, Champaign, IL (2023).

\end{thebibliography}
\end{document}